\newcommand{\pd}{p}
\newtheorem{theorem}{Theorem}[section]
\newtheorem{definition}{Definition}[section] 
\newtheorem{assumption}{Assumption}[section]
\newtheorem{remark}{Remark}[section]
\newtheorem{proposition}[theorem]{Proposition}
\newtheorem{corollary}[theorem]{Corollary}
\newtheorem{lemma}[theorem]{Lemma}
\def\BibTeX{{\rm B\kern-.05em{\sc i\kern-.025em b}\kern-.08em
    T\kern-.1667em\lower.7ex\hbox{E}\kern-.125emX}}
\tikzset{
    node_style/.style={
        draw, fill=green!50, text=black, regular polygon, regular polygon sides=8,
        minimum size=0.4cm, font=\scriptsize\bfseries
    },
    edge_style/.style={draw, ->, thick, purple},
    edge_style_1/.style={draw, <->, thick, purple},
}
\title{Distributed Leader-Follower Consensus for Uncertain Multiagent Systems with Time-Triggered Switching of the Communication Network \thanks{Paper accepted and will be presented at the 2025 Modeling, Estimation, and Control Conference (MECC 2025), Pittsburgh, PA, Oct. 5–8. Paper No. MECC2025-60.}}
\author{Armel Koulong \thanks{Corresponding author.} 
    \affiliation{
	 Member of ASME\\
     Multi-Modal Multi-Agent Control Lab\\
	Department of Mechanical Engineering\\
	University of Alabama\\
	Tuscaloosa, Alabama 35401\\
    Email: akoulongzoyem@crimson.ua.edu
    }	
}
\author{Ali Pakniyat 
    \affiliation{ Member of ASME\\
	Multi-Modal Multi-Agent Control Lab\\
	Department of Mechanical Engineering\\
	University of Alabama\\
	Tuscaloosa, Alabama 35401\\
    Email: apakniyat@ua.edu
    }
}
\begin{document}

\maketitle    

\begin{abstract}
{\it 
A distributed adaptive control strategy is developed for heterogeneous multiagent systems in nonlinear Brunovsky form with \({\pd}\)-dimensional $n^{\text{th}}$-order dynamics, operating under time-triggered switching communication topologies. The approach uses repulsive potential functions to ensure agent-agent and obstacle safety, while neural network estimators compensate for system uncertainties and disturbances. A high-order control barrier function framework is then employed to certify the positive invariance of the safe sets and the boundedness of the proposed control inputs. The resulting distributed control and adaptive laws, together with dwell-time requirements for topology transitions, achieve leader-following consensus. This integrated design provides synchronized formation and robust disturbance rejection in evolving network configurations, and its effectiveness is demonstrated through numerical simulations.

}
\end{abstract}

\begin{nomenclature}
\entry{$a_{ij}^\sigma$}{Adjacency weight from agent $j$ to $i$ in mode $\sigma$.}
\entry{$A^\sigma$}{Weighted adjacency matrix of $\mathcal G^\sigma$.}
\entry{$D^\sigma$}{In-degree matrix $\mathrm{diag}\!\big(\sum_j a_{ij}^\sigma\big)$.}
\entry{$L^\sigma$}{Graph Laplacian of $\mathcal G^\sigma$.}
\entry{$\mathbb{R}$}{Set of real numbers.}
\entry{$\mathbb{R}^{\pd}$}{Set of real vectors of dimension $\pd$.}
\entry{$\lambda_{ij}$}{Synchronization-error weight between agents $i$ and $j$.}
\entry{$\psi_{ij}^1$}{Minimum separation (collision-avoidance threshold) between agents $i$ and $j$.}
\entry{$\psi_{i0}^1$}{Minimum separation (collision-avoidance threshold) between agent $i$ and the leader.}
\entry{$R$}{Outer safety radius around an obstacle (threshold distance).}
\entry{$\mathfrak{L}$}{Inner exclusion radius around an obstacle (threshold distance).}
\entry{$\varpi$}{Inter-agent/leader repulsion strength.}
\entry{$\kappa_i$}{Scalar tuning gain in NN update laws.}
\entry{$\kappa_0$}{Scalar tuning gain in NN update laws.}
\entry{$\kappa_{iw}$}{Scalar tuning gain in NN update laws.}
\entry{$\Gamma^0$}{Gain matrix scaling obstacle/collision-avoidance inputs.}
\entry{$\Gamma^1$}{Gain matrix scaling obstacle/collision-avoidance inputs.}
\entry{$\Gamma^2$}{Gain matrix scaling obstacle/collision-avoidance inputs.}
\entry{$\mathrm{tr}\{\cdot\}$}{Trace of a matrix.}
\entry{$t \in \mathbb{R}_{\ge 0}$}{Continuous time.}
\entry{$|\cdot|$}{Absolute value.}
\entry{$\|\cdot\|$}{Euclidean (vector 2-)norm.}
\entry{$\|\cdot\|_G$}{Frobenius norm for matrices.}
\entry{$P^{-1}$}{Inverse of matrix $P$.}
\entry{$t_s$}{Switching instant.}
\entry{$t_{s-}$}{Time immediately before the switching instant.}
\entry{$\alpha(\cdot)$}{Singular values.}
\entry{$\bar{\alpha}(\cdot)$}{Maximal singular values.}
\entry{$\underline{\alpha}(\cdot)$}{Minimal singular values.}
\entry{$m_{ij}$}{Repulsive potential for agent–agent interaction.}
\entry{$m_{i0}$}{Repulsive potential for agent–leader interaction.}
\entry{$m_{i{\mathfrak{c}}}$}{Repulsive potential for agent–obstacle interaction.}
\entry{$\mathcal{G}^\sigma=(\mathcal{V},\mathcal{E}^\sigma)$}{Communication graph under mode $\sigma$.}
\entry{$\bar{\mathcal G}^\sigma$}{Augmented graph including leader node $0$.}
\entry{$\hat\theta_i$}{Estimated neural weights for agent $i$.}
\entry{$\hat\theta_{iw}$}{Estimated neural weights for the disturbance on agent $i$.}
\entry{$\hat\theta_0$}{Estimated neural weights for the leader.}
\entry{$\delta_{ij}$}{Positive lower bound on the initial separation between agents $i$ and $j$.}
\entry{$\delta_{i0}$}{Positive lower bound on the initial separation between agent $i$ and the leader.}
\entry{$X_{\mathrm{init}}$}{Upper bound on the norm of any agent’s initial state.}
\entry{$f_{\mathrm{init}}(\cdot)$}{Continuous function bounding each $f_i(x)$ on a compact set.}
\entry{$f_0(x_0)$}{Continuous function bounding the leader’s dynamics $f_0(x_0)$.}
\entry{$w_{\mathrm{init}}$}{Upper bound on the norm of the disturbance.}
\entry{$\mathbb{R}^{(N \cdot {\pd}) \times (N \cdot {\pd})}$}{Set of real matrices with $(N \cdot {\pd})$ rows and $(N \cdot {\pd})$ columns.}
\entry{$G_i \in \mathbb{R}^{q_i \times q_i}$}{Positive definite adaptation-gain matrix for the NN of agent-dynamics.}
\entry{$G_{i0} \in \mathbb{R}^{q_0 \times q_0}$}{Positive definite adaptation-gain matrix for the NN of leader-dynamics.}
\entry{$G_{iw} \in \mathbb{R}^{q_{iw} \times q_{iw}}$}{Positive definite adaptation-gain matrix for the NN of disturbance model.}
\end{nomenclature}

\section{Introduction}
The cooperative control of multi-agent systems (MAS) is essential for applications such as unmanned aerial vehicle (UAV) coordination, autonomous vehicle platooning, and smart grids. In these systems, agents interact only with a limited number of other agents, creating a graph-theoretic structure that complicates achieving global consensus \citep{lewis2013,1431045}. This challenge is further amplified in heterogeneous uncertain MAS, where agents with different forms of high-order non-linear dynamics are subject to external disturbances \citep{Khoo2009}. Moreover, adapting to unknown parameters in real-time further complicates the consensus process, as agents must continuously update their internal models based on limited interactions.

Despite advancements in the field, significant gaps remain in addressing adaptive control for leader-following consensus with obstacle and collision avoidance in nonlinear systems. Foundational works have addressed these challenges in isolation, but no single framework provides an integrated solution. For instance, adaptive tracking control methods \citep{lewis2013,Hong2006} and consensus approaches for switching communication topologies \citep{Qin2011} often neglect physical constraints like collision avoidance. Furthermore, while the adaptive tracking method in \citep{Hong2006} effectively handles switching topologies, its formulation is limited to linear agent dynamics and does not extend to heterogeneous nonlinear MAS.  Conversely, research that does consider safety—such as work on local interaction rules \citep{Olfati2006}, optimal control for consensus \citep{Wang2011}, and geometric formation conditions \citep{Lin2005}—often overlooks the complexities of nonlinear dynamics, adaptation, and network switching. Stability analyses for time-dependent links \citep{Moreau2005} and switching topologies \citep{Shi2009} highlight network challenges but only partially address physical constraints and heterogeneous dynamics, while ignoring the need for real-time nonlinear estimation.

Recent research in the past few years has significantly sharpened the landscape for adaptive multi-agent systems under switching communication topologies. Some studies explore consensus for unknown nonlinear MAS with communication noise under Markov switching, using Radial Basis Function networks and stochastic stability conditions~\citep{Guo2024Mathematics}; these efforts focus on state consensus without addressing formation, safety, or average dwell time (ADT) considerations. Other work investigates positive consensus for linear MAS with ADT switching, establishing infimum dwell times and positivity-preserving conditions for consensus but without encompassing nonlinearity, heterogeneity, adaptations to uncertainty, or safety aspects~\citep{Cao2021JFI}.

Event-triggered designs have also been proposed; some researchers have introduced decentralized protocols with dynamic event triggering and guaranteed minimum inter-event times to improve robustness and communication efficiency~\citep{Wu2022SIAM}, and others have developed neural adaptive event-triggered consensus methods for unknown nonlinear MAS on switched or Markovian networks~\citep{Luo2023Math}. However, these references do not study safety guarantees or explicit dwell time constraints, as their approaches prioritize consensus and communication.

In the context of formation control under switching topologies, research has often been tailored to address the requirements of specific applications. For instance, distributed adaptive robust schemes have been developed and experimentally validated for quadrotor UAVs, demonstrating effectiveness in outdoor environments~\citep{Zhao2023Aerospace}. Similarly, other studies have targeted formation control for fractional-order multi-agent systems with actuator faults~\citep{Li2024FractalFract}, again focusing on the specific characteristics pertinent to those systems. Despite these advances in application-oriented domains, a comprehensive and unified framework that integrates heterogeneous higher-order agents, obstacle and collision avoidance via potential functions, distributed neural adaptation, and provable minimum average dwell time guarantees for leader-follower formation tracking remains unaddressed in the literature.

In past work of the authors \citep{koulong2024}, we introduced an adaptive control strategy for multiagent systems with fixed communication topology, focusing specifically on agents operating in a one-dimensional physical space. Due to the formulation adopted therein, the approach was limited to modeling collision avoidance scenarios restricted to one-dimensional interactions. 
In this paper, we extend the framework presented in \citep{lewis2013, koulong2024} for heterogeneous multiagent systems in the general Brunovsky form with \mbox{\({\pd}\)-dimensional} nonlinear $n^{\text{th}}$-order dynamics operating under time-triggered switching communication topologies. A key distinguishing feature of this paper is the presentation of stability and convergence proofs that clearly establish conditions for global asymptotic stability, synchronization, and performance under topology switching.

The proposed methodology effectively incorporates both collision and obstacle avoidance under time-triggered switching communication topologies. Specifically, we integrate a distributed adaptive control architecture with a leader-follower formation structure, enabling heterogeneous multiagent systems to achieve consensus and coordination despite evolving network configurations. 
To handle uncertainties arising from unknown agent dynamics and external disturbances, we employ neural network-based estimation techniques coupled with real-time adaptive tuning laws which continuously update the control parameters to ensure reliable performance under significant model uncertainties. Additionally, we incorporate potential functions to guarantee collision-free trajectories and obstacle avoidance while maintaining cohesive formations.
In particular, we determine minimum average dwell-time conditions based on the neural network tuning parameters and the control laws, which together ensure leader-follower synchronization, collision/obstacle avoidance, and global asymptotic stability under switching communication topologies.

Building upon foundational work in single-agent switched systems~\citep{hespanha2007networked, Liberzon2003}, adaptive tracking~\citep{lewis2013}, and various multi-agent consensus formulations for agents with linear dynamics—including consensus with obstacle and collision avoidance using local information under fixed topologies~\citep{Wang2011}, as well as leader–follower consensus under variable topologies~\citep{Hong2006}—this paper presents a unified control framework for heterogeneous multi-agent systems with high-order Brunovsky nonlinear dynamics subject to switching communication topologies.  To the best of our knowledge, no prior work provides a unified, decentralized design for such nonlinear systems that combines distributed linear-in-parameters (LIP)–neural network (NN) adaptation, time-triggered switching communication topologies, and potential-based safety assurance, together with a composite Lyapunov proof yielding explicit minimum average-dwell-time bounds for leader–follower formation tracking in heterogeneous \mbox{$p$-dimensional}, $n$-th-order Brunovsky systems. 

The main contributions of this paper are:
\begin{enumerate}
    \item A unified, decentralized control architecture for heterogeneous, nonlinear multi-agent systems (MAS) that simultaneously addresses neural adaptations to uncertain dynamics, repulsion-based safety assurance for collision and obstacle avoidance, and leader–follower formation consensus under switching communication topologies.
    \item A composite Lyapunov-based analysis that establishes explicit, topology- and parameter-dependent minimum average-dwell-time conditions, guaranteeing the simultaneous achievement of all control objectives (formation, safety, and adaptation).
    \item Mathematical proofs of global asymptotic stability and safety, which ensure bounded control inputs and predictable performance with guaranteed error bounds that are explicitly linked to tunable design parameters.
    \item A critical synthesis of foundational literature that overcomes the limitations of prior work—which addressed challenges like topology switching, safety assurance and agent nonlinearity only in isolation—to deliver a unified, provably correct, and decentralized framework that simultaneously guarantees safety, stability, and adaptation.
\end{enumerate}
The remainder of the paper is structured as follows:
Section~\ref{ProblemFormulation} defines MAS dynamics, switching communication topology and the leader-following framework. Section~\ref{Methodology} presents an adaptive control protocol with neural network-based learning for managing high dimensional non-linear dynamics and
disturbances in a time-triggered switching. Section~\ref{MainResult} proves control law stability and convergence. Section~\ref{NUMERICALEXAMPLE} demonstrates the strategy via a numerical leader-follower example with switching communication
topology. Section~\ref{CONCLUSION} summarizes contributions and potential future work.

\section{Problem Formulation} \label{ProblemFormulation}

We denote by $\mathcal{N} = \{1,2,\ldots,N\}$ the set of $N$ follower agents.  Each follower’s full state is
\( x_i = [ x_i^1,  x_i^2, \cdots,  x_i^n ]^\top \in \mathbb{R}^{ n {\cdot {\pd} }},\) with \(x_i^k\in\mathbb{R}^p\), \(k=1,2,\ldots,n,\) where $p$ is the dimension of the physical space and $x_i^k$ corresponds to the $(k-1)^{\text{th}}$ derivative of the agent’s position (e.g., for two dimensional motions, $p=2$, thus $x_{i}^{k} = \bigl[x_{i}^{k,1},\,x_{i}^{k,2}\bigr]^\top\in\mathbb R^2$). The leader’s state $x_0\in\mathbb{R}^{n \cdot p}$ is partitioned similarly. 

The dynamics of the \(i^\text{th}\) follower agent is described using the non-linear Brunovsky form as:
\begin{equation}\label{planta}
\begin{aligned}
\dot{x}_{i}^{1} &= x_{i}^{2}\\
\dot{x}_{i}^{2} &= x_{i}^{3} \\
&~\,\vdots \\
\dot{x}_{i}^{n} &= f_{i}(x_{i}) + u_{i} + w_{i}(t)
\end{aligned}
\end{equation}
where \(u_{i} \in \mathbb{R}^{\pd} \) is the control input of agent \(i\), and \(w_{i}(t) \in \mathbb{R}^{\pd} \) denotes a bounded unknown time-varying disturbance affecting agent \(i\).

We assume that the agent-specific functions \( f_i(x_i) : \mathbb{R}^{n \cdot { {\pd} }} \rightarrow \mathbb{R}^{\pd} \) are unknown to the agents, but these functions are considered to be locally Lipschitz in \( x_i \). For convenience in the Lyapunov stability analysis, we further adopt $f_i(0) = 0$ for all $i \in \mathcal{N}$; this assumption can be relaxed with appropriate changes in the analytical arguments (e.g., via invariance principles).

While the governing dynamics of Eqn.~\eqref{planta} of the agents are decoupled, their interactions become coupled through collision and obstacle avoidance, as discussed later in this~section.

In discussions regarding the collective behavior of following agents, we utilize the Kronecker product \citep{Wang2011} to collectively represent the dynamics of Eqn.~\eqref{planta} of all agents in the global form:
\begin{equation}\label{plantag}
\begin{aligned}
\dot{x}^{1} &= x^{2}\\[-5pt]
&~\,\vdots \\
\dot{x}^{n} &= f(x) + u  + w 
\end{aligned}
\end{equation}
where \( x^{k} = [x_{1}^{k}; x_{2}^{k}; \ldots; x_{N}^{k}] \in \mathbb{R}^{N \cdot {\pd}} \) denotes the concatenated \( k \)-th state vectors of all follower agents, \( u = [u_{1}; u_{2}; \ldots; u_{N}] \in \mathbb{R}^{N \cdot {\pd}} \) denotes their control inputs, \( w = [w_{1}; w_{2}; \ldots; w_{N}] \in \mathbb{R}^{N \cdot {\pd}} \) denotes unknown bounded disturbances, the unknown function \( f(x) = [f_{1}(x_{1}); f_{2}(x_{2}); \ldots; f_{N}(x_{N})] \in \mathbb{R}^{N \cdot {\pd}} \) collectively represents the non-linear dynamics of all \( N \) follower agents.

In the leader-follower scenario considered in this paper, the leader agent, denoted by subscript $0$, generates a reference trajectory for follower agents, but this reference trajectory is a priori unknown to all follower agents, and the only information about the reference trajectory available to follower agents is the leader's state vector \(x_0 \equiv x_0(t) \in \mathbb{R}^{n \cdot {\pd}}\) at the current time $t \in [t_0,\infty)$. The leader also determines a time-triggered switching rule for the communication topology as discussed later.

The time-varying dynamics of the leader agent is:
\begin{equation}\label{plantref}
\begin{aligned}
\dot{x}_{0}^{1}(t) &= x_{0}^{2}(t) \\[-5pt]
& ~\,\vdots \\
\dot{x}_{0}^{n}(t) &= f_{0}(x_{0},t)
\end{aligned}
\end{equation}
where \( x_{0}^{k} \in \mathbb{R}^{\pd} \) is the \(k\)-th partition of the leader agent's state, and \(x_0 = [{x}_{0}^1,{x}_{0}^2,\ldots,{x}_{0}^n] \in \mathbb{R}^{n \cdot {\pd}} \) is its state vector. 

The unknown function \( f_0(x_{0}, t) : [0, \infty) \times \mathbb{R}^{n \cdot {\pd}} \rightarrow \mathbb{R}^{\pd} \) in the leader dynamics is considered to be 
piecewise continous in~\(t\) and locally Lipschitz in \(x_0\), with \( f_i(0,t) = 0 \) for all $t \ge 0$, and all \( x_0 \in \mathbb{R}^{n \cdot {\pd}} \). The governing dynamics Eqn~\eqref{plantref} of the leader agent is private to the leader agent, and its current state value $x_0$ is shared only with a subset of the follower agents, serving as a reference for the formation control objectives discussed further below.
The state $x_0$ is explicitly said to serve as a reference for the followers’ formation control objectives. This suggests the formation is not static, but rather intended to track or align with the moving leader trajectory. The condition $f_0(0, t) = 0$ ensures that if the leader starts at the origin, it could stay there. But there is no indication that it actually does — the general form of the dynamics and the language used implies that motion is expected. Given that the leader’s dynamics are general and time-varying, it serves as a formation reference.  This is a dynamic tracking formation problem. The followers are expected to maintain a formation  with respect to a moving leader state $x_0$, not converge to a fixed spatial pattern at the origin.

In the consensus problem, it is desired that 
\begin{align} 
    \lim_{t \to \infty} [(x^{1}_{i}-\psi_{i}^1) - (x^{1}_{0}- \psi_{0}^1)] &= 0 \label{lam} \\
    \lim_{t \to \infty} [(x^{1}_{i}-\psi_{i}^1) - (x^{1}_{j}- \psi_{j}^1)] &= 0 \label{la}
\end{align}
for each $i \in \mathcal{N}$ and for all $j  \in \mathcal{N}$, $j \ne i$, where \(\psi_{i}^1\), \(\psi_{j}^1\) and \(\psi_{0}^1 \in \mathbb{R}^{\pd} \) denote desired position offsets for agents \(i\), \(j\), and the leader agent \( 0 \), respectively. Following standard consensus/formation definitions (e.g., \citep{WeiRen, Hong2006}), we adopt the desired asymptotic relations. Each agent aims to achieve a state offset from the leader or other agents by \( \psi_i^k \), ensuring coordinated movement while maintaining formation. However, such a strong level of connectivity which requires communication among all agents is often not feasible in networked control systems due to practical constraints such as limited bandwidth, packet loss, and delays (e.g., see \citep{hespanha2007networked}). 

The \( k^\text{th} \) order tracking error is defined as \(\delta_{i}^k = (x_{i}^k - \psi_{i}^k) - (x_{0}^k - \psi_{0}^k) \in \mathbb{R}^{{\pd}}\), and globally as 
\begin{equation}\label{delta_errors}
\delta^{k} = [\delta_{1}^k, \delta_{2}^k, \ldots, \delta_{N}^k]^{\top} \in \mathbb{R}^{N{\cdot \pd}}.
\end{equation}

\begin{definition}
The tracking error \(\delta^{k}\) is \textit{cooperatively uniformly ultimately bounded} (CUUB) \citep{lewis2013} if there exists a compact set \(\varsigma^k \subset \mathbb{R}^{N \cdot {\pd}}\) with \(\{0\} \subset \varsigma^k\) such that for any initial condition \(\delta^k(t_0) \in \varsigma^k\), there exists \(B^k \in \mathbb{R}_{>0}\) and \(T_k \in \mathbb{R}_{>0}\) which yields \(||\delta^k(t)|| \le B^k\) for all \(t \ge t_0 + T_k\). \end{definition}

In this paper, we seek a distributed consensus policy where, in addition to its own state \( x_i \), each agent determines its input \( u_i \) based solely on the states of a limited number of other agents, which may or may not include the leader agent. Furthermore, each agent must also account for obstacle and collision avoidance.

To shape motions for collision/obstacle avoidance we employ repulsive potential terms in the spirit of \citep{Wang2011}.

\paragraph{Agent–agent repulsion.}
For $x_i^1,x_j^1\in\mathbb R^{\pd}$ and $\psi_{ij}^1>0$,
\begin{equation}
m_{ij} \;:=\;
\begin{cases}
0, & \|x_i^1-x_j^1\|>\psi_{ij}^1,\\[3pt]
\dfrac{\varpi}{\|x_i^1-x_j^1\|}, & \|x_i^1-x_j^1\|\le\psi_{ij}^1,
\end{cases}
\label{col1}
\end{equation}

\paragraph{Agent–leader repulsion.}
For $x_0^1\in\mathbb R^{\pd}$ and $\psi_{i0}^1>0$,
\begin{equation}
m_{i0} \;:=\;
\begin{cases}
0, & \|x_i^1-x_0^1\|>\psi_{i0}^1,\\[3pt]
\dfrac{\varpi}{\|x_i^1-x_0^1\|}, & \|x_i^1-x_0^1\|\le\psi_{i0}^1,
\end{cases}
\label{col2}
\end{equation}

\paragraph{Agent–obstacle repulsion.}
Let the obstacle index set be $\mathcal B=\{1,\dots,\Xi\}$, with center $O_{\mathfrak{c}}\in\mathbb R^{\pd}$, detection radius $R>0$, and inner radius $\mathfrak L\in(0,R)$ for ${\mathfrak{c}}\in\mathcal B$. 
\begin{equation} \label{obs1}
m_{i{\mathfrak{c}}} :=
    \begin{cases}
        \quad 0, &  \|x_i^1-O_{\mathfrak{c}}\|>R,  \\
        \Bigg[\frac{R^2 - ||x_i^1 - O_{\mathfrak{c}}||^2}{\left|\left|x_i^1 - O_{\mathfrak{c}}\right|\right|^2 - \mathfrak{L}^2}\Bigg]^2,   & \mathfrak L<\|x_i^1-O_{\mathfrak{c}}\|\le R.
    \end{cases}
\end{equation}
The formulation for obstacle avoidance between the leader agent and obstacle is similar, with \( x_i^1 \) replaced by \( x_0^1 \).
These scalars ($m_{ij}, m_{i0}, m_{i{\mathfrak{c}}}$) enter the control law (repulsion channels), while safety is guaranteed separately via High-Order Control Barrier Functions (HOCBFs) (forward invariance of the safe sets) defined on
$h_{ij}=\|x_i^1-x_j^1\|-\psi_{ij}^1$, $h_{i0}=\|x_i^1-x_0^1\|-\psi_{i0}^1$, and $h_{i{\mathfrak{c}}}=\|x_i^1-O_{\mathfrak{c}}\|- R$ (see Appendix.~\ref{app:HOCBF}). In these expressions, \( \varpi \in \mathbb{R}_{>0} \) is a positive scalar adjusting the repulsive force strength, \( \psi_{ij}^1, \psi_{i0}^1 \in \mathbb{R}_{>0} \) are the desired scalar separation between agents \(i\) and \(j\), and between agent \(i\) and the leader, respectively.

Let \(\mathcal{G}^{\bar{c}} = \{\mathcal{G}^1, \dots, \mathcal{G}^{\bar{C}}\}\) \(\{{\bar{c}}=1,\dots,{\bar{C}}\}\) be a finite set of possible communication topologies, where each \(\mathcal{G}^{\bar{c}} = (\mathcal{V}, \mathcal{E}^{\bar{c}}, A^{\bar{c}})\) is defined by; a set of agents \(\mathcal{V} = \{v_1,\dots,v_N\}\), an edge set \(\mathcal{E}^{\bar{c}} \subseteq \mathcal{V}\times \mathcal{V}\), and an adjacency matrix \(A^{\bar{c}}\).  A piecewise constant switching signal \(\sigma : \mathbb{R}_{\ge0}\to\{1,2,\dots,{\bar{C}}\}\) governs which topology is active at time \(t\). The signal \(\sigma\) is constant on each interval \([t_s,\,t_{s+1})\) and changes its value at \(\{t_s\}\). Hence, at any \(t\in [t_s,\,t_{s+1})\), the active topology is \(\mathcal{G}^{\sigma(t)} = (\mathcal{V},\,\mathcal{E}^{\sigma(t)},\,A^{\sigma(t)})\).
A directed edge \((v_j, v_i) \in \mathcal{E}^{\sigma(t)}\) implies that agent \(i\) can receive information from agent \(j\). The weighted adjacency matrix \(A^{\sigma(t)} = [a_{ij}^{\sigma(t)}] \in \mathbb{R}^{N \times N}\) models interaction strength under the \({\sigma(t)}\)-th topology, where:
\[
  a_{ij}^{\sigma(t)} :=
  \begin{cases}
    {z_w}_{ij}^{\sigma(t)}  & \text{if } (v_j, v_i) \in \mathcal{E}^{\sigma(t)}, \\
    0 & \text{otherwise}.
  \end{cases}
\]
Here, \( {z_w}_{ij}^{\sigma(t)} > 0 \) is the pre-defined edge weight from agent \(v_j\) to agent \(v_i\) under \(\mathcal{G}^{\sigma(t)}\).
The in-degree matrix for the \({\sigma(t)}\)-th topology is \(D^{\sigma(t)} = \text{diag}\{d_i^{\sigma(t)}\}\) with \(d_i^{\sigma(t)} = \sum_{j=1}^{N}a_{ij}^{\sigma(t)}\), and the corresponding Laplacian matrix is \(L^{\sigma(t)} = D^{\sigma(t)} - A^{\sigma(t)}\). To incorporate a leader agent \(v_0\), we define the augmented graph \(\bar{\mathcal{G}}^{\sigma(t)} = (\bar{\mathcal{V}}, \bar{\mathcal{E}}^{\sigma(t)})\), where \(\bar{\mathcal{V}} = \{v_0, v_1, \dots, v_N\}\). The leader’s influence is modeled by the diagonal matrix \(B^{\sigma(t)} = \text{diag}\{b_{i0}^{{\sigma(t)}}\}\) with \( {z_w}_{i0}^{\sigma(t)} > 0 \) the pre-defined edge weight from agent \(v_0\) to agent \(v_i\), where:
\[
  b_{i0}^{{\sigma(t)}} :=
  \begin{cases}
    {z_w}_{i0}^{\sigma(t)}  & \text{if } (v_0, v_i) \in \bar{\mathcal{E}}^{\sigma(t)}, \\
    0 & \text{otherwise}.
  \end{cases}
\]
In the time-triggered switching communication topology framework, these communication links change only at a sequence of predetermined trigger instants
\(t_s\), with \(t_0=0\) and \(t_{s+1}-t_s>0\).  Between triggers the graph remains constant.
If the inter-switch intervals satisfy a minimal dwell time 
\(\tau_d>0\), i.e.\ \(t_{s+1}-t_s\ge\tau_d\), then \(\sigma\) is said to obey an average-dwell-time constraint. Thus, at any given time, the effective communication and coordination structure is defined by \(\mathcal{G}^{\sigma(t)}\) (correspondingly \(A^{\sigma(t)}, D^{\sigma(t)}, L^{\sigma(t)}, B^{\sigma(t)}\)). 

In order to ensure that no clusters of agents are isolated from the leader, we impose the following assumption.
\begin{assumption} \label{assp1}
Under each communication topology $\mathcal{G}^\sigma$, the augmented graph \( \bar{\mathcal{G}}^{\sigma} \) contains a spanning tree with the leader as the root (the leader is (directly or indirectly) connected to every agent). In other words, whenever $(v_i, v_0) \notin \bar{\mathcal{E}}^{\sigma}$ under the
active topology, there exists a sequence of nonzero elements of \(A^{\sigma}\) of the form ${a}^{\sigma}_{i i_2}, {a}^{\sigma}_{i_2 i_3}, \cdots, {a}^{\sigma}_{i_{l-1} i^{\prime}}$, for some $(i^{\prime},0) \in \bar{\mathcal{E}}^{\sigma}$, with the sequence length $l$ being a finite integer (An agent might not communicate to the leader directly, but there is a finite hop chain of neighbors that relays leader info to it in the current graph).
\end{assumption}

In the instantaneous-connectivity setting of \citep{Tanner2007}, the communication graph is (undirected) connected at every instant, which enables arbitrarily fast switching (no dwell-time constraint) but imposes a strong per-instant topological requirement. In contrast, our framework requires leader-rooted connectivity at every instant—i.e., under Assumption~\ref{assp1} each active topology contains a spanning tree with the leader as root, so every follower is reachable (possibly through neighbors); this condition applies to both directed and undirected graphs. In addition, we enforce an average-dwell-time bound Eqn.~\eqref{numberswitches0} that limits the switching rate, ensuring sufficient time for information propagation and Lyapunov decrease. Thus, both approaches guarantee a per-instant connectivity property, but of different flavors and trade-offs: \citep{Tanner2007} uses all-time undirected connectivity with no timing limits, whereas we use leader-rooted connectivity plus an explicit switching-speed constraint—better capturing leader–follower networks subject to intermittent links or mobility-induced disconnections.

Due to the restrictive nature of the dwell time \(\tau_a\) \citep[Section 3.2.2]{Liberzon2003}, we employ average dwell time, \( \tau_a^* \) derived in Section~\ref{sec:tau_av}, that reflects the topology connectivity quality.
Let $N^{\sigma}(t_0, t)$ denote the number of discontinuities of a switching signal $\sigma$ on an interval $(t_0, t)$. We say that $\sigma$  has average dwell time \( \tau_a^* \) \citep{Hespanha} and \citep[Section 3.2.2]{Liberzon2003} if there exist two positive numbers $N_0$ and $\tau_a$ such that 
\begin{equation}\label{numberswitches0}
N^{\sigma}(t_0, t) \leq N_0 + \frac{t - t_0}{\tau_a}.
\end{equation}
Assumption \ref{assp1} guarantees that at every instant the leader’s state is reachable by every agent via a finite neighbor chain (leader-rooted spanning tree). The average-dwell-time condition Eqn.~\eqref{numberswitches0} then restricts how fast we switch among such leader-connected graphs, ensuring sufficient time in each mode for information propagation and Lyapunov decrease.

Throughout the paper, we assume that the following assumptions hold.
\begin{assumption} \label{assp}
The following conditions hold:
\begin{enumerate} 
    \item The initial states of all follower agents and the leader agent are bounded, i.e., \(||x_i (t_0)|| \le X_{\mathrm{init}}\) for all \(i \in \mathcal{N}\), and \(||x_0  (t_0)|| \le X_{0, {\mathrm{init}}}\), where \( x_i(t_0) \in \mathbb{R}^{n \cdot {\pd}}. \)

    \item There exist continuous functions \( f_{\mathrm{init}}(\cdot) \) and \( f_{0, {\mathrm{init}}}(\cdot) \) such that \( |f_i(x)| \leq |f_{{\mathrm{init}}}(x)| \) for all \(i \in \mathcal{N}\), and \( |f_0(x_0,t)| \leq |f_{0, {\mathrm{init}}}(x_0)| \) for all \(x\) within the compact sets \(\varsigma_f = \{x \mid \|x\| \leq X_{\mathrm{init}}\}\) and \(\varsigma_0 = \{x_0 \mid \|x_0\| \leq X_{0, {\mathrm{init}}}\}\). These bounds hold regardless of which topology is active at any given time.

    \item The unknown disturbances \(w_{i}(t)\) affecting each agent are bounded, i.e., \(||w(t)|| \le w_{\mathrm{init}}\) for all \(t \in [t_0,\infty)\), where \( w(t) = [ w_1(t); w_2(t); \dots; w_N(t) ] \in \mathbb{R}^{N \cdot {\pd}} \). This holds irrespective of the switching events and the currently active topology.

\end{enumerate}
\end{assumption}

To ensure that agents within a specified proximity can exchange information under the active topology, we impose the following assumption. Let $\Psi \in \mathbb{R}_{>0}$ denote a proximity threshold.

\begin{assumption}
Under any active topology, if $\| x_i^1 - x_j^1 \| \leq \Psi$, then ${z_w}_{ij}^{\sigma(t)} \neq 0$.
\end{assumption}

This assumption enforces that any pair of agents within $\Psi$-distance are connected by the communication graph. 

\section{Methodology} \label{Methodology}
\subsection{Neural Network Learning Problem}
To account for uncertainties in the dynamics under time-triggered switching communication topologies, we use a distributed two-layer linear-in-parameters (LIP) neural network \citep{yesildirak1995neural}, enabling each agent to locally update its model as the communication graph changes. This framework allows each agent to approximate its own non-linear dynamics and those of its neighbors while sharing their states as determined by the currently active topology.
Accordingly, we model the functions as:
\begin{align} 
\label{orige}
f_{i}(x_{i}) &= \theta_{i}^{\top}\phi_{i}(x_{i}) + \varepsilon_i, \\ 
\label{orige1}
f_{0}(x_{0},t) &= \theta_{0}^{\top}\phi_{0}(x_{0},t) + \varepsilon_{0}, \\
\label{orige2}
w_{i}(t) &= \theta_{iw}^{\top}\phi_{iw}(t) + \varepsilon_{iw}.
\end{align}
where \( f_{i}(x_{i}) \in \mathbb{R}^{{\pd}} \) is the non-linear function in agent \( i \)'s dynamics, \( \phi_{i}(x_{i}) \in \mathbb{R}^{q_i} \), \( \phi_{0}(x_{0},t)  \in \mathbb{R}^{q_0} \), and \( \phi_{iw}(t)   \in \mathbb{R}^{q_{iw}} 
\) are fixed basis functions, with the corresponding weight vectors \( \theta_{i} \in \mathbb{R}^{{\pd} \cdot q_i} \), \( \theta_{0} \in \mathbb{R}^{{\pd} \cdot q_0} \), and \( \theta_{iw} \in \mathbb{R}^{{\pd} \cdot q_{iw}}\), updated in real-time as new data is received from neighbors defined by the active topology. The leader's non-linear function is \( f_{0}(x_{0}, t) \in \mathbb{R}^{{\pd}} \), and \( w_{i}(t) \in \mathbb{R}^{{\pd}} \) represents the unknown disturbance affecting agent \( i \). The approximation errors are \( \varepsilon_i  \in \mathbb{R}^{{\pd}} \), \( \varepsilon_0  \in \mathbb{R}^{{\pd}}\), and \( \varepsilon_{iw}  \in \mathbb{R}^{{\pd}}\), \( q_i \) is the number of basis functions for agent \( i \), \( q_{0} \) is the number of basis functions for the leader agent, and \( q_{iw} \) is the number of basis functions for modeling disturbance.

The neural network approximations for \( f_{i}(x_{i}) \), \( f_{0}(x_{0},t) \), and \( w_{i}(t) \) are therefore
\(
\hat{f}_{i}(x_{i}) = \hat{\theta}_{i}^{\top}\phi_{i}(x_{i}), \quad \hat{f}_{0}(x_{0},t) = \hat{\theta}_{0}^{\top}\phi_{0}(x_{0},t), \quad \hat{w}_{i}(t) = \hat{\theta}_{iw}^{\top}\phi_{iw}(t)
\), where the NN weights \( \hat{\theta}_{i} \), \( \hat{\theta}_{0} \), and \(\hat{\theta}_{iw}\) are computed locally. 
For brevity of notation, the basis functions and errors are denoted by
\[
\theta = \text{diag}(\theta_1, \ldots, \theta_N) \in \mathbb{R}^{N \cdot {\pd} \times q_t},\] \[
\phi(x) = [\phi_1^{\top}(x_1), \ldots, \phi_N^{\top}(x_N)]^{\top} \in \mathbb{R}^{q_t},\] \[
\varepsilon = [\varepsilon_1, \ldots, \varepsilon_N]^{\top} \in \mathbb{R}^{N \cdot {\pd}}.\] Also, \( q_t \) is the total number of basis functions for all agents, \( q_{w} = \sum_{i=1}^{N} q_{iw} \), and \( q_t = \sum_{i=1}^{N} q_i \).

The global non-linearities are expressed as:
\begin{align}\label{fxes}
     f(x) &= {\theta}^{\top} \phi(x) + \varepsilon 
\\
\label{f0es}
     f_0(x,t) &= {\theta}_0^{\top} \phi_0(x,t) + \varepsilon_0
\\
\label{fwes}
     w(t) &= {\theta}_w^{\top} \phi_w(t) + \varepsilon_w 
\end{align}
with approximations:
\begin{align}\label{fxesh}
      \hat{f}(x) &= \hat{\theta}^{\top} \phi(x)  
\\
\label{f0esh}
      \hat{f}_0(x,t) &= \hat{\theta}_0^{\top} \phi_0(x,t) 
\\
\label{fwesh}
      \hat{w}(t) &= \hat{\theta}_w^{\top} \phi_w(t)
\end{align}
where $ f(x) \in \mathbb{R}^{N \cdot {\pd}} $ is the concatenated non-linear function for all agents, \( \theta_{w} = \operatorname{diag}(\theta_{1w}, \ldots, \theta_{Nw}) \in \mathbb{R}^{N \cdot {\pd} \times q_{w}} \), \( \phi_w(t) = [\phi_{1w}(t), \phi_{2w}(t), \ldots \phi_{Nw}(t)] \in \mathbb{R}^{q_{w}} \), \( \varepsilon_{w} = [\varepsilon_{1w}, \varepsilon_{2w}, \ldots, \varepsilon_{Nw}] \in \mathbb{R}^{N \cdot {\pd}} \). 
The NN weight errors are:
\begin{align} \label{eq:nn_1}
\tilde{\theta} &= \theta - \hat{\theta}, \\
\label{eq:nn_2}  
\tilde{\theta}_{0} &= \theta_{0} - \hat{\theta},_{0}\\
\label{eq:nn_3}  
\tilde{\theta}_{w} &= \theta_{w} - \hat{\theta}_{w}.
\end{align}

\begin{assumption} \label{assumption1}
The basis functions \( \phi_{i}(x_{i}) \), \( \phi_{0}(x_{0}, t) \), \( \phi_{iw}(t) \), the NN weights \( \theta_{i} \), \( \theta_{0} \), \( \theta_{iw} \), and the approximation errors \( \varepsilon \), \( \varepsilon_{0} \), \( \varepsilon_{w} \) are bounded by finite constants (not required to be known a priori)., independent of the currently active topology. 
\end{assumption}

Let \( \phi_{in} = \max_{x_i \in \varsigma} \|\phi_i(x_i)\| \), \( \phi_{0n} = \max_{x_0 \in \varsigma, t \geq 0} \|\phi_0(x_0, t)\| \), and \( \phi_{iwn} = \max_{t \geq 0} \|\phi_{iw}(t)\| \). Based on 
Assumption~\ref{assumption1},
there exist positive numbers \( \Phi_n \), \( \Theta_n \), and \( \varepsilon_n \); \( \Phi_{n0} \), \( \Theta_{n0} \), and \( \varepsilon_{n0} \); \( \Phi_{nw} \), \( \Theta_{nw} \), and \( \varepsilon_{nw} \), such that:
\(
\|\phi_i(x_i)\| \leq \Phi_n, \quad \|\phi_0(x_0, t)\| \leq \Phi_{n0}, \quad \|\phi_{iw}(t)\| \leq \Phi_{nw},
\)
\(
\|\theta_i\|_G \leq \Theta_n, \quad \|\theta_0\|_G \leq \Theta_{n0}, \quad \|\theta_{iw}\|_G \leq \Theta_{nw},
\)
\(
\|\varepsilon\| \leq \varepsilon_n, \quad \|\varepsilon_0\| \leq \varepsilon_{n0}, \quad \|\varepsilon_w\| \leq \varepsilon_{nw}.
\)

The use of a mixed dictionary of polynomial, trigonometric, and exponential basis functions is especially well-suited to adaptive control \citep{stevenbrunton2016}, since each class brings complementary approximation strengths—polynomials excel at capturing local algebraic trends, trigonometric functions model periodic behaviors, and exponentials describe transient phenomena. As demonstrated in the SINDy framework of \citep{stevenbrunton2016}, such hybrid libraries retain universal‐approximation power while keeping the adaptation problem tractable: their simple, closed‐form derivatives lead to well‐structured, real‐time optimization routines. Compared with large‐scale neural networks, these basis functions yield compact adaptation laws with lower computational overhead, making them ideal for real-time control of systems with smooth nonlinearities and structured, time-varying disturbances~\citep{stevenbrunton2016}.

\subsection{Local Synchronization Error}
With time-triggered switching communication topologies, at any given time \( t \), the active adjacency matrix \( A^{\sigma(t)} \), Laplacian \( L^{\sigma(t)} = D^{\sigma(t)} - A^{\sigma(t)} \), and leader interaction matrix \( B^{\sigma(t)} \) define the communication structure. Given the limited information available to the \( i^\text{th} \) agent under the currently active topology \(\mathcal{G}^{\sigma(t)}\), we define its \( k^\text{th} \) order weighted synchronization error as:
\begin{multline}{\label{errordyn}}
e_i^{k,\sigma(t)} = - \nu_1 \sum_{j=1}^{N} a_{ij}^{\sigma(t)} \bigg[(x_{i}^{k} - \psi_{i}^{k}) - (x_{j}^{k} - \psi_{j}^{k})\bigg] \\
- \nu_2 b_{i0}^{\sigma(t)} \bigg[(x_{i}^{k} - \psi_{i}^{k}) - (x_{0}^{k} - \psi_{0}^{k})\bigg],
\end{multline}
where \(\nu_1, \nu_2 > 0\) are scalar gains, \(a_{ij}^{\sigma(t)}\) are the entries of the active adjacency matrix \( A^{\sigma(t)} \), and \(b_{i0}^{\sigma(t)}\) are the entries of the leader interaction matrix \( B^{\sigma(t)} \). For notational convenience, we employ $e_i^{k,\sigma(t)}$ and $e_i^{k}$ interchangeably throughout the manuscript to represent the same mathematical entity. The scalar gains \(\nu_1, \nu_2 \) can vary between topologies, however, in this paper, we consider them constant across topologies.
Using the simplifying notations:
\(\bar{x}^k_i := x_{i}^{k} - \psi_{i}^{k}\),
\(\bar{x}^k_j :=  x_{j}^{k} - \psi_{j}^{k}\) and
\(\bar{x}^k_0 :=  x_{0}^{k} - \psi_{0}^{k}\),
we rewrite Eqn. \eqref{errordyn} as:
\begin{align}{\label{errordynu}}
e_i^{k,\sigma(t)} = - \nu_{1} \sum_{j=1}^{N} a_{ij}^{\sigma(t)} (\bar{x}^k_i - \bar{x}^k_j)
- \nu_{2} b_{i0}^{\sigma(t)} (\bar{x}^k_i - \bar{x}^k_0).
\end{align}
Here, the term \(- \nu_{1} \sum_{j=1}^{N} a_{ij}^{\sigma(t)} (\bar{x}^k_i - \bar{x}^k_j)\)  ensures consensus among agents by penalizing the difference between the state of agent \(i\) and its neighbors, while \(- \nu_{2} b_{i0}^{\sigma(t)} (\bar{x}^k_i - \bar{x}^k_0)\) couples the agent's state to the leader state by penalizing their difference. Defining,
\(
{e}^{k,\sigma(t)} = [{e}^{k,\sigma(t)}_{1}, {e}^{k,\sigma(t)}_{2}, {e}^{k,\sigma(t)}_{3}, \dots, {e}^{k,\sigma(t)}_{N}]^\top \in \mathbb{R}^{N \cdot {\pd}},
\)
\(
\bar{\underline{x}}_{0}^k = [\bar{x}_{0}^k, \dots, \bar{x}_{0}^k]^\top\in \mathbb{R}^{N \cdot {\pd}},
\)
\(
\bar{x}^{k} = [\bar{x}_{1}^{k},\dots, \bar{x}_{N}^{k}]^{\top}
\in \mathbb{R}^{N \cdot {\pd}}\),
we reformulate Eqn.~\eqref{errordynu} in global form as:
\begin{equation}
e^{k,\sigma(t)} = -(\nu_1 L^{\sigma(t)} + \nu_2 B^{\sigma(t)})(\bar{x}^k - \bar{\underline{x}}_{0}^k),
\end{equation}
highlighting that the synchronization error depends on the currently active topology.

\begin{lemma} \label{rem:lbnon}
Under~~Assumption~~\ref{assp1},~~the~~matrices {\(\nu_1 L^{\sigma(t)} + \nu_2 B^{\sigma(t)}\)} are nonsingular for all topologies \(\sigma(t)\).
\end{lemma}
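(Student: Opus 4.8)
The plan is to prove nonsingularity by a direct maximum-component argument, exploiting the zero-row-sum structure of the Laplacian together with the leader-rooted spanning-tree condition of Assumption~\ref{assp1}. Since the leader-coupling matrix $B^{\sigma(t)}=\mathrm{diag}\{b_{i0}^{\sigma(t)}\}$ is diagonal with nonnegative entries and $L^{\sigma(t)}=D^{\sigma(t)}-A^{\sigma(t)}$ satisfies $L^{\sigma(t)}\mathbf 1=0$, the matrix $M:=\nu_1 L^{\sigma(t)}+\nu_2 B^{\sigma(t)}$ has diagonal entries $M_{ii}=\nu_1 d_i^{\sigma(t)}+\nu_2 b_{i0}^{\sigma(t)}$ and off-diagonal entries $M_{ij}=-\nu_1 a_{ij}^{\sigma(t)}$, so that $\sum_{j\neq i}|M_{ij}|=\nu_1 d_i^{\sigma(t)}$. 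It thus suffices to show that the only solution of $Mx=0$, $x\in\mathbb R^{N}$, is $x=0$; the claim for the Kronecker-lifted operator $M\otimes I_{\pd}$ appearing in the global synchronization error then follows from $\det(M\otimes I_{\pd})=(\det M)^{\pd}$.

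To this end I would argue by contradiction. Suppose $Mx=0$ with $c:=\max_i|x_i|>0$, and pick any index $i^\ast$ attaining $|x_{i^\ast}|=c$. Reading off the $i^\ast$-th equation and taking absolute values gives
\[
\big(\nu_1 d_{i^\ast}^{\sigma(t)}+\nu_2 b_{i^\ast 0}^{\sigma(t)}\big)\,c
\;\le\;\nu_1\sum_{j\neq i^\ast} a_{i^\ast j}^{\sigma(t)}\,|x_j|
\;\le\;\nu_1 d_{i^\ast}^{\sigma(t)}\,c,
\]
whence $\nu_2 b_{i^\ast 0}^{\sigma(t)}c\le 0$. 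Because $\nu_2>0$ and $c>0$, this forces $b_{i^\ast 0}^{\sigma(t)}=0$, and moreover every inequality above must hold with equality. Equality in the first bound yields $|x_j|=c$ for every in-neighbor $j$ (i.e.\ every $j$ with $a_{i^\ast j}^{\sigma(t)}>0$), and equality in the triangle inequality—after matching signs—yields $x_j=x_{i^\ast}$ for all such $j$.

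The key step is to iterate this observation. Let $S:=\{i:\,x_i=x_{i^\ast}\}$; then $S\neq\emptyset$, every $i\in S$ satisfies $b_{i0}^{\sigma(t)}=0$ (by the same argument applied at $i$, since each such $i$ also attains the maximum modulus $c$), and $S$ is closed under taking in-neighbors. By Assumption~\ref{assp1}, from $i^\ast$ there is a finite chain of nonzero adjacency entries $a_{i^\ast i_2}^{\sigma(t)},a_{i_2 i_3}^{\sigma(t)},\dots,a_{i_{l-1}i'}^{\sigma(t)}$ terminating at an agent $i'$ with $(i',0)\in\bar{\mathcal E}^{\sigma(t)}$, i.e.\ $b_{i'0}^{\sigma(t)}>0$. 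Traversing this chain and using closure of $S$ under in-neighbors places $i'\in S$, so $b_{i'0}^{\sigma(t)}=0$—contradicting $b_{i'0}^{\sigma(t)}>0$. Hence $c=0$, $x=0$, and $M$ is nonsingular for every active topology.

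I expect the main obstacle to be the equality-case bookkeeping: extracting $x_j=x_{i^\ast}$ (not merely $|x_j|=c$) from the saturated triangle inequality, and then verifying that the maximal-modulus set $S$ is genuinely closed under in-neighbors so that the spanning-tree chain can be traversed to a leader-connected agent. This irreducibility-type propagation—rather than the Gershgorin bound itself, which only confines the eigenvalues to the closed right half-plane and permits a disc to touch the origin whenever some $b_{i0}^{\sigma(t)}=0$—is precisely what converts connectivity into strict nonsingularity. An alternative route is to recognize $M$ as an irreducibly diagonally dominant (hence nonsingular) $M$-matrix, but the contradiction argument above is more self-contained.
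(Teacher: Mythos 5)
Your proof is correct, but it takes a different route from the paper's. The paper partitions the node set into the strictly diagonally dominant rows $J$ of $\bm{\pounds}^{\sigma(t)}=\nu_1 L^{\sigma(t)}+\nu_2 B^{\sigma(t)}$, disposes of the case $J=\mathcal N$ by Gershgorin, and for $J\subsetneq\mathcal N$ invokes the theorem of Shivakumar on weakly chained diagonally dominant matrices, with Assumption~\ref{assp1} supplying the required chain of nonzero entries from each non-dominant row to a row in $J$. You instead prove the relevant special case of that theorem from scratch: a maximum-modulus contradiction showing $\ker M=\{0\}$, with the saturated triangle inequality forcing $x_j=x_{i^\ast}$ on in-neighbors and the leader-rooted spanning tree propagating the maximal set $S$ to an agent with $b_{i'0}^{\sigma(t)}>0$. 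The mathematical content is the same (weak diagonal dominance plus chained connectivity to a strictly dominant row), but your version is self-contained where the paper's rests on a citation, it makes explicit exactly where Assumption~\ref{assp1} enters, it handles the Kronecker lift to $\mathbb R^{N\pd}$ that the paper leaves implicit, and your closing observation---that Gershgorin alone only confines the spectrum to the closed right half-plane and a disc can touch the origin whenever some $b_{i0}^{\sigma(t)}=0$---correctly identifies why the chain condition, not the eigenvalue-inclusion bound, is doing the real work. The trade-off is length: the paper's citation-based argument is shorter, while yours is verifiable without consulting the reference. (Incidentally, your sign convention $M_{ii}=\nu_1 d_i^{\sigma(t)}+\nu_2 b_{i0}^{\sigma(t)}$ is the correct one; the paper's proof writes the diagonal entries with an extraneous minus sign.)
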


\begin{proof} 
Let \(J\) denote the set of strictly diagonally dominant nodes of \(\bm{\pounds}^{\sigma(t)}:=\nu_1 L^{\sigma(t)} + \nu_2 B^{\sigma(t)}\), i.e,
\(
J = \{ i : |\pounds_{ii}^{\sigma(t)}| > \sum_{j \neq i} |\pounds_{ij}^{\sigma(t)}| \}.
\) 
Similar to the proof under a fixed topology, if \(J = \mathcal{N}\), i.e., the matrix \(\bm{\pounds}^{\sigma(t)}\) is strictly diagonally dominant, then  \(\det(\bm{\pounds}^{\sigma(t)}) \neq 0\) is directly obtained by the application of the Gershgorin circle theorem \citep{qu2009cooperative}. 
For the case where \(J \subset \mathcal{N} \) is a strict subset of agents, we note from $\bm{\pounds}^{\sigma(t)}:=\nu_1 L^{\sigma(t)} + \nu_2 B^{\sigma(t)}$ and $L^{\sigma(t)}:=D^{\sigma(t)}-A^{\sigma(t)}$ that \mbox{$\bm{\pounds}^{\sigma(t)} = \nu_1 D^{\sigma(t)} - \nu_1 A^{\sigma(t)} +\nu_2 B^{\sigma(t)}$}. Since $B^{\sigma(t)} := \text{diag}\{b_{i}^{0}\}$ and \mbox{$D := \text{diag}\{d_i^{\sigma(t)}\}$} are diagonal matrices, the on-diagonal elements of $\bm{\pounds}^{\sigma(t)}$ are $\bm{\pounds}^{\sigma(t)}_{ii} = -\nu_1 d_{ii} -\nu_2 b_{i}^{0}$ and the off-diagonal elements are $\bm{\pounds}^{\sigma(t)}_{ij} = -\nu_1 a_{ij}$, $i \neq j$. 
These relations, together with Assumption~\ref{assp1}, yields that: (i) \(J \neq \emptyset\) and (ii) for each agent \(i \notin J\), there exists a sequence of nonzero elements of $\bm{\pounds}^{\sigma(t)}$ in the form ${\pounds}_{i_1 i_2}, {\pounds}_{i_2 i_3}, \ldots, {\pounds}_{i_{s-1} i_s}$, for some $i_s \in J$, which directly correspond to the sequence of non-zero elements $ a_{i_1 i_2}, a_{i_2 i_3}, \ldots, a_{i_{s-1} i_s}$ in Assumption~\ref{assp1} (applied to each active topology \(\mathcal{G}^{\sigma(t)}\)). Thus, by invoking \cite[Theorem]{shivakumar1974sufficient}, we conclude that \(\bm{\pounds}^{\sigma(t)}\) is nonsingular under any active topology.
\hfill $\blacksquare$
\end{proof}
\subsection{Local Error Dynamics}
Under time-triggered switching communication topologies, the active Laplacian and leader-interaction matrices at time \( t \) are \( L^{\sigma(t)} \) and \( B^{\sigma(t)} \). For each time $t$, the differentiation of the synchronization error defined in Eqn.~\eqref{errordynu} yields:
\begin{align}
\label{plantas}
\dot{e}^{k,\sigma(t)} &= e^{k+1,\sigma(t)}, \quad k=1,\ldots,n-1\notag \\
\dot{e}^{k,\sigma(t)} &= -(\nu_{1} L^{\sigma(t)} + \nu_{2} B^{\sigma(t)})(\dot{\bar{x}}^n - \dot{\bar{\underline{x}}}_0^n), \quad k=n
\end{align}
which, in the expanded form, is written as:
\begin{align} \label{explantas}
\dot{e}^1 &= e^2 = -(\nu_1 L^{\sigma(t)}+\nu_2 B^{\sigma(t)})(\bar{x}^2 - \bar{\underline{x}}_{0}^2) \notag \\
\dot{e}^2 &= e^3 = -(\nu_1 L^{\sigma(t)}+\nu_2 B^{\sigma(t)})(\bar{x}^3 - \bar{\underline{x}}_{0}^3) \notag \\
&\vdots \notag \\
\dot{e}^n &= -(\nu_1 L^{\sigma(t)}+\nu_2 B^{\sigma(t)})(f(\bar{x}) + u + w - f_0(\bar{x}_0,t))
\end{align}
where \( {e}^{k,\sigma(t)} \in \mathbb{R}^{N \cdot {\pd}} \), \( \bar{x}^k \in \mathbb{R}^{N \cdot {\pd}} \), and \( \bar{\underline{x}}_0^k \in \mathbb{R}^{N \cdot {\pd}} \). Here, \(f(\bar{x})\), \(f_0(\bar{x}_0,t)\), \(u\), and \(w\) represent the aggregated non-linearities, leader dynamics, control input, and disturbances, respectively, and evolve under the currently active communication topology defined by \(\sigma(t)\).

\subsection{Weighted Stability Error}
The weighted stability error \(r_i \in \mathbb{R}^{{\pd}}\) for each follower agent~\(i\) is defined as:
\begin{align}
r_i =& ~\lambda_{1}e_{i }^1 + \lambda_{2}e_{i }^2 + \dots + \lambda_{n-1}e_{i}^{n-1} + e_{i}^n   \notag \\
=& - \nu_1 \sum_{k=1}^{n} \sum_{j=1}^{N} \lambda_{k} a_{ij}^{\sigma(t)} \bigg[(x_{i}^{k} - \psi_{i}^{k}) - (x_{j}^{k} - \psi_{j}^{k})\bigg] \notag \\
&- \nu_2 \sum_{k=1}^{n} \lambda_{k}  b_{i0}^{\sigma(t)} \bigg[(x_{i}^{k} - \psi_{i}^{k}) - (x_{0}^{k} - \psi_{0}^{k})\bigg]
\end{align}
with $\lambda_n = 1$, and
where \(\lambda_j\) (for \( j = 1, \ldots, n-1 \)) the design parameters are chosen such that the characteristic polynomial:
\begin{equation} \label{eq:hurwitz_poly}
\mathfrak{s}^{n-1} + \lambda_{n-1} \mathfrak{s}^{n-2} + \cdots + \lambda_1
\end{equation}
is Hurwitz. This ensures that all roots have negative real parts, leading to the stability of the associated linear system. The selection of \(\lambda_j\) can be made by setting
$\mathfrak{s}^{n-1} + \lambda_{n-1} \mathfrak{s}^{n-2} + \cdots + \lambda_1 = \prod_{j=1}^{n-1} (\mathfrak{s} - \xi_j)
$, and selecting \(\xi_j\)'s to be positive real numbers. The global form of the representation of the weighted stability error is:
\begin{equation} \label{slidem}
r = \lambda_{1}e^{1} + \lambda_{2}e^{2} + \dots + \lambda_{n-1}e^{n-1} + e^{n} \in \mathbb{R}^{N\cdot p  \times 1}.
\end{equation}
Similarly, we define
\begin{align}
 \rho_i &= \lambda_{1}e_{i }^2 + \lambda_{2}e_{i}^3 + \dots + \lambda_{n-1}e_{i}^{n}   
 \notag \\
&= - \nu_1 \sum_{k=2}^{n} \sum_{j=1}^{N} \lambda_{k-1} a_{ij}^{\sigma(t)} \bigg[(x_{i}^{k} - \psi_{i}^{k}) - (x_{j}^{k} - \psi_{j}^{k})\bigg] \notag \\
&- \nu_2 \sum_{k=2}^{n} \lambda_{k-1}  b_{i0}^{\sigma(t)} \bigg[(x_{i}^{k} - \psi_{i}^{k}) - (x_{0}^{k} - \psi_{0}^{k})\bigg]
\end{align}
as well as  
\begin{equation} \label{eq:rhoDefinition}
    \rho = \lambda_{1}{e}^{2} + \lambda_{2}{e}^{3} + \lambda_{3}{e}^{4} + \dots + \lambda_{n-1}{e}^{n}.
\end{equation}

By differentiating Eqn.~\eqref{slidem} with respect to time, the weighted stability error dynamics under the active topology \(\sigma(t)\) is written as:
\begin{multline}
\label{slidemd}
\hspace{-9pt}\dot{r} = \lambda_{1}\dot{e}^{1} + \lambda_{2}\dot{e}^{2} + \dots + \lambda_{n-1}\dot{e}^{n-1} + \dot{e}^{n} \\
 = \lambda_{1}{e}^{2} + \lambda_{2}{e}^{3} + \dots + \lambda_{n-1}{e}^{n} 
 \\
 - (\nu_{1}L^{\sigma(t)} + \nu_{2}B^{\sigma(t)})(\dot{\bar{x}}^n - \dot{\bar{x}}_0^n) \notag
\end{multline}
recalling from Eqn.~\eqref{plantas} that \(\dot{e}^{n} = - (\nu_{1}L^{\sigma(t)} + \nu_{2}B^{\sigma(t)})(\dot{\bar{x}}^n - \dot{\bar{x}}_0^n).\) Hence, \(\dot{r}\) can be written as 
\begin{equation} \label{errdyn}
\dot{r} = \rho - (\nu_{1}L^{\sigma(t)} + \nu_{2}B^{\sigma(t)})(f(x) + u + w - f_0(x_0,t)),
\end{equation}
where 
\begin{equation}\label{eq:desparam}
\bar{\lambda} = [\lambda_1, \lambda_2,\ldots,\lambda_{n-1}]^{\top} \end{equation}  

With the definition of \(
E_1 = [e^1,e^2,\ldots,e^{n-1}] \in \mathbb{R}^{N\cdot p  \times (n-1)},
\) and \( E_2=[{e^2}, {e^3},\ldots,{e^n}] \in \mathbb{R}^{N\cdot p  \times (n-1)} \), we first rewrite \eqref{eq:rhoDefinition} as $\rho = E_{2}\bar{\lambda}$ and, further, we can write the matrix relation: 
\begin{equation}\label{eq:E2_def}
E_2 = E_1 \bigtriangleup^{\top} + \, r \, l =  \dot{E}_1 = [e^2,e^3,\ldots,e^n] \in \mathbb{R}^{1 \times N\cdot p \cdot(n-1)}
\end{equation}
with
\mbox{\(
l = [0,0,\ldots,0,1] \in \mathbb{R}^{1 \times (n-1)}
\)}, and 
\[
\bigtriangleup = \begin{bmatrix}
0 & 1 & 0 & \ldots & 0 \\
0 & 0 & 1 & \ldots & 0 \\
\vdots & \vdots & \vdots & \ddots & \vdots \\
0 & 0 & 0 & \ldots & 1 \\
-\lambda_1 & -\lambda_2 & -\lambda_3 & \ldots & -\lambda_{n-1} \\
\end{bmatrix} \in \mathbb{R}^{(n-1) \times (n-1)}.
\]

As stated in \citep{lewis2013}, since $\bigtriangleup$ is Hurwitz,  there exists a unique positive-definite matrix $P_1$ such that for any positive number \( {\beta} \):
\begin{equation} \label{huwitx}
    \bigtriangleup^{\top} P_1 + P_1 \bigtriangleup = - {\beta}I
\end{equation}
where \( I \in \mathbb{R}^{(n-1)\times(n-1)} \) is the identity matrix.
We will use $P_1$ in the Lyapunov function to establish convergence.

\begin{lemma}
If \( r_i(t) \) is ultimately bounded, then \( e_i(t) \) is ultimately bounded. 
\end{lemma}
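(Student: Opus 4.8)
The plan is to exploit the companion-form structure already encoded in Eqn.~\eqref{eq:E2_def}, $\dot{E}_1 = E_1\bigtriangleup^{\top} + r\,l$, and to read it as a Hurwitz linear system in which the weighted error $r_i$ plays the role of a (merely ultimately bounded) exogenous input driving the stack of lower-order errors. Concretely, for each agent $i$ I would collect $\xi_i := [e_i^1; e_i^2; \ldots; e_i^{n-1}] \in \mathbb{R}^{(n-1)p}$ and combine the integrator chain $\dot{e}_i^k = e_i^{k+1}$, $k=1,\ldots,n-1$, with the identity $e_i^n = r_i - \sum_{k=1}^{n-1}\lambda_k e_i^k$ obtained by solving the definition of $r_i$ for its top-order term. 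This yields the closed per-agent dynamics $\dot{\xi}_i = (\bigtriangleup\otimes I_p)\,\xi_i + (l^{\top}\otimes I_p)\,r_i$, which is exactly the $p$-dimensional, vectorized counterpart of Eqn.~\eqref{eq:E2_def} with the Hurwitz companion drift $\bigtriangleup$ built from the coefficients $\lambda_k$.

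Next I would invoke the Lyapunov equation~\eqref{huwitx}, $\bigtriangleup^{\top}P_1 + P_1\bigtriangleup = -\beta I$, and take $V_i(\xi_i) = \xi_i^{\top}(P_1\otimes I_p)\xi_i$ as the candidate. Differentiating along the trajectory and using the Kronecker identity $(\bigtriangleup^{\top}P_1 + P_1\bigtriangleup)\otimes I_p = -\beta I$ gives $\dot{V}_i = -\beta\|\xi_i\|^2 + 2\,\xi_i^{\top}\big((P_1 l^{\top})\otimes I_p\big)\,r_i$. Bounding the cross term by Young's inequality, with $\|l\|=1$ and $\|P_1 l^{\top}\|\le\bar{\alpha}(P_1)$, produces $\dot{V}_i \le -\tfrac{\beta}{2}\|\xi_i\|^2 + \tfrac{2}{\beta}\,\bar{\alpha}(P_1)^2\,\|r_i\|^2$. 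Letting $\bar r$ denote the ultimate bound on $\|r_i\|$, this shows $\dot{V}_i<0$ whenever $\|\xi_i\|$ exceeds a threshold proportional to $\bar r$, so by the standard ultimate-boundedness argument $\xi_i$ — and hence each of $e_i^1,\ldots,e_i^{n-1}$ — is ultimately bounded.

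Finally I would recover the top-order error directly from $e_i^n = r_i - \lambda_1 e_i^1 - \cdots - \lambda_{n-1}e_i^{n-1}$. Since its right-hand side is a finite linear combination of quantities already shown to be ultimately bounded, $e_i^n$ is ultimately bounded as well, and therefore the full vector $e_i = [e_i^1;\ldots;e_i^n]$ is ultimately bounded, which is the claim.

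The main obstacle I anticipate is making the passage from ``$r_i$ ultimately bounded'' to an explicit ultimate bound on $\xi_i$ fully rigorous: since $r_i$ need not be bounded for all $t\ge t_0$ but only eventually, I must first rule out finite escape (immediate from the linearity of the $\xi_i$-dynamics and continuity of $r_i$) and then apply the comparison estimate on the tail interval where $\|r_i\|\le\bar r$ holds, propagating the worst-case transient value of $V_i$ through the decay. Some care is also needed with the Kronecker factor $P_1\otimes I_p$ so that $\beta$ and the singular-value bounds $\underline{\alpha}(P_1),\bar{\alpha}(P_1)$ carry over unchanged to the $p$-dimensional setting.
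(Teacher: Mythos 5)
Your proposal is correct and is, in substance, the same argument the paper relies on: the paper's proof is a one-line appeal to \citep[Lemma~10.3]{lewis2013}, whose content is exactly the Hurwitz-filter/Lyapunov estimate you reconstruct from $\dot{E}_1=E_1\bigtriangleup^{\top}+rl$ and Eqn.~\eqref{huwitx}, together with recovering $e_i^n$ from the linear relation $e_i^n=r_i-\sum_{k=1}^{n-1}\lambda_k e_i^k$. You simply write out in full (including the Kronecker bookkeeping and the tail-interval argument) what the paper delegates to the citation.
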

\begin{proof}
    The result is direct consequence of \citep[Lemma 10.3]{lewis2013}. 
\end{proof}

\begin{lemma}[Graph Lyapunov Equation] \label{lem:PQpositivedefinite}
Define 
\begin{equation}
    q \equiv [q_1,\ldots,q_N]^{\top} := (\nu_{1}L^{\sigma(t)} + \nu_{2}B^{\sigma(t)})^{-1} \underline{1} ,
\end{equation}
\begin{equation} \label{pdefinite}
    P := \text{diag}\{1/q_i\}_{i \in \mathcal{N}},
\end{equation}
\begin{equation} \label{qdefinite}
    Q:=P(\nu_1L^{\sigma(t)} + \nu_2B^{\sigma(t)}) + (\nu_1L^{\sigma(t)} + \nu_2B^{\sigma(t)})^{\top} P ,
\end{equation}
where \( \underline{1} = [1,\ldots,1]^{\top} \in \mathbb{R}^N \). 
Then the matrices \( P \) and \( Q \) are positive definite.
\end{lemma}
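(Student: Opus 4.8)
The plan is to recognize \(M:=\nu_1 L^{\sigma(t)}+\nu_2 B^{\sigma(t)}\) as a nonsingular \(M\)-matrix and then apply the standard \(M\)-matrix Lyapunov construction. First I would record its sign structure: off the diagonal \(M_{ij}=-\nu_1 a_{ij}^{\sigma(t)}\le 0\), while on the diagonal \(M_{ii}=\nu_1 d_i^{\sigma(t)}+\nu_2 b_{i0}^{\sigma(t)}>0\), so \(M\) is a \(Z\)-matrix. By the Gershgorin circle theorem each eigenvalue lies in a disk centered at \(M_{ii}\) of radius \(\nu_1 d_i^{\sigma(t)}\), whose leftmost point is \(\nu_2 b_{i0}^{\sigma(t)}\ge 0\); hence every eigenvalue has nonnegative real part. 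Writing \(M=sI-G\) with \(G\ge 0\), Perron--Frobenius guarantees that the eigenvalue of smallest real part, namely \(s-\rho(G)\), is real; combined with Lemma~\ref{rem:lbnon} (nonsingularity under Assumption~\ref{assp1}) this smallest real part is both \(\ge 0\) and nonzero, hence strictly positive. Therefore \(M\) is a nonsingular \(M\)-matrix.

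From here \(P\succ 0\) is immediate. A nonsingular \(M\)-matrix satisfies \(M^{-1}\ge 0\) entrywise, so \(q=M^{-1}\underline 1\ge 0\); since \(M^{-1}\) is invertible, no row of \(M^{-1}\) can vanish, whence each \(q_i=\sum_j (M^{-1})_{ij}>0\). Thus \(P=\mathrm{diag}\{1/q_i\}\) is well defined and positive definite.

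The positive definiteness of \(Q\) is the crux, and here the special choice of \(P\) (rather than an arbitrary positive diagonal) is what must be exploited through the identity \(Mq=\underline 1\). Since \(P^{-1}=\mathrm{diag}\{q_i\}\), the scaled matrix \(\hat M:=MP^{-1}\) satisfies \(\hat M\underline 1=Mq=\underline 1\); that is, \(\hat M\) is a \(Z\)-matrix whose row sums all equal one, so \(\hat M_{ii}=1+\sum_{j\ne i}|\hat M_{ij}|\) and \(\hat M\) is strictly row--diagonally dominant. Congruence by the symmetric invertible factor \(P^{-1}\) gives \(P^{-1}QP^{-1}=MP^{-1}+P^{-1}M^{\top}=\hat M+\hat M^{\top}\), so by Sylvester's law of inertia \(Q\succ 0\) if and only if the symmetric part of \(\hat M\) is positive definite. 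I would therefore reduce the entire claim to showing \(\hat M+\hat M^{\top}\succ 0\).

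I expect this last reduction to be the main obstacle, precisely because strict row dominance of \(\hat M\) does not transfer automatically to \(\hat M+\hat M^{\top}\): forming the symmetric part injects the \emph{column} sums of \(\hat M\), whose signs are not controlled by the row structure. For undirected (symmetric \(M\)) or weight-balanced topologies the column sums are tamed and one closes the argument by showing \(\hat M+\hat M^{\top}\) is an irreducibly diagonally dominant symmetric \(Z\)-matrix with positive diagonal, or equivalently by exhibiting a positive vector \(z\) with \((\hat M+\hat M^{\top})z>0\); either route yields positive definiteness via the \(M\)-matrix characterization. For general leader-rooted \emph{directed} topologies this is the delicate step, where Assumption~\ref{assp1} (the leader-rooted spanning tree, equivalently irreducibility of the relevant blocks together with the admissible weight structure) must be spent to certify that the transpose contribution cannot destroy definiteness. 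This is where I would focus the effort, as it is the one step not reducible to routine sign bookkeeping.
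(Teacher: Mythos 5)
Your reduction is sound as far as it goes: $M:=\nu_1L^{\sigma(t)}+\nu_2B^{\sigma(t)}$ is a nonsingular $M$-matrix, $q=M^{-1}\underline 1>0$ so $P\succ0$, and $Q\succ0$ if and only if the symmetric part of $\hat M=MP^{-1}$ is positive definite, where $\hat M$ is strictly row diagonally dominant because $\hat M\underline 1=Mq=\underline 1$. But the step you flag as ``the delicate step'' and leave open is not merely delicate --- with this particular choice of $P$ it is \emph{false} for general directed leader-rooted topologies, so no amount of spending Assumption~\ref{assp1} can close it. Concretely, take $N=2$, $\nu_1=\nu_2=1$, a single follower edge $1\to2$ with $a_{21}=1$ and $a_{12}=0$, and pin only agent $1$ with $b_{10}=1/10$. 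The augmented graph is leader-rooted, $M=\bigl(\begin{smallmatrix}1/10&0\\-1&1\end{smallmatrix}\bigr)$, $q=(10,11)^{\top}$, $P=\mathrm{diag}(1/10,\,1/11)$, and $Q=PM+M^{\top}P=\bigl(\begin{smallmatrix}1/50&-1/11\\-1/11&2/11\end{smallmatrix}\bigr)$ has determinant $\tfrac{1}{275}-\tfrac{1}{121}<0$, hence is indefinite. Your diagnosis that the uncontrolled column sums of $\hat M$ are the obstruction is exactly right; the counterexample realizes it. With $P=\mathrm{diag}(1/q_i)$ the only free conclusion is $q^{\top}Qq=2N>0$, which is far from definiteness.

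For comparison, the paper's own proof is a one-line appeal to Lemma 10.1 of the cited reference together with the nonsingularity from Lemma~\ref{rem:lbnon}, so it does not confront this issue at all; the gap you found is inherited from the statement itself. The standard repair is to change the scaling: set $p=(M^{\top})^{-1}\underline 1>0$ and $P=\mathrm{diag}(p_i)$ (or, for a general nonsingular $M$-matrix, $P=\mathrm{diag}(p_i/q_i)$). Then $Q=PM+M^{\top}P$ is a symmetric $Z$-matrix satisfying $Q\underline 1=PM\underline 1+M^{\top}p=\nu_2P\,\mathrm{col}(b_{i0})+\underline 1>0$ (using that the Laplacian has zero row sums), so $Q$ is strictly diagonally dominant with positive diagonal and hence positive definite --- precisely the ``exhibit a positive $z$ with $Qz>0$'' route you anticipate, but it requires the transposed-inverse weighting. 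In summary: your argument is correct and complete for $P\succ0$, correctly isolates the decisive step for $Q$, but cannot be completed for the $P$ defined in the lemma; the statement needs amending rather than your proof.
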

\begin{proof}
      The proof is similar to \citep[Lemma 10.1]{lewis2013} by arguing that \(\nu_1 L^{\sigma(t)} + \nu_2 B^{\sigma(t)}\) is nonsingular by Lemma~\ref{rem:lbnon}.
      
      \hfill $\blacksquare$
\end{proof}

\ignorespacesafterend

\section{Main Result} 
\label{MainResult}
We now introduce the proposed
distributed control strategy and adaptive tuning laws in Sections~\ref{control_law_defn} and~\ref{NN_Laws_Defn}, respectively.
\subsection{Proposed distributed Control Law}\label{control_law_defn}
Under time-triggered switching communication topologies, let \( D^{\sigma(t)} \) and \( B^{\sigma(t)} \) be the in-degree and leader-interaction matrices corresponding to the active topology \(\sigma(t)\). Given, \( {c}_i \in \mathbb{R}^{{\pd} \times n{\pd}} \) a tunable gain matrix and 
\begin{equation}
    E_{i0}=\left[\begin{array}{c}
(x_{i}^{1}-\psi_{i}^{1})-(x_{0}^{1}-\psi_{0}^{1})\\
\vdots\\
(x_{i}^{k}-\psi_{i}^{k})-(x_{0}^{k}-\psi_{0}^{k})\\
\vdots\\
(x_{i}^{n}-\psi_{i}^{n})-(x_{0}^{n}-\psi_{0}^{n})
\end{array}\right]
\end{equation}
is the relative position between the follower and leader with global forms represented as \( {c} \) and \(E_0\), respectively.
To achieve a fully modular design that simultaneously handles formation tracking, disturbance rejection, consensus, leader-following, and safety, our control law for high-dimensional nonlinear agent $i$ is composed of five clear components. First, the feedforward term $\rho_i$ encodes the desired leader-follower offset $\psi_i^{k}$, ensuring each agent follows the leader ($\psi_i^{1}$ is the relative displacement). Second, the adaptive neural-network terms
$
-\hat\theta_i^\top\phi_i \;-\;\hat\theta_{iw}^\top\phi_{iw}\;+\;\hat\theta_0^\top\phi_0
$
cancel unknown agent dynamics and external disturbances via online weight adaptation. Third, the synchronization error $r_i$ drives all inter-agent and agent-leader deviations—after subtracting offset thresholds—to zero, enforcing cohesion. Fourth, a proportional leader-tracking correction
$-c_i\,E_{i0}^\top$
guarantees exponential convergence to the leader’s trajectory through the communication graph. Finally, the repulsive safety forces
$\Gamma^0_{\mathfrak{c}} \sum_{{\mathfrak{c}}=1}^{\xi} m_{i\mathfrak{c}},\Gamma^1_{ij} \sum_{j=1}^{N} m_{ij}, \Gamma^2_{i0} \sum_{j=1}^{N} m_{i0}$
provide collision- and obstacle-avoidance by pushing agents away from nearby hazards. 

Each term directly maps to one of our core objectives, resulting in a tunable control architecture.
The proposed distributed control law is presented for each agent~\(i\) as:

\begin{equation} \label{eq:ProposedControlLaw}
\begin{aligned}
u_i &= \frac{\rho_i}{d_i^{\sigma(t)}+b_{i0}^{\sigma(t)}} - \hat{\theta}_i^{\top}{\phi}_i - \hat{\theta}_{iw}^{\top}{\phi_{iw}} +\hat{\theta}_0^{\top}{\phi_0} + r_i \\
&\quad - {c}_i E_{i0}^\top -  \sum_{{\mathfrak{c}}=1}^{\Xi} \Gamma^0_{\mathfrak{c}} m_{i\mathfrak{c}} - \Gamma^1_{ij} m_{ij} -  \Gamma^2_{i0} \sum_{j=1}^{N} m_{i0}
\end{aligned}
\end{equation}
and written in expanded form

\begin{strip}
\begin{multline}\label{eq:ProposedControlLaw1}
u_{i}=\frac{1}{d_{i}^{\sigma(t)}+b_{i0}^{\sigma(t)}}\Bigg(-\nu_{1}\sum_{k=2}^{n}\sum_{j=1}^{N}\lambda_{k-1}a_{ij}^{\sigma(t)}\bigg[(x_{i}^{k}-\psi_{i}^{k})-(x_{j}^{k}-\psi_{j}^{k})\bigg]-\nu_{2}\sum_{k=2}^{n}\lambda_{k-1}b_{i0}^{\sigma(t)}\bigg[(x_{i}^{k}-\psi_{i}^{k})-(x_{0}^{k}-\psi_{0}^{k})\bigg]\Bigg)
\\
-\hat{\theta}_{i}^{\top}\phi_{i}(x_{i})-\hat{\theta}_{iw}^{\top}\phi_{iw}(t)+\hat{\theta}_{0}^{\top}\phi_{0}(x_{0},t)-\nu_{1}\sum_{k=1}^{n}\sum_{j=1}^{N}\lambda_{k}a_{ij}^{\sigma(t)}\bigg[(x_{i}^{k}-\psi_{i}^{k})-(x_{j}^{k}-\psi_{j}^{k})\bigg]
\\
-\nu_{2}\sum_{k=1}^{n}\lambda_{k}b_{i0}^{\sigma(t)}\bigg[(x_{i}^{k}-\psi_{i}^{k})-(x_{0}^{k}-\psi_{0}^{k})\bigg]-\sum_{k=1}^{n}c_{i}^{k}\bigg[(x_{i}^{k}-\psi_{i}^{k})-(x_{0}^{k}-\psi_{0}^{k})\bigg]
\\
-\sum_{\mathfrak{c}=1}^{\Xi}\Gamma_{\mathfrak{c}}^{0}\,m_{i\mathfrak{c}}(x_i,O_{\mathfrak{c}})-\sum_{j=1}^{N}\Gamma_{ij}^{1}\,m_{ij}(x_i,x_j)-\Gamma_{i0}^{2}\,m_{i0}(x_i,x_0).
\end{multline}
\end{strip}
To facilitate the discussion, Eqn.~\eqref{eq:ProposedControlLaw} is decomposed as
\begin{align}
    \hspace{-3pt}u_i^d &= \frac{\rho}{d_i^{\sigma(t)}+b_{i0}^{\sigma(t)}} - \hat{\theta}_i^{\top}{\phi}_i - \hat{\theta}_{iw}^{\top}{\phi_{iw}} +\hat{\theta}_0^{\top}{\phi_0} \notag \\ &+ r_i - {c}_i E_{i0}^\top,  \\
    \hspace{-3pt}u_i^0 &= \Gamma^0_{\mathfrak{c}} \sum_{{\mathfrak{c}}=1}^{\xi} m_{i\mathfrak{c}}, \\
    \hspace{-3pt}u_i^c &= \Gamma^1_{ij} \sum_{j=1}^{N} m_{ij} +  \Gamma^2_{i0} \sum_{j=1}^{N} m_{i0}.
\end{align}
This yields the global form of the control law \( U \) as:
\begin{equation}\label{controllaw}
    U = U^d - U^0 - U^c
\end{equation}
where:
\begin{align}
    \hspace{-3pt} U^d &= {(D^{\sigma(t)}+B^{\sigma(t)})}^{-1}{\rho} - \hat{\theta}^{\top}{\phi} - \hat{\theta}_{w}^{\top}{\phi_{w}} + \hat{\theta}_0^{\top}{\phi_0} \notag \\ &+ r - {c} E_0^\top,  \\
    \hspace{-3pt} U^0 &= \Gamma^0 M^0, \\
    \hspace{-3pt} U^c &= \Gamma^1 M^c_I + \Gamma^2 M^c_0.
\end{align}
Gain matrices $\Gamma^0,\Gamma^1,\Gamma^2$ multiply the obstacle‐avoidance term $M^0$ and the two collision‐avoidance terms $M^c_I,\;M^c_0$.  Each $\Gamma^j$ is a constant, positive-definite matrix chosen by the designer to set the relative priority and shape of those repulsive contributions. The overall control law is:
\begin{align}\label{overcntllaw}
    U &= {(D^{\sigma(t)}+B^{\sigma(t)})}^{-1}{\rho} - \hat{\theta}^{\top}{\phi} - \hat{\theta}_{w}^{\top}{\phi_{w}} +\hat{\theta}_0^{\top}{\phi_0} \notag \\ &+ r - {c} E_0^\top - \Gamma^0 M^0 - \Gamma^1 M^c_I - \Gamma^2 M^c_0
\end{align}
Here,  \( U = [u_1^\top, u_2^\top, \dots, u_N^\top]^\top \in \mathbb{R}^{N \cdot {\pd}} \) is the control input vector for all agents with  \( u_i, u_i^d, u_i^c, u_i^0 \in \mathbb{R}^{{\pd}} \), and \((D^{\sigma(t)}+B^{\sigma(t)})^{-1}\) is the inverse of the diagonal matrix \( \text{diag}[({d}^{\sigma(t)}_{1}+b^{\sigma(t)}_{10})I_{{\pd}}, ({d}^{\sigma(t)}_{2}+b^{\sigma(t)}_{20})I_{{\pd}}, \dots, ({d}^{\sigma(t)}_{N}+b^{\sigma(t)}_{N0})I_{{\pd}}]  \in \mathbb{R}^{(N \cdot {\pd} ) \times (N \cdot {\pd} )}\). \( I_{{\pd}} \) is the identity matrix of size \( {\pd} \times {\pd} \). The global neural network weight matrices \( \hat{\theta}, \hat{\theta}_{w}, \hat{\theta}_0 \) correspond to the agents' dynamics, disturbances, and leader's dynamics, while the global basis function matrices \( \phi, \phi_w, \phi_0 \) relate to these respective dynamics. The control law \(U\) thus adapts to the currently active topology \(\sigma(t)\), ensuring performance and stability despite the evolving communication structure.

\subsection{Proposed NN Local Tuning Laws}\label{NN_Laws_Defn} 
At any time \( t \), the active topology \(\sigma(t)\) determines the in-degree matrix \(D^{\sigma(t)}\) and the leader-interaction matrix \(B^{\sigma(t)}\).
For each agent \( i \), the NN parameter adaptive tuning laws are proposed as:
\begin{equation} \label{eq:ThetaItuning}
\dot{\hat{\theta}}_i = - G_i \big[\phi_i r_i p_i(d_i^{\sigma(t)} + b_{i0}^{\sigma(t)}) + \kappa_i\hat{\theta}_i \big],
\end{equation}
\begin{equation} \label{eq:Theta0tuning}
\dot{\hat{\theta}}_0 = G_{i0}\big[\phi_0 r_i p_i(d_i^{\sigma(t)} + b_{i0}^{\sigma(t)}) - \kappa_0\hat{\theta}_0 \big],
\end{equation}
\begin{equation} \label{eq:ThetaIWtuning}
\dot{\hat{\theta}}_{iw} = - G_{iw} \big[\phi_{iw} r_i p_i(d_i^{\sigma(t)} + b_{i0}^{\sigma(t)}) + \kappa_{iw}\hat{\theta}_{iw} \big].
\end{equation}
where, $ \kappa_i, \kappa_0, \kappa_{iw} \in \mathbb{R}_{>0} $ are positive scalar tuning gains. Here $G_i\in\mathbb R^{q_i\times q_i},\;G_{i0}\in\mathbb R^{q_0\times q_0},$ and $G_{iw}\in\mathbb R^{q_{iw}\times q_{iw}}$ are constant, symmetric positive–definite matrices that serve as the adaptation‐rate (gain) matrices for the agent–dynamics, leader–dynamics and disturbance neural networks respectively.  Their magnitudes can be tuned to trade off convergence speed versus robustness to noise or unmodelled dynamics. In the global context, this yields the parameter estimate dynamics as:
\begin{align} \label{adap1}
\dot{\hat{\theta}} &= - G \big[\phi r^{\top} P(D^{\sigma(t)} + B^{\sigma(t)}) + \kappa\hat{\theta} \big],
\\ \label{adapt2}
\dot{\hat{\theta}}_0 &= G^*_0\big[\phi_0 r^{\top} P(D^{\sigma(t)} + B^{\sigma(t)}) - \kappa_0\hat{\theta}_0 \big],
\\  \label{adapt3}
\dot{\hat{\theta}}_{w} &= - G_{w} \big[\phi_{w} r^{\top} P(D^{\sigma(t)} + B^{\sigma(t)}) + \kappa_{w}\hat{\theta}_{w} \big].
\end{align}
where 
\begin{align*}
    G &= \text{diag}(G_1,\ldots,G_N) \in \mathbb{R}^{q_t \times q_t},
    \\
    G^*_0 &= \text{diag}(G_0,\ldots,G_0) \in \mathbb{R}^{q_t \times q_t},
    \\
    G_w &= \text{diag}(G_{1w},\ldots,G_{Nw}) \in \mathbb{R}^{q_t \times q_t}.
\end{align*}

The matrix \(P\) is defined as in Eqn.~\eqref{pdefinite}, and \(\kappa\), \(\kappa_{0}\), and \(\kappa_{w}\) are vectors of the tuning gains $ \kappa_i, \kappa_0, \kappa_{iw}$ respectively.

\subsection{Leader-Follower Formation Consensus} \label{leader_follower_cons}

\begin{theorem} \label{thm:unified} 
For the distributed multi-agent with the follower dynamics Eqn.~\eqref{planta} and leader dynamics Eqn.~\eqref{plantref} under assumptions \ref{assp}, \ref{assp1}, and \ref{assumption1}, the NN tuning laws Eqn.~\eqref{eq:Theta0tuning}, Eqn.~\eqref{eq:ThetaItuning}, Eqn.~\eqref{eq:ThetaIWtuning}, together with the control law Eqn.~\eqref{eq:ProposedControlLaw}, and for a network with the average dwell time Eqn.~\eqref{eq:dwelltime}, this
results in synchronization,
collision/obstacle avoidance and global asymptotic stability under switching communication topology in the form of
\begin{enumerate}
    \item Synchronization: The tracking errors, $\delta^{k}$ in Eqn.~\eqref{delta_errors}, for all follower agents, $i \in \mathcal{N}$ are cooperatively uniformly ultimately bounded (CUUB).
    
    \item Stability: The origin  is globally stable for all the $k^\text{th}$ order weighted synchronization errors~Eqn.~\eqref{errordyn}, i.e., $e_i^{k,\sigma(t)}$ for all $k\in \{1,\cdots,n\}$, $i\in \{1,\cdots,N\}$, the weighted stability error~Eqn.~\eqref{slidem}, and the NN weight errors \mbox{\( \tilde{\theta} = \theta - \hat{\theta}\)}, \( \tilde{\theta}_{0} = \theta_{0} - \hat{\theta}_{0}\), \(  \tilde{\theta}_{w} = \theta_{w} - \hat{\theta}_{w}\).
    
    \item Collision/Obstacle Avoidance: Agents avoid collisions and obstacles while maintaining synchronization and stability, i.e., $m_{ij}$ in Eqn.~\eqref{col1}, $m_{i0}$ in Eqn.~\eqref{col2}, and $m_{i\mathfrak{c}}$ in Eqn.~\eqref{obs1} remain bounded for all \(t \geq 0\).
    
 \end{enumerate}
\end{theorem}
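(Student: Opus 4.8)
The plan is to construct a single composite Lyapunov function for the switched system and establish that it decreases along trajectories within each active topology, then use the average-dwell-time condition to guarantee stability across switches. Specifically, I would take
\begin{equation*}
V = \tfrac12\, r^\top P\, r + \tfrac12\,\mathrm{tr}\{\tilde\theta^\top G^{-1}\tilde\theta\} + \tfrac12\,\mathrm{tr}\{\tilde\theta_0^\top (G_0^*)^{-1}\tilde\theta_0\} + \tfrac12\,\mathrm{tr}\{\tilde\theta_w^\top G_w^{-1}\tilde\theta_w\},
\end{equation*}
where $P$ is the positive-definite matrix from Lemma~\ref{lem:PQpositivedefinite} (which exists under each topology by Lemma~\ref{rem:lbnon}). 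The first term penalizes the weighted stability error $r$ defined in Eqn.~\eqref{slidem}, and the remaining trace terms penalize the NN weight-estimation errors from Eqns.~\eqref{eq:nn_1}--\eqref{eq:nn_3}. The repulsive potentials $m_{ij},m_{i0},m_{i\mathfrak c}$ are handled separately via the HOCBF forward-invariance argument in Appendix~\ref{app:HOCBF}, so they need not appear directly in $V$.

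First I would differentiate $V$ along the closed-loop dynamics. Substituting the error dynamics Eqn.~\eqref{errdyn} and the control law Eqn.~\eqref{overcntllaw} into $\dot r$, the designed terms in $U^d$ are chosen precisely so that the unknown nonlinearities $f(x)$, the leader dynamics $f_0(x_0,t)$, and the disturbance $w(t)$ are replaced by their NN parameterizations Eqns.~\eqref{fxes}--\eqref{fwes}; the estimation terms collapse into the weight errors $\tilde\theta,\tilde\theta_0,\tilde\theta_w$ multiplying their basis functions, leaving residual approximation errors $\varepsilon,\varepsilon_0,\varepsilon_w$. The trace terms contribute $-\mathrm{tr}\{\tilde\theta^\top G^{-1}\dot{\hat\theta}\}$ and analogues; substituting the tuning laws Eqns.~\eqref{adap1}--\eqref{adapt3}, the $\phi r^\top P(D^{\sigma}+B^{\sigma})$ portions are engineered to cancel the corresponding cross terms arising in $\dot r$, while the $\kappa$-regularization (e-modification) terms survive. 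Using $-\kappa\,\mathrm{tr}\{\tilde\theta^\top\hat\theta\}=-\kappa\,\mathrm{tr}\{\tilde\theta^\top\theta\}+\kappa\,\mathrm{tr}\{\tilde\theta^\top\tilde\theta\}$ and completing the square, together with the bounds from Assumption~\ref{assumption1} on $\phi,\theta,\varepsilon$, I would derive an inequality of the form $\dot V \le -\alpha V + \mu$ within each mode, where $\alpha>0$ depends on $\underline\alpha(Q)$, the $\kappa$ gains, and the adaptation spectra, and $\mu>0$ aggregates the bounded residual terms. This establishes the CUUB property (claim 1) and boundedness of all errors (claim 2) on each dwell interval.

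The main obstacle will be handling the switching instants: $P$, $Q$, and the weighted error $r$ all depend on the active topology $\sigma(t)$ through $L^{\sigma}$ and $B^{\sigma}$, so $V$ is itself mode-dependent and may jump discontinuously at each $t_s$. The standard remedy, which I would follow, is to bound the ratio $V^{\sigma(t_s)}(t_s)/V^{\sigma(t_{s-})}(t_{s-})\le \mu_0$ uniformly across all topology pairs, using the common spectral bounds $\underline\alpha(P^\sigma)$, $\bar\alpha(P^\sigma)$ available from the finite topology set $\{\mathcal G^1,\dots,\mathcal G^{\bar C}\}$ and Assumption~\ref{assumption1}. Combining the per-mode decay rate $\alpha$ with the jump factor $\mu_0$ and the switching count bound Eqn.~\eqref{numberswitches0}, I would show that if $\tau_a>\tau_a^*:=\ln\mu_0/\alpha$, then $V(t)\le e^{-(\alpha-\ln\mu_0/\tau_a)(t-t_0)}V(t_0)+\tfrac{\mu}{\alpha}$ converges to a bounded ultimate set, yielding the average-dwell-time condition Eqn.~\eqref{eq:dwelltime}. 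Finally, for claim 3 I would invoke the separate HOCBF forward-invariance result (Appendix~\ref{app:HOCBF}): since the ultimate boundedness keeps states in a compact set and the repulsive potentials are continuous with singularities only outside the certified safe sets, the safe sets $h_{ij},h_{i0},h_{i\mathfrak c}\ge 0$ remain positively invariant, so $m_{ij},m_{i0},m_{i\mathfrak c}$ stay bounded for all $t\ge 0$.
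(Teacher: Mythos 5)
Your overall architecture (composite mode-dependent Lyapunov function, per-mode decay, jump factor at switches, average dwell time, and delegating claim~3 to the HOCBF forward-invariance argument) matches the paper's strategy. However, there is a genuine gap: your Lyapunov function omits the term $V_5=\tfrac12\,\mathrm{tr}\{E_1P_1E_1^{\top}\}$ built from the lower-order error block $E_1=[e^1,\dots,e^{n-1}]$ and the solution $P_1$ of the Lyapunov equation~\eqref{huwitx} for the Hurwitz companion matrix $\bigtriangleup$. This term is not decorative. The control law~\eqref{eq:ProposedControlLaw} injects only $\rho_i/(d_i^{\sigma(t)}+b_{i0}^{\sigma(t)})$, so after left-multiplication by $(\nu_1L^{\sigma(t)}+\nu_2B^{\sigma(t)})$ the feedforward term $\rho$ in $\dot r$ (Eqn.~\eqref{errdyn}) is cancelled only up to a residual of the form $r^{\top}PA^{\sigma(t)}(D^{\sigma(t)}+B^{\sigma(t)})^{-1}\rho$. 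Since $\rho=E_2\bar\lambda$ and $E_2=E_1\bigtriangleup^{\top}+rl$ (Eqn.~\eqref{eq:E2_def}), this residual produces cross terms proportional to $\|r\|\,\|E_1\|_G$ that cannot be dominated unless a negative-definite term $-\tfrac{\beta}{2}\|E_1\|_G^{2}$ is available; that term comes precisely from $\dot V_5$ via~\eqref{huwitx}. Your claim that the designed terms collapse everything into weight errors plus bounded residuals implicitly treats the $E_1$-dependent residual as bounded, which is circular: boundedness of $E_1$ is a consequence of, not an input to, the Lyapunov argument. The paper avoids this by working with the augmented vector $z=[\|E_1\|_G,\|\tilde\theta\|_G,\|\tilde\theta_w\|_{G_w},\|\tilde\theta_0\|_{G_0^*},\|r\|]^{\top}$ and verifying positive definiteness of the full coupling matrix $K$ via Sylvester's criterion. (A cascade/ISS argument exploiting that $E_1$ is driven by $r$ through the Hurwitz filter $\bigtriangleup$ could in principle substitute, but you would have to make it explicitly; as written, the step $\dot V\le-\alpha V+\mu$ does not follow.)

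A secondary discrepancy: you derive the dwell-time threshold in the continuous-decay form $\tau_a^*=\ln\mu_0/\alpha$, whereas the theorem invokes Eqn.~\eqref{eq:dwelltime}, i.e.\ $\tau_a^*>-\ln\mu/\ln(1-\varrho_0)$, obtained from the paper's discrete per-interval contraction $V^{\sigma}(x(t_{s+1}))\le(1-\varrho_0)V^{\sigma}(x(t_s))$ combined with the jump bound $\mu=\max\bar{\alpha}(\zeta)/\underline{\alpha}(\eta)$. Both are standard, but your proof should establish the condition actually referenced in the statement, with $\varrho_0$ and $\mu$ given by Eqns.~\eqref{eq:RhoZeroDefinition} and~\eqref{eq:MuDefinition}.
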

The CUUB/weight bounds are existential (used for analysis); implementation does not require numerical knowledge of these constants. Boundedness of $\hat\theta$ , $\hat\theta_0$ and $\hat\theta_w$ are enforced by the terms in the NN local tuning laws.
\begin{proof}
\textbf{Part 1:} For a high‐order, high‐dimensional, nonlinear, heterogeneous multi‐agent system, we extend the Lyapunov‐derivative summation and norm‐bounding methodology of \citep{lewis2013} to accommodate time-triggered switching communication topologies, collision and obstacle avoidance, and online disturbance estimation via adaptive neural networks. We consider the Lyapunov function:
\begin{equation}\label{allV}
    V^{\sigma} = V^{\sigma}_1 + V^{\sigma}_2 + V^{\sigma}_3 + V^{\sigma}_4 + V^{\sigma}_5 
\end{equation}
with each term defined to capture the dynamics, NN weight errors, and synchronization errors under the active topology. Specifically \( V^{\sigma}_1 = \frac{1}{2}r^{\top}Pr \), \(V^{\sigma}_2 = \frac{1}{2}\text{tr}\{\tilde{\theta}^{\top}G^{-1}\tilde{\theta}\}  \), \(V^{\sigma}_3 = \frac{1}{2}\text{tr}\{\tilde{\theta}_0^{\top}G_0^{*-1}\tilde{\theta}_0\}    \), \(V^{\sigma}_4 = \frac{1}{2}\text{tr}\{\tilde{\theta}_w^{\top}G_w^{-1}\tilde{\theta}_w\}   \), and \\ \mbox{\(V^{\sigma}_5 = \frac{1}{2}\text{tr}\{E_1 P_1(E_1)^{\top}\} \)}.
Taking the derivative of \( V^{\sigma}_1 \) and using Eqn.~\eqref{errdyn}, yields
\begin{multline} \label{v1dot}
\dot{V}^{\sigma}_1 = r^{\top} P \dot{r} = r^{\top} P \bigg [ \rho - (\nu_{1}L^{\sigma(t)} + \nu_{2}B^{\sigma(t)})(f(x) \\ + u + w - f_0(x_0,t)) \bigg ]. \notag
\end{multline}
Substituting Eqn.~\eqref{controllaw}, we have 
\begin{multline}
\dot{V}^{\sigma}_1 = r^{\top} P \bigg [\rho - (\nu_{1}L^{\sigma(t)} + \nu_{2}B^{\sigma(t)})[{f} \\ + {(D^{\sigma(t)}+B^{\sigma(t)})}^{-1}{\rho} - \hat{\theta}^{\top}{\phi} - \hat{\theta}_{w}^{\top}{\phi_{w}} +\hat{\theta}_0^{\top}{\phi_0} + r - {c} E_0^\top \\ - \Gamma^0 M^0 - \Gamma^1 M^c_I - \Gamma^2 M^c_0 + {w} - {f}_0  ] \bigg].
\end{multline}
Considering Eqn.~\eqref{fxes} through Eqn.~\eqref{fwesh}, Eqn.~\eqref{qdefinite} and invoking the property \( a^{\top} b = \text{tr}\{b \, a^{\top}\} \) after further simplification, we obtain \(\dot{V}^{\sigma}_1\) as
\begin{multline} \label{eq:v1_dot}
    \dot{V}^{\sigma}_1 = -  \frac{1}{2}r^{\top} Q r  - \text{tr}\{\tilde{\theta}^{\top}{\phi}r^{\top} P (D^{\sigma(t)} + B^{\sigma(t)})\} \\ + 
    \text{tr}\{\tilde{\theta}^{\top}{\phi} r^{\top} PA^{\sigma(t)}\} -
    \text{tr}\{\tilde{\theta}_w^{\top}{\phi}_w r^{\top} P(D^{\sigma(t)}+B^{\sigma(t)})\}  \\ + r^{\top} PA^{\sigma(t)}(D^{\sigma(t)} + B^{\sigma(t)})^{-1} \rho + \frac{1}{2}cr^{\top} Q E_0^\top \\ + \text{tr}\{ \tilde{\theta}_w^{\top}{\phi}_w r^{\top} PA^{\sigma(t)} \} + \text{tr}\{ \tilde{\theta}_0^{\top}{\phi}_0 r^{\top} P(D^{\sigma(t)}+B^{\sigma(t)}) \}  \\ - \text{tr} \{ \tilde{\theta}_0^{\top}{\phi}_0 r^{\top} PA^{\sigma(t)} \}  \\ - r^{\top} P(\nu_1L^{\sigma(t)} + \nu_2B^{\sigma(t)})(\varepsilon + \epsilon_w - \epsilon_0) \\ + r^{\top} P(\nu_{1}L^{\sigma(t)} + \nu_{2}B^{\sigma(t)})(\Gamma^0 M^0 + \Gamma^1 M^c_I + \Gamma^2 M^c_0).
\end{multline}
Differentiating \(V^{\sigma}_2 \) with respect to time results in
\begin{equation}
    \dot{V}^{\sigma}_2 = tr\{ \tilde{\theta}^{\top} G^{-1} \dot{\tilde{\theta}} \}. 
\end{equation}
Differentiating Eqn.~\eqref{eq:nn_1} and substituting \(\dot{\tilde{\theta}} = -\dot{\hat{\theta}}\) yields:
\begin{equation}\label{eq:v2_dot}
    \dot{V}^{\sigma}_2 = -tr\{ \tilde{\theta}^{\top} G^{-1} \dot{\hat{\theta}} \}, 
\end{equation}
Likewise, differentiating Eqn.~\eqref{eq:nn_2} and Eqn.~\eqref{eq:nn_3} yields:
\begin{equation}\label{eq:v3_dot}
    \dot{V}^{\sigma}_3 = -tr\{ \tilde{\theta}_0^{\top} G_0^{*-1} \dot{\hat{\theta}}_0 \}, 
\end{equation}
\begin{equation}\label{eq:v4_dot}
    \dot{V}^{\sigma}_4 = -tr\{ \tilde{\theta}_w^{\top} G_w^{-1} \dot{\hat{\theta}}_w \}. 
\end{equation}

The derivative of \(V^{\sigma}_5\) is
\(
    \dot{V}^{\sigma}_5 = tr\{\dot{E}_1P_1(E_1)^{\top}\}. 
\)
Using Eqn.~\eqref{eq:E2_def}, then,
\(
     \dot{V}^{\sigma}_5 = \frac{1}{2}tr\{ E_1 (\bigtriangleup^{\top}P_1 + P_1 \bigtriangleup)(E_1)^{\top}\} + tr\{rl^{\top}P_1(E_1)^{\top}\}.
     \)
Considering Eqn.~\eqref{huwitx}, \(\dot{V}^{\sigma}_5\) becomes
\begin{equation} \label{eq:v5_dot}
     \dot{V}^{\sigma}_5 = -\frac{\beta}{2} || E_1 ||_G^2 + \bar{\alpha}(P_1)||l||||r||||E_1||_G.
\end{equation}
Hence, combining Eqn.~\eqref{eq:v1_dot}, Eqn.~\eqref{eq:v2_dot}, Eqn.~\eqref{eq:v3_dot}, Eqn.~\eqref{eq:v4_dot} and Eqn.~\eqref{eq:v5_dot} into \( \dot{V}^{\sigma} = \dot{V}^{\sigma}_1 + \dot{V}^{\sigma}_2 + \dot{V}^{\sigma}_3 + \dot{V}^{\sigma}_4 + \dot{V}^{\sigma}_5 \) satisfies:
\begin{multline}\label{composite}
    \dot{V}^{\sigma} \le -\bigg [ \frac{1}{2}  \underline{\alpha}(Q) - \frac{\bar{\alpha}(P)\bar{\alpha}(A^{\sigma(t)})}{\underline{\alpha}(D^{\sigma(t)}+B^{\sigma(t)})}||\bar{\lambda}|| \bigg ] ||r||^2 
    \\
    + \bigg[ \frac{\bar{\alpha}(P)\bar{\alpha}(A^{\sigma(t)})}{\underline{\alpha}(D^{\sigma(t)}+B^{\sigma(t)})}||\bigtriangleup||_G||\bar{\lambda}||  + \bar{\alpha}(P_1) \bigg] ||r||||E_1||_G \\
    + \bar{\alpha}(P)\bar{\alpha}(\nu_1L + \nu_2B)T_M||r||
    - \kappa ||\tilde{\theta}||_G^2 -  \kappa_0 ||\tilde{\theta}_0||_{G^*_0}^2 
    \\ - \kappa_w ||\tilde{\theta}_w||_{G_w}^2 + \Phi_n \bar{\alpha}(P)\bar{\alpha}(A^{\sigma(t)})||\tilde{\theta}||_G||r|| 
    \\ + \Phi_{nw} \bar{\alpha}(P)\bar{\alpha}(A^{\sigma(t)})||\tilde{\theta}_w||_{G_w}||r|| \\ + \Phi_{n0} \bar{\alpha}(P)\bar{\alpha}(A^{\sigma(t)})||\tilde{\theta}_0||_{G^*_0}||r||  \\ +  \bar{\alpha}(P)\bar{\alpha}(\nu_1L + \nu_2B)T_N||r|| -\frac{\beta}{2} || E_1 ||_G^2 + \kappa \Theta_n||\tilde{\theta}||_G \\ + \kappa_w \Theta_{nw}||\tilde{\theta}_w||_{G_w} 
    + \kappa_0 \Theta_{n0}||\tilde{\theta}_0||_{G^*_0} + \frac{1}{2} c E_0 \underline{\alpha}(Q)||r||.
\end{multline}
In the interest of keeping this paper concise, we have abbreviated the full Lyapunov‐derivative summation calculation that leads to Eqn.~\eqref{composite}. Eqn.~\eqref{composite} is simply the result of adding the derivative bounds for $V_1$–$V_5$ and then using Cauchy–Schwarz, eigenvalue, and Assumption \ref{assumption1} bounds to control every trace and cross-term. Concretely, the $-\tfrac12\,r^\top Pr$ term and its worst-case coupling bounds produce the leading $\|r\|^2$ coefficient; the remaining consensus and potentials combine with the $\dot V_5$ estimate to yield the $\|r\|\|E_1\|_G$ group; substituting the adaptation laws into $\dot V_2$–$\dot V_4$ generates the $-\kappa\|\tilde\theta\|^2$, $-\kappa_0\|\tilde\theta_0\|^2$, and $-\kappa_w\|\tilde\theta_w\|^2$ terms along with their $\Phi\|\tilde\theta\|\|r\|$ and $\kappa\,\Theta\|\tilde\theta\|$ residuals; and invoking $\|\phi_i\|\le\Phi_n$, $\|\phi_0\|\le\Phi_{n0}$, $\|\phi_{iw}\|\le\Phi_{nw}$ ensures all remaining cross-terms can be absorbed into the defined $\Phi_n,\Theta_n,\Phi_{n0},\Theta_{n0},\Phi_{nw},\Theta_{nw}$ constants.

We rewrite Eqn.~\eqref{composite} in the simplified form,
\begin{equation} \label{modv}
    \dot{V}^{\sigma}= -V_z^{\sigma}(z) \le -z^{\top}Kz + \omega^{\top}z 
\end{equation}
with
\begin{equation*}
K = \begin{bmatrix}
    \frac{\beta}{2} & 0 & 0 & 0 & g \\
    0 & \kappa & 0 &  0 & \gamma_1 \\
    0 & 0 & \kappa_w & 0 & \gamma_2 \\
    0 & 0 & 0 & \kappa_0 & \gamma_3 \\
    g & \gamma_1 & \gamma_2 & \gamma_3 & \mu_1 \\
    \end{bmatrix},
\end{equation*}
\begin{align*}
    \omega &= \bigg [ 0,\kappa \Theta_n,\kappa_w \Theta_{nw}, \kappa_0 \Theta_{n0}, \Lambda \bigg]^{\top}, \\
     \Lambda &= \bar{\alpha}(P)\bar{\alpha}(\nu_1L + \nu_2B)(T_M+T_N) + \mu_2 , \\
    \mathfrak{h} &= \frac{\bar{\alpha}(P)\bar{\alpha}(A^{\sigma(t)})}{\underline{\alpha}(D^{\sigma(t)}+B^{\sigma(t)})}||\bar{\lambda}|| , \\
    \gamma_1 &=  -\frac{1}{2}\Phi_n\bar{\alpha}(P)\bar{\alpha}(A^{\sigma(t)}), \\ 
    \gamma_2  &=  -\frac{1}{2}\Phi_{nw}\bar{\alpha}(P)\bar{\alpha}(A^{\sigma(t)}), \\
    \gamma_3 &=  -\frac{1}{2}\Phi_{n0}\bar{\alpha}(P)\bar{\alpha}(A^{\sigma(t)}), \\ 
    \mu_1  &=  \frac{1}{2} \underline{\alpha}(Q) - \mathfrak{h} , \\
    \mu_2 &= \frac{1}{2} c E_0 \underline{\alpha}(Q), \\
    \mathbf{g}  &=  -\frac{1}{2} \bigg [ \frac{\bar{\alpha}(P)\bar{\alpha}(A^{\sigma(t)})}{\underline{\alpha}(D^{\sigma(t)}+B^{\sigma(t)})}||\bigtriangleup||_G||\bar{\lambda}|| + \bar{\alpha}(P_1) \bigg],
\end{align*}
\begin{equation}\label{eq:z_vector}
       z = \bigg[ ||E_1||_G, ||\tilde{\theta}||_G, ||\tilde{\theta}_w||_{G_w}, ||\tilde{\theta}_0||_{G^*_0}, ||r|| \bigg]^{\top}.
\end{equation}
\(V_z^{\sigma}(z)\) is positive definite whenever the following two conditions are met; \(K\) is positive definite and \(||z|| > \frac{||\omega||}{\underline{\alpha}(K)}\).

\noindent For the \textit{\textbf{first condition}} to be satisfied, according to Sylvester's criterion, we must ensure 
\( 
    \beta > 0,
\) 
\( 
    \frac{\beta}{2}\kappa > 0,
\) 
\( 
    \frac{\beta}{2}\kappa \kappa_w > 0,
\) 
\( 
    \frac{\beta}{2}\kappa \kappa_w \kappa_0 > 0,
\) 
and \( 
    \frac{\beta}{2}\kappa \kappa_w (\kappa_0  \mu_1 - \gamma_3^2) - \frac{\beta}{2}\kappa \gamma_2^2 
    \kappa_0 - \frac{\beta}{2}\gamma_1^2 \kappa_w  \kappa_0 - \mathbf{g}^2 \kappa \kappa_w \kappa_0 > 0,  
\)
with the isolation of \( \mu_1 \), yields
\begin{multline*}
    \mu_1 >
    \frac{1}{\frac{\beta}{2}\kappa \kappa_w \kappa_0} \Bigg[ \frac{\beta}{2}\kappa \kappa_w\gamma_3^2 + \frac{\beta}{2}\kappa \gamma_2^2 \kappa_0  + \frac{\beta}{2}\gamma_1^2 \kappa_w  \kappa_0 \\ +  \mathbf{g}^2 \kappa \kappa_w \kappa_0 \Bigg]
\end{multline*}
For the \textit{\textbf{second condition}} to be satisfied, we require \( ||z|| > B_d \), where
\begin{equation}
    B_d = \frac{||\omega||_1}{\underline{\alpha}(K)}
\end{equation}
with \( \omega = \bigg [ 0,\kappa \Theta_n,\kappa_w \Theta_{nw}, \kappa_0 \Theta_{n0}, \Lambda \bigg]^{\top} \) and \( ||\omega||_1 = \kappa \Theta_n + \kappa_w \Theta_{nw} + \kappa_0 \Theta_{n0} + \Lambda  \). Thus,
\begin{equation}
    B_d = \frac{\kappa \Theta_n + \kappa_w \Theta_{nw} + \kappa_0 \Theta_{n0} + \Lambda}{\underline{\alpha}(K)}
\end{equation}
If \( ||z|| > B_d \), then \(\dot{V}^{\sigma} \leq -V_z^{\sigma}(z)\) with \(V_z^{\sigma}(z)\) being positive definite. 
Singular values are defined for all matrix dimensions. Using the singular value bounds:

   $
     {\underline\sigma(P)}\,\| r\|^2 
     \;\le\;
     r^T P r
     \;\le\;
     {\bar\sigma(P)}\,\| r\|^2,
   $

   $
     \frac1{\bar\sigma(G)}\,\|\tilde \theta\|_G^2 
     \;\le\;
     \mathrm{tr}(\tilde \theta^T G^{-1}\tilde \theta)
     \;\le\;
     \frac1{\underline\sigma(G)}\,\|\tilde \theta\|_G^2,
   $

   $
     \frac1{\bar\sigma(G_0^{*})}\,\|\tilde \theta_0\|_G^2 
     \;\le\;
     \mathrm{tr}(\tilde \theta_0^T G_0^{*-1}\tilde \theta_0)
     \;\le\;
    \frac1{\underline\sigma(G_0^{*})}\,\|\tilde \theta_0\|_G^2,
   $

   $
     \frac1{\bar\sigma(G_w)}\,\|\tilde \theta_w\|_G^2 
     \;\le\;
     \mathrm{tr}(\tilde \theta_w^T G_w^{-1}\tilde \theta_w)
     \;\le\;
    \frac1{\underline\sigma(G_w)}\,\|\tilde \theta_w\|_G^2,
   $
   
   $
     \underline\sigma(P_1)\,\|E_1\|_G^2
     \;\le\;
     \mathrm{tr}(E_1^T P_1 E_1)
     \;\le\;
     \bar\sigma(P_1)\,\|E_1\|_G^2.
   $
Combining and halving gives
$
\tfrac{{\underline\sigma(P)}\,}2\,\| r\|^2 
\;+\;
\frac1{2\,\bar\sigma(G)}\,\|\tilde \theta\|_G^2
\;+\;
\frac1{2\,\bar\sigma(G_0^{*})}\,\|\tilde \theta_0\|_G^2
\;+\;
\frac1{2\,\bar\sigma(G_w)}\,\|\tilde \theta_w\|_G^2
\;+\;
\tfrac{\underline\sigma(P_1)\,}2\,\|E_1\|_G^2
     \;\le\;
     V^\sigma
     \;\le\;
\tfrac{{\bar\sigma(P)}\,}2\,\| r\|^2
\;+\;
\frac1{2\,\underline\sigma(G)}\,\|\tilde \theta\|_G^2
\;+\;
\frac1{2\,\underline\sigma(G_0^{*})}\,\|\tilde \theta_0\|_G^2
\;+\;
\frac1{2\,\underline\sigma(G_w)}\,\|\tilde \theta_w\|_G^2
\;+\;
\tfrac{\bar\sigma(P_1)\,}2\,\|E_1\|_G^2.
$
Considering Eqn.~\eqref{eq:z_vector}, the inequality simplifies to
\begin{equation} \label{zdef}
    \underline{\alpha}(\eta)||z||^2 \leq V^{\sigma} \leq \bar{\alpha}(\zeta)||z||^2,
\end{equation}
where
\(
\eta
\;=\;
\mathrm{diag}\Bigl(\tfrac{\underline\alpha(P)}{2},\tfrac{1}{2\bar\alpha(G)},\frac1{2\,\bar\sigma(G_0^{*})}\,,\tfrac{1}{2\bar\alpha(G_w)},\tfrac{\underline\alpha(P_1)}{2}\Bigr)
\)
and
\(
\zeta
\;=\;
\mathrm{diag}\Bigl(\tfrac{\bar\alpha(P)}{2},\tfrac{1}{2\underline\alpha(G)},\frac1{2\,\underline\sigma(G_0^{*})}\,,\tfrac{1}{2\underline\alpha(G_w)},\tfrac{\bar\alpha(P_1)}{2}\Bigr)
\).
Then for any initial \( z(t_0) \), there exists \( T_0 \) such that
\begin{equation} \label{zref}
    ||z(t)|| \leq \sqrt{\frac{\bar{\alpha}(\zeta)}{\underline{\alpha}(\eta)}}B_d, \quad \forall t \geq t_0 + T_0.
\end{equation}

Let \( \mathcal{Z} = \text{min}_{{||z||}\ge B_d} V^{\sigma}_z(z)\), then
\begin{equation}
    T_0 = \frac{V^{\sigma}(t_0) - \bar{\alpha}(\zeta)(B_d)^2}{\mathcal{Z}}.
\end{equation}
This implies \(r(t)\) is ultimately bounded, which yields \(e_i(t)\), \(e^n(t)\), and \(\delta^k\) CUUB, thus achieving synchronization over~\(\mathcal{G}\).

\textbf{Part 2:} We show that $x_i(t)$ is bounded \( \forall t \ge t_0\) and \( \forall i \in \mathcal{N}\). From Eqn.~\eqref{modv}:

\begin{equation} \label{cond2}
    \dot{V}^{\sigma} \le -\underline{\alpha}(K)||z||^2 + ||\omega||||z||.
\end{equation}
which, together with \(Eqn.~\eqref{zdef}\), we obtain
\begin{equation}
    \frac{d}{dt}(\sqrt{V^{\sigma}}) \le - \frac{\underline{\alpha}(K)}{2\bar{\alpha}(\zeta)}\sqrt{V^{\sigma}} + \frac{||\omega||}{2\sqrt{\underline{\alpha}(\eta)}}.
\end{equation}
Thus, it follows from \citep[Corollary 1.1]{CongWang2004}, that \(V^{\sigma}(t)\) remains bounded for all \(t \ge t_0\). This results, together with Assumption \ref{assp}, yields \(\delta_{i}^k = (x_{i}^k - \psi_{i}^k) - (x_{0}^k - \psi_{0}^k)\) being~CUUB, i.e., \(||x_i|| \le X_{\mathrm{init}}\) and \(||x_0|| \le X_{0, {\mathrm{init}}}\) are bounded~$\forall t \ge t_0$.
\end{proof}

\subsection{Proposed Average Dwell‐Time Requirement}\label{sec:tau_av}
To guarantee exponential convergence despite arbitrary switching, we compute two key scalars ($\mu$ the maximum jump‐factor of the Lyapunov function at any topology switch, and $\varrho_0$ the minimum continuous‐time decay rate of that Lyapunov function under each fixed topology) from the system matrices and then enforce their interplay via a minimum average‐dwell‐time bound. We choose the design parameter \(\bar\lambda\) in Eqn.~\eqref{eq:desparam} so that the error‐polynomial Eqn.~\eqref{eq:hurwitz_poly} is Hurwitz. $P$ and $Q$ the positive-definite matrices from Eqn.~\eqref{pdefinite} and Eqn.~\eqref{qdefinite}, and \(G,G_0^*,G_w\) the NN‐adaptation gains of Section~\ref{NN_Laws_Defn}. First, from our Lyapunov‐derivative analysis in Section~\ref{leader_follower_cons} under any topology \(\sigma\) we extract the worst‐case decay rate
\begin{equation} \label{eq:RhoZeroDefinition}
    \varrho_0 = \frac{1}{2}\underline{\alpha}(Q) - \frac{\bar{\alpha}(P)\bar{\alpha}(A^{\sigma(t)})}{\underline{\alpha}(D^{\sigma(t)}+B^{\sigma(t)})}\|\bar{\lambda}\|.
\end{equation}
Second, at each switch \(t_s\) the Lyapunov function may jump by at most a factor \(\mu\), determined by
\begin{equation} \label{eq:MuDefinition}
 \mu = \max \frac{\bar{\alpha}(\zeta)}{\underline{\alpha}(\eta)}. 
\end{equation}
where the diagonal entries of
\[
\eta
\;=\;
\mathrm{diag}\Bigl(\tfrac{\underline\alpha(P)}{2},\tfrac{1}{2\bar\alpha(G)},\frac1{2\,\bar\sigma(G_0^{*})}\,,\tfrac{1}{2\bar\alpha(G_w)},\tfrac{\underline\alpha(P_1)}{2}\Bigr)
\]
and
\[
\zeta
\;=\;
\mathrm{diag}\Bigl(\tfrac{\bar\alpha(P)}{2},\tfrac{1}{2\underline\alpha(G)},\frac1{2\,\underline\sigma(G_0^{*})}\,,\tfrac{1}{2\underline\alpha(G_w)},\tfrac{\bar\alpha(P_1)}{2}\Bigr)
\]
from Eqn.~\eqref{zdef} feed directly into the jump factor \(\mu\). Requiring that each switch’s worst‐case growth \(\mu\) be more than offset, on average, by the continuous decay yields the standard minimum average‐dwell‐time condition.
Using the values of $\varrho_0$ in Eqn.~\eqref{eq:RhoZeroDefinition} and $\mu$ in Eqn.~\eqref{eq:MuDefinition}, we adopt the minimum average dwell time condition Eqn.~\eqref{tau_average},
\begin{equation}\label{eq:dwelltime}
\tau_a^* > -\frac{\ln \mu}{\ln (1-\varrho_0)}.  
\end{equation}
Unlike previous dwell time frameworks \citep{Hespanha, Liberzon2003}, we derive expressions for the jump factor $\mu$ and decay rate $\varrho_0$ in a heterogeneous multiagent setting that simultaneously handles online NN‐based cancellation of unknown dynamics, collision/obstacle avoidance, and switching communication topologies. Embedding these tailored expressions into our adaptive, safety‐constrained, time‐based switching framework extends previous dwell time works to this MAS context.

\begin{proof}
Let \(\{t_s\}_{s=0}^{\infty}\) be the strictly increasing sequence of switching instants, where \(t_0 \ge 0\). At each \(t_s\), the active topology switches to \(\bar{\mathcal{G}}^{\sigma(t_s)}\). Assume for each topology \(\bar{\mathcal{G}}^{{\bar{c}}}\), there exists a continuously differentiable Lyapunov function \(V^{m}(x)\) (e.g., Eqn.~\eqref{allV}) that measures the consensus error of the agents under that topology. For brevity, at each switching instant \(t_s\), we write \(\sigma = \sigma(t_s)\). Thus, \(\mathcal{G}^{\sigma}\) will denote the active topology at \(t_s\). Following the discrete‐time decay condition of \citep[Theorem 2.19]{Rawlings_Mayne}, we evaluate the Lyapunov function across each switching instant. Specifically, we define
\begin{multline}
\Delta V^{\sigma}(x(t_s)) = V^{\sigma}(x(t_{s+1})) - V^{\sigma}(x(t_s)) \\ \leq -\varrho_{\sigma} V^{\sigma}(x(t_s)),
\end{multline}
assuming there exists \(\varrho_{\sigma} > 0\) (decay rate for the active topology \(\bar{\mathcal{G}}^{\sigma(t_s)}\)), we obtain
\begin{equation}\label{lyapfuncdec0}
V^{\sigma}(x(t_{s+1})) \leq (1 - \varrho_{\sigma}) V^{\sigma}(x(t_s)),
\end{equation} 
We define \(\varrho_0\) as the worst-case decay rate—the smallest \(\varrho_{\sigma}\) among all topologies.
\(
   \varrho_0 = \min_{\sigma \in \{1, \dots, M\}} \varrho_{\sigma},
\)
with the assumption \(\varrho_0 > 0\). This ensures that even under the least favorable topology, the Lyapunov function still decays. In particular, the per‐step decay factor is at least \((1 - \varrho_{0})\), so to guarantee stability irrespective of the active topology, we write
\begin{equation}\label{lyapfuncdec1}
V^{\sigma}(x(t_{s+1})) \leq (1 - \varrho_0) V^{\sigma}(x(t_{s-})).
\end{equation}
The derivative of the composite Lyapunov function Eqn.~\eqref{composite} is lower‐bounded by
\[
\dot{V}^{\sigma} \le -\left[\frac{1}{2}\underline{\alpha}(Q) - \frac{\bar{\alpha}(P)\bar{\alpha}(A^{\sigma(t)})}{\underline{\alpha}(D^{\sigma(t)}+B^{\sigma(t)})} \|\bar{\lambda}\| \right] \|r\|^2,
\]
where the first negative term is the decay term, re-written as
\[
\varrho_0 = \frac{1}{2}\underline{\alpha}(Q) - \frac{\bar{\alpha}(P)\bar{\alpha}(A^{\sigma(t)})}{\underline{\alpha}(D^{\sigma(t)}+B^{\sigma(t)})} \|\bar{\lambda}\|.
\]
At each switching instant, \(t_s\), the Lyapunov function may increase by at most a factor \(\mu \ge 1\). If the topology switches from \(\bar{\mathcal{G}}^{\sigma(t_{s-})}\) to \(\bar{\mathcal{G}}^{\sigma(t_s)}\), then 
\begin{equation}\label{increase_at_switch} V^{\sigma(t_s)}(x(t_s)) \leq \mu V^{\sigma(t_{s-})}(x(t_{s-})). \end{equation}
The composite Lyapunov function is bounded as given by Eqn.~\eqref{zdef}. At each switching instant, the system switches from a topology \(\sigma(t_s)\) with bounds \((\underline{\alpha}(\eta^{\sigma(t_s)}),\,\bar{\alpha}(\zeta^{\sigma(t_s)}))\) to another topology \(\sigma(t_{s-})\) with bounds \((\underline{\alpha}(\eta^{\sigma(t_{s-})}),\,\bar{\alpha}(\zeta^{\sigma(t_{s-})}))\), then the jump in the Lyapunov function is bounded by
\[
V^{\sigma(t_s)}(x(t_s)) \le \frac{\bar{\alpha}(\zeta^{\sigma(t_s)})}{\underline{\alpha}(\eta^{{\sigma(t_{s-})}})}\,V^{\sigma(t_{s-})}(x(t_{s-})).
\]
Thus, the jump‐factor is

\begin{equation} 
 \mu = \max \frac{\bar{\alpha}(\zeta)}{\underline{\alpha}(\eta)}. 
\end{equation}

Between successive switching times \(t_0\) and \(t_1\), the Lyapunov function decays at most by a factor \((1-\varrho_0)\), so 
\[
V^{\sigma(t_{1-})}\bigl(x(t_{1-})\bigr) \;\le\; \bigl(1 - \varrho_0\bigr)\,V^{\sigma(t_0)}\bigl(x(t_0)\bigr).
\]
Combining these inequalities at \(t_1\) gives
\[
V^{\sigma(t_1)}\bigl(x(t_1)\bigr) 
   \;\le\; \mu\,\bigl(1 - \varrho_0\bigr)\,V^{\sigma(t_0)}\bigl(x(t_0)\bigr). 
\]
Similarly, over the next interval \([t_1,\,t_2)\) and at \(t_2\), we obtain
 \[   V^{\sigma(t_2)}\bigl(x(t_2)\bigr) 
   \;\le\; \mu^2\,\bigl(1 - \varrho_0\bigr)^2\,V^{\sigma(t_0)}\bigl(x(t_0)\bigr). 
 \] 
Hence, each switching instant introduces at most a factor of \(\mu\), while each interval between switches provides a decay factor of \((1-\varrho_0)\).

Applying this recursively from \(s=1\) to \(s=N^{\sigma}(0,t)\) and noting that \(\sum_{s=1}^{N^{\sigma}(0,t)}(t_s - t_{s-1}) = t - t_0\) yields
\[
  V^{\sigma(t)}\bigl(x(t)\bigr)
  \;\le\;
  \bigl[\mu\,(1-\varrho_0)\bigr]^{N^{\sigma}(0,t)}\,
  V^{\sigma(t_0)}\bigl(x(t_0)\bigr).
\]
\(\{t_0,\, t_1,\, t_2,\, \dots,\, t_{N^{\sigma}(0,t)}\}\) are the switching instants and \( N^{\sigma}(t_0, t) \) is the number of switches over the interval \([t_0,t]\).
The number of switches in \([t_0,t]\) is bounded by \(N^{\sigma}(t_0, t) \) given by Eqn.~\eqref{numberswitches0}. In our weighted version, we define an average dwell time \(\tau_a^*\) that reflects the topology connectivity quality, giving
\begin{equation}\label{numswitches}
N^{\sigma}(0,t) \le N_0 + \frac{t}{\tau_a^*}.
\end{equation}
Rather than handling each \(V^{\sigma}(x(t))\) separately, we define the composite Lyapunov function
\begin{equation}\label{comlyap}
V(x(t)) = \max_{\sigma\in\mathcal{M}}\left\{ \iota^{\sigma(t)} V^{\sigma(t)}(x(t)) \right\},
\end{equation}
where \(\iota^{\sigma(t)}\) depends on the algebraic connectivity of the graph $\bar{\mathcal{G}}^{\sigma(t)}$ defined as the second-smallest eigenvalue of its Laplacian matrix $L^{\sigma}$, denoted as \(s_2(L^{\sigma})\). We choose
\[
\iota^{\sigma(t)}
\;=\;\frac{1}{1 + \gamma\,\bigl(\kappa_{\max}-s_2(L^{\sigma})\bigr)},
\]
with \(\gamma>0\).
Substituting the bound Eqn.~\eqref{numswitches} into Eqn.~\eqref{comlyap} yields:
\begin{align*}
  V(x(t))
  \;\le\; \\
  \Bigl(\max_\sigma\iota^\sigma\Bigr)\,
  \mu^{N_0}(1-\varrho_0)^{N_0}\,
  \Bigl[\mu^{\tfrac{1}{\tau_a^*}}(1-\varrho_0)\Bigr]^{t}\,
  V(x(0)).
\end{align*}
Hence, to guarantee \(V(x(t))\to0\) as \(t\to\infty\) we require
\[
  \mu^{\frac1{\tau_a^*}}\,(1-\varrho_0)\;<\;1,
\]
which rearranges to the average‐dwell‐time condition
\begin{equation} \label{tau_average}
  \tau_a^* \;>\; -\frac{\ln\mu}{\ln(1-\varrho_0)}.
\end{equation} 
\end{proof}

\newpage

\begin{proposition} \label{thm:BoundedInputs}
For a system with follower dynamics~\eqref{planta} and leader dynamics~\eqref{plantref} with $n \geq 3$, and under Assumptions \ref{assp}(1) and \ref{assumption1}, the control law \eqref{eq:ProposedControlLaw} generates a uniformly bounded input under the active communication topology, i.e., there exists finite bounds $u_n<\infty$ such that
\(
\|u(t)\|\le u_n,
\)
\(
\forall t\ge t_0.
\)

Moreover, the inter-agent, agent–leader, and agent–obstacle separations satisfy 
$\|x_i^1(t)-x_j^1(t)\|\ge \psi_{ij}^1>0,
\|x_i^1(t)-x_0^1(t)\|\ge \psi_{i0}^1>0,
\|x_i^1(t)-O_{\mathfrak{c}}\|\ge R>0$, for all $t\ge t_0$.
\end{proposition}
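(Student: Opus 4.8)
The statement bundles two interdependent assertions---uniform boundedness of $u$ and strict positivity of all separations---so the plan is to establish them jointly on the maximal interval on which every trajectory stays in the open safe set, and then to show that this interval is all of $[t_0,\infty)$. I would organize the nominal and the repulsive parts of the control separately, since only the latter can threaten boundedness, and only the former is directly controlled by Theorem~\ref{thm:unified}.

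First I would bound the ``nominal'' part of the input. Writing $u_i=u_i^d-u_i^0-u_i^c$ as in the decomposition preceding Eqn.~\eqref{controllaw}, the term $u_i^d$ collects $\rho_i/(d_i^{\sigma(t)}+b_{i0}^{\sigma(t)})$, the synchronization feedback $r_i$, the leader-tracking correction $c_iE_{i0}^\top$, and the three neural terms. By Theorem~\ref{thm:unified} the errors $r$, $E_1$ and the states $x_i$ are (cooperatively uniformly ultimately) bounded, so $\rho_i$, $r_i$ and $E_{i0}$ are bounded; the leakage terms $\kappa_i\hat\theta_i$, $\kappa_0\hat\theta_0$, $\kappa_{iw}\hat\theta_{iw}$ in the tuning laws Eqns.~\eqref{eq:ThetaItuning}--\eqref{eq:ThetaIWtuning} keep $\hat\theta_i,\hat\theta_0,\hat\theta_{iw}$ bounded, while Assumption~\ref{assumption1} bounds $\phi_i,\phi_0,\phi_{iw}$ by $\Phi_n,\Phi_{n0},\Phi_{nw}$. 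Hence $\|u_i^d\|$ admits a finite, topology-independent bound. The only terms that can blow up are the repulsion channels $u_i^0,u_i^c$, which stay finite precisely when the separations stay away from the singular radii of Eqns.~\eqref{col1}--\eqref{obs1} (namely $0$ for $m_{ij},m_{i0}$, and $\mathfrak L$ for $m_{i\mathfrak c}$).

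Next I would establish the separation bounds via the HOCBF certificates of Appendix~\ref{app:HOCBF}. Each constraint function $h_{ij}=\|x_i^1-x_j^1\|-\psi_{ij}^1$, $h_{i0}=\|x_i^1-x_0^1\|-\psi_{i0}^1$, $h_{i\mathfrak c}=\|x_i^1-O_{\mathfrak c}\|-R$ has relative degree $n$ with respect to $u_i$ in the Brunovsky chain Eqn.~\eqref{planta}. For each one I would form the order-$n$ barrier chain by successively differentiating $h$ and composing with class-$\mathcal K$ comparison functions, then verify that the control law Eqn.~\eqref{eq:ProposedControlLaw} renders the terminal (order-$n$) HOCBF inequality feasible on the boundary, so that the safe set $\{h\ge 0\}$ is forward invariant. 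The hypothesis $n\ge 3$ is exactly what makes this order-$n$ construction admissible, supplying enough derivatives in the chain for the class-$\mathcal K$ comparisons to close. Using the initial-separation lower bounds $\delta_{ij},\delta_{i0}$ together with $X_{\mathrm{init}}$ from Assumption~\ref{assp}(1) to place the initial state in the interior of each safe set, forward invariance then yields $\|x_i^1-x_j^1\|\ge\psi_{ij}^1$, $\|x_i^1-x_0^1\|\ge\psi_{i0}^1$, and $\|x_i^1-O_{\mathfrak c}\|\ge R$ for all $t\ge t_0$.

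Finally I would close the loop: on the safe set the separations exceed the activation radii of Eqns.~\eqref{col1}--\eqref{obs1}, so $m_{ij},m_{i0},m_{i\mathfrak c}$ are bounded (indeed they vanish on the interior and attain finite boundary values), whence $u_i^0,u_i^c$ are bounded and $\|u(t)\|\le u_n$ for all $t$. The main obstacle is the circularity between the two assertions: the forward-invariance argument presupposes that the realized control is well defined and finite (so the derivatives $\dot h,\ddot h,\dots$ exist), whereas the input bound presupposes safety to tame the potentials. I would break this with a first-exit/maximal-interval argument---on the largest interval $[t_0,T^*)$ where all trajectories lie in the open safe set, every control term is finite and the terminal HOCBF inequality holds, so the $h$'s cannot reach zero, contradicting minimality of $T^*$ unless $T^*=\infty$. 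The delicate quantitative step is verifying that the \emph{fixed} control law (rather than a QP-synthesized one) actually satisfies the boundary HOCBF inequality, i.e.\ that the gains $\Gamma^0,\Gamma^1,\Gamma^2$ and the repulsion strength $\varpi$ are chosen large enough relative to the bound on $u_i^d$.
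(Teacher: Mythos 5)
Your proposal is correct and follows essentially the same route as the paper: the published proof likewise bounds the non-repulsive part $u_i^d$ via Theorem~\ref{thm:unified}, Assumption~\ref{assumption1}, and the uniform lower bound on $d_i^{\sigma(t)}+b_{i0}^{\sigma(t)}$, and obtains the separation bounds from the HOCBF forward-invariance result (Corollary~\ref{corollary:HOCBF}), after which the repulsion scalars $m_{ij},m_{i0},m_{i\mathfrak c}$ are uniformly bounded. The two issues you flag as delicate are exactly where the appendix does the work: the ``fixed control law satisfies the boundary inequality'' step is discharged by the explicit gain rule~\eqref{eq:Gamma-rule-final}, while the circularity between safety and boundedness is left implicit in the paper (the HOCBF proof borrows state bounds from the Lyapunov analysis, and the input-bound proof borrows safety from the HOCBF corollary), so your maximal-interval argument is a worthwhile tightening rather than a departure.
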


\begin{proof}
    See Appendices~\ref{app:HOCBF} and \ref{app:BoundedInputs}.
\end{proof}

\section{Numerical Example} \label{NUMERICALEXAMPLE}
We consider a leader-follower formation with one leader and five followers, each with distinct nonlinear dynamics and external disturbances. In our 2D example we set $n=3$ and $p=2$, so that $\,x_i^1,x_i^2,x_i^3\in \mathbb {R}^2$ represent the position, velocity and acceleration vectors, respectively. Component‐wise, we write $x_{i}^{1} = \bigl[x_{i}^{1,1},\,x_{i}^{1,2}\bigr]^\top$, $x_{i}^{2} = \bigl[x_{i}^{2,1},\,x_{i}^{2,2}\bigr]^\top$ and $x_{i}^{3} = \bigl[x_{i}^{3,1},\,x_{i}^{3,2}\bigr]^\top$.
Figure \ref{fig:1} shows the information flow structure of the multiagent system under a switching communication topology (switches every 5 seconds). In our simulations we employed a 12-dimensional LIP basis for both the state and disturbance approximators.  
For the state‐dependent basis $\phi_i(x_i)$ we used
$\displaystyle
\phi_i(x_i)=\bigl[\,1,\;x_i^{1,1},\;x_i^{1,2},\;x_i^{2,1},\;x_i^{2,2},\;x_i^{3,1},\;x_i^{3,2},\;(x_i^{1,1})^2,\;(x_i^{1,2})^2, \\ \;(x_i^{2,1})^2,\;(x_i^{2,2})^2,\;(x_i^{3,1})^2\bigr]^\top
$
i.e.\ a constant term, the six state components, and the squares of the first five components (through the first acceleration entry).  For the time‐dependent disturbance basis $\phi_{iw}(t)$ we retained
$\displaystyle
\phi_{iw}(t)
=\bigl[
1,\;
\sin t,\;
\cos t,\;
\sin2t,\;
\cos2t,\;
\sin3t,\;
\cos3t,\;
\\ \sin^2t,\;
\cos^2t,\;
\sin t\,\cos t,\;
e^{-t},\;
t\,e^{-t}
\bigr]^\top.
$
This combination of polynomial, trigonometric, and exponential terms provides a rich yet compact dictionary for approximating unknown smooth nonlinearities and time-varying disturbances \citep{stevenbrunton2016}, keeps the adaptation law computationally tractable, and ensures each basis function is locally Lipschitz (as required in our stability analysis).
The dynamics of the leader agent are described by the following equations:

\noindent\textbf{Leader:}
\begin{align*}
\dot{x}_{0}^{1,1} = x_{0}^{2,1}, \quad \dot{x}_{0}^{1,2} = x_{0}^{2,2}; \quad
\dot{x}_{0}^{2,1} = x_{0}^{3,1}, \quad \dot{x}_{0}^{2,2} = x_{0}^{3,2};\\
\dot{x}_{0}^{3,1} = -\,x_{0}^{2,1}\;-\;2\,x_{0}^{3,1}\;+\;1\;+\;3\sin\bigl(2\pi\,x_{0}^{1,1}\bigr)\\
\quad -\;\tfrac{1}{3}\bigl(x_{0}^{1,1}+x_{0}^{2,1}-1\bigr)^{2}\,\bigl(x_{0}^{1,1}+4\,x_{0}^{2,1}+3\,x_{0}^{3,1}-1\bigr),\\
\dot{x}_{0}^{3,2} = -\,x_{0}^{2,2}\;-\;2\,x_{0}^{3,2}\;+\;1\;+\;3\sin\bigl(2\pi\,x_{0}^{1,2}\bigr)\\
\quad -\;\tfrac{1}{3}\bigl(x_{0}^{1,2}+x_{0}^{2,2}-1\bigr)^{2}\,\bigl(x_{0}^{1,2}+4\,x_{0}^{2,2}+3\,x_{0}^{3,2}-1\bigr).
\end{align*}
For each follower agent \(i\) (\(i = 1, \dots, 5\)), the common dynamics are:
\begin{align*}
\dot x_{i}^{1,1} = x_{i}^{2,1}, \quad
\dot x_{i}^{1,2} = x_{i}^{2,2}; \quad
\dot x_{i}^{2,1} = x_{i}^{3,1}, \quad
\dot x_{i}^{2,2} = x_{i}^{3,2};
\end{align*}

\noindent\textbf{Agent 1:}  
\begin{align*}
\dot x_{1}^{3,1} = -\,x_{1}^{2,1}\sin\bigl(x_{1}^{2,1}\bigr)
  -\cos^2\bigl(x_{1}^{3,1}\bigr)
  -0.1\,(x_{1}^{2,1})^2
  \\ -0.05\,(x_{1}^{3,1})^2
  +u_{1}^{1}
  +w_{x_1},\\
\dot x_{1}^{3,2} = -\,x_{1}^{2,2}\sin\bigl(x_{1}^{2,2}\bigr)
  -\cos^2\bigl(x_{1}^{3,2}\bigr)
  -0.1\,(x_{1}^{2,2})^2
  \\-0.05\,(x_{1}^{3,2})^2
  +u_{1}^{2}
  +w_{y_1}.
\end{align*}

\noindent\textbf{Agent 2:}  
\begin{align*}
\dot x_{2}^{3,1} = -\,(x_{2}^{2,1})^2
  +0.01\,x_{2}^{3,1}
  -0.01\,(x_{2}^{2,1})^3
  \\-0.1\,(x_{2}^{3,1})^2
  -0.1\,x_{2}^{2,1}
  +u_{2}^{1}
  +w_{x_2},\\
\dot x_{2}^{3,2} = -\,(x_{2}^{2,2})^2
  +0.01\,x_{2}^{3,2}
  -0.01\,(x_{2}^{2,2})^3
  \\-0.1\,(x_{2}^{3,2})^2
  -0.1\,x_{2}^{2,2}
  +u_{2}^{2}
  +w_{y_2}.
\end{align*}

\noindent\textbf{Agent 3:}  
\begin{align*}
\dot x_{3}^{3,1} = x_{3}^{2,1}
  +\sin\bigl(x_{3}^{3,1}\bigr)
  -0.05\,(x_{3}^{2,1})^2
  -0.05\,(x_{3}^{3,1})^2
  \\+u_{3}^{1}
  +w_{x_3},\\
\dot x_{3}^{3,2} = x_{3}^{2,2}
  +\sin\bigl(x_{3}^{3,2}\bigr)
  -0.05\,(x_{3}^{2,2})^2
  -0.05\,(x_{3}^{3,2})^2
  \\+u_{3}^{2}
  +w_{y_3}.
\end{align*}

\noindent\textbf{Agent 4:}  
\begin{align*}
\dot x_{4}^{3,1} &= 
  -3\,(x_{4}^{1,1}+x_{4}^{2,1}-1)^{2}\,(x_{4}^{1,1}+x_{4}^{2,1}+x_{4}^{3,1}-1)
  \\ &-x_{4}^{2,1}
  -x_{4}^{3,1} 
  +0.5\sin\bigl(2\pi x_{4}^{1,1}\bigr)
  +\cos\bigl(2\pi x_{4}^{1,1}\bigr)
 \\ & +u_{4}^{1}
  +w_{x_4},\\
\dot x_{4}^{3,2} &=
  -3\,(x_{4}^{1,2}+x_{4}^{2,2}-1)^{2}\,(x_{4}^{1,2}+x_{4}^{2,2}+x_{4}^{3,2}-1)
  \\ &-x_{4}^{2,2}
  -x_{4}^{3,2}
  +0.5\sin\bigl(2\pi x_{4}^{1,2}\bigr)
  +\cos\bigl(2\pi x_{4}^{1,2}\bigr)
 \\ & +u_{4}^{2}
  +w_{y_4}.
\end{align*}

\noindent\textbf{Agent 5:}  
\begin{align*}
\dot x_{5}^{3,1} = -\,x_{5}^{2,1}
  -0.05\,(x_{5}^{3,1})^2
  +x_{5}^{1,1}
  +w_{x_5},\\
\dot x_{5}^{3,2} = -\,x_{5}^{2,2}
  -0.05\,(x_{5}^{3,2})^2
  +u_{5}^{2}
  +w_{y_5}.
\end{align*}

\begin{figure}[ht]
    \centering
    \begin{tikzpicture}[>=stealth, scale=0.90]

    \node[node_style] (0) at (-1,2) {0};
    \node[node_style] (1) at (0,2.5) {1};
    \node[node_style] (2) at (1,2) {2};
    \node[node_style] (3) at (1,1) {3};
    \node[node_style] (4) at (0,0.5) {4};
    \node[node_style] (5) at (-1,1) {5};

    \draw[edge_style] (0) -- (1);
    \draw[edge_style_1] (1) -- (2);
    \draw[edge_style_1] (1) -- (5);
    \draw[edge_style_1] (1) -- (3);
    \draw[edge_style_1] (2) -- (3);
    \draw[edge_style_1] (4) -- (5);
    \draw[edge_style] (0) -- (5);
    \draw[edge_style_1] (5) -- (3);
    \draw[edge_style_1] (2) -- (4);

    \node at (0, 3.25) {\(\bar{G}^{\sigma(t_1)}\)};

    \node[node_style] (0b) at (3,2) {0};
    \node[node_style] (1b) at (4,2.5) {1};
    \node[node_style] (2b) at (5,2) {2};
    \node[node_style] (3b) at (5,1) {3};
    \node[node_style] (4b) at (4,0.5) {4};
    \node[node_style] (5b) at (3,1) {5};

    \draw[edge_style] (0b) -- (1b);
    \draw[edge_style] (1b) -- (2b);
    \draw[edge_style_1] (3b) -- (2b);
    \draw[edge_style] (0b) -- (5b);
    \draw[edge_style_1] (1b) -- (4b);
    \draw[edge_style_1] (1b) -- (5b);
    \draw[edge_style_1] (5b) -- (4b);
    \draw[edge_style_1] (2b) -- (4b);
    \draw[edge_style] (5b) -- (3b);

    \node at (4,3.25) {\(\bar{G}^{\sigma(t_2)}\)};

    \node[node_style] (0d) at (3,-2) {0};
    \node[node_style] (1d) at (4,-1.5) {1};
    \node[node_style] (2d) at (5,-2) {2};
    \node[node_style] (3d) at (5,-3) {3};
    \node[node_style] (4d) at (4,-3.5) {4};
    \node[node_style] (5d) at (3,-3) {5};

    \draw[edge_style] (0d) -- (1d);
    \draw[edge_style] (0d) -- (5d);
    \draw[edge_style_1] (1d) -- (5d);
    \draw[edge_style_1] (1d) -- (2d);
    \draw[edge_style_1] (3d) -- (5d);
    \draw[edge_style_1] (4d) -- (5d);
    \draw[edge_style_1] (2d) -- (3d);
    \draw[edge_style_1] (2d) -- (4d);

    \node at (4,-0.75) {\(\bar{G}^{\sigma(t_3)}\)};

    \node[node_style] (0c) at (-1,-2) {0};
    \node[node_style] (1c) at (0,-1.5) {1};
    \node[node_style] (2c) at (1,-2) {2};
    \node[node_style] (3c) at (1,-3) {3};
    \node[node_style] (4c) at (0,-3.5) {4};
    \node[node_style] (5c) at (-1,-3) {5};

    \draw[edge_style] (0c) -- (1c);
    \draw[edge_style] (0c) -- (5c);
    \draw[edge_style_1] (1c) -- (2c);
    \draw[edge_style_1] (3c) -- (4c);
    \draw[edge_style_1] (1c) -- (5c);
    \draw[edge_style_1] (1c) -- (3c);
    \draw[edge_style_1] (2c) -- (5c);
    \draw[edge_style] (2c) -- (4c);
    \draw[edge_style_1] (2c) -- (3c);
    \draw[edge_style] (5c) -- (4c);


\node (A) [draw=blue, fit= (0) (1) (2) (3) (4) (5), inner sep=-2mm, ultra thick, fill=blue!10, fill opacity=0.2, ellipse] {};
\node [xshift=5ex, rotate=-90, blue] at (A.east) {};

\node (B) [draw=blue, fit= (0b) (1b) (2b) (3b) (4b) (5b), inner sep=-2mm, ultra thick, fill=blue!10, fill opacity=0.2, ellipse] {};
\node [xshift=5ex, rotate=-90, blue] at (A.east) {};

\node (C) [draw=blue, fit= (0c) (1c) (2c) (3c) (4c) (5c), inner sep=-2mm, ultra thick, fill=blue!10, fill opacity=0.2, ellipse] {};
\node [xshift=5ex, rotate=-90, blue] at (A.east) {};

\node (D) [draw=blue, fit= (0d) (1d) (2d) (3d) (4d) (5d), inner sep=-2mm, ultra thick, fill=blue!10, fill opacity=0.2, ellipse] {};
\node [xshift=5ex, rotate=-90, blue] at (A.east) {};

\path[-{Latex[scale=2]}]
(A.north east) edge[bend left=25, sloped, anchor=center, above, blue]  node {} (B.north west);
\path[-{Latex[scale=2]}]
(B.south east) edge[bend left=25, sloped, anchor=center, above, blue]  node {} (D.north east);
\path[-{Latex[scale=2]}]
(D.south west) edge[bend left=25, sloped, anchor=center, above, blue]  node {} (C.south east);
\path[-{Latex[scale=2]}]
(C.north west) edge[bend left=20, sloped, anchor=center, above, blue]  node {} (A.south west);

    \node at (0,-0.75) {\(\bar{G}^{\sigma(t_4)}\)};

    \end{tikzpicture}
    \caption{The considered switching communication topology of the augmented graph \(\bar{\mathcal{G}}\) in Section~\ref{NUMERICALEXAMPLE} illustrating different topologies at switching times \(t_1\) to \(t_4\). Leader is denoted by \(0\) and follower agents 1, 2, 3, 4 and 5}
    \label{fig:1}
\end{figure}
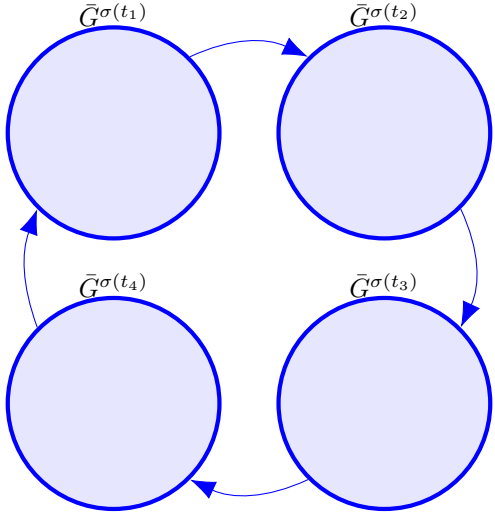

\begin{figure}[htbp]
    \centering
    \begin{subfigure}[b]{0.410\textwidth}
    \centering
    \includegraphics[width=2.9in]
    {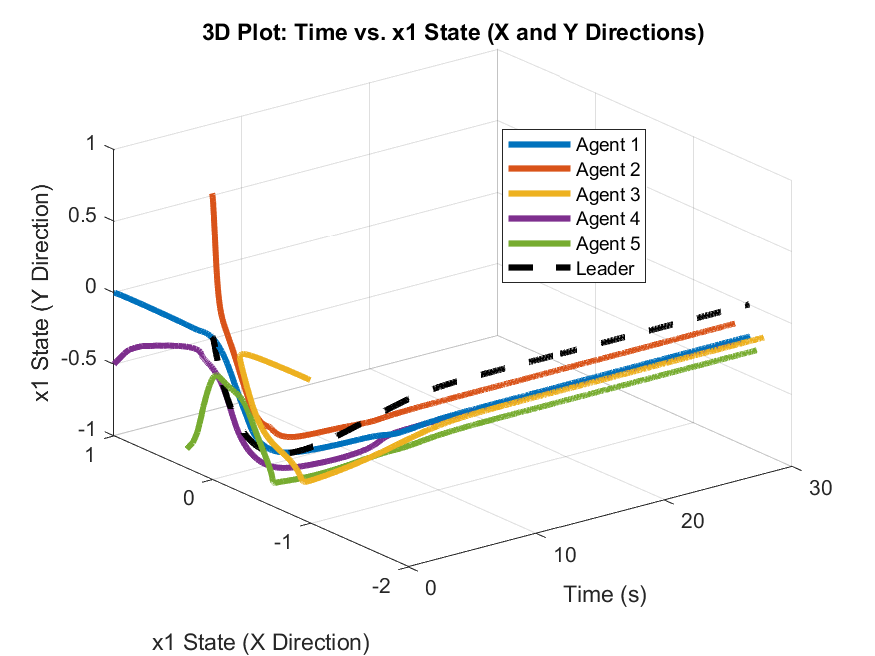}
    \caption{Leader and Follower Agents $x_i^1$ states components}
    \label{fig:2}
    \end{subfigure}
    \hfill
    \begin{subfigure}[b]{0.410\textwidth}
    \centering
    \includegraphics[width=2.9in]{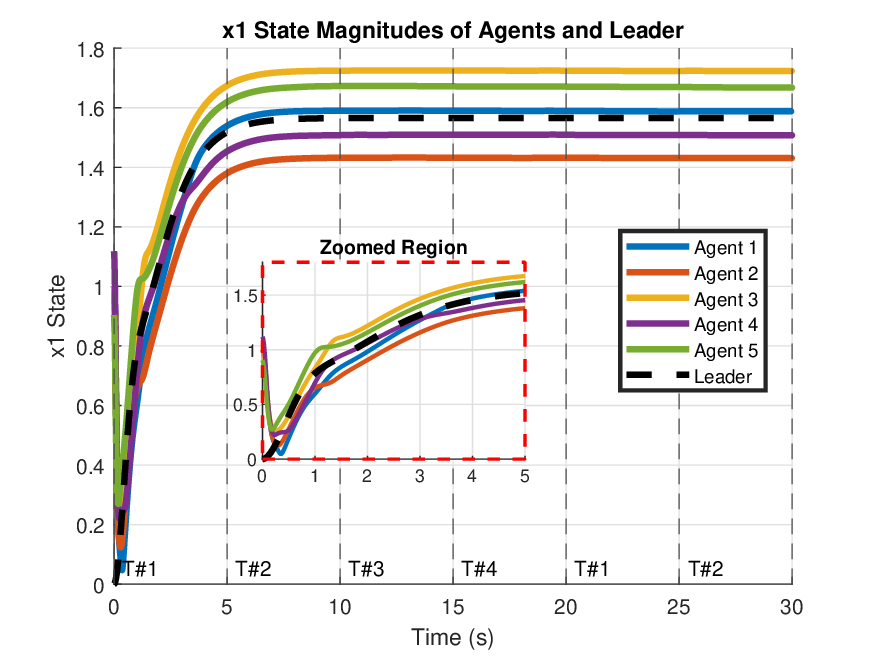}
    \caption{Leader and Follower Agents $x_i^1$ state magnitude}
    \label{fig:3}
    \end{subfigure}
    \caption{The corresponding evolution of the positional states $x_i^1 \in \mathbb{R}^2$, for the \mbox{leader $i=0$}, and follower agents  $i \in \{1,2,\cdots,5\}$}
    \label{fig:combined0}
\end{figure}
\begin{figure}[htbp]
    \begin{subfigure}[b]{0.410\textwidth}
    \centering
    \includegraphics[width=2.9in]{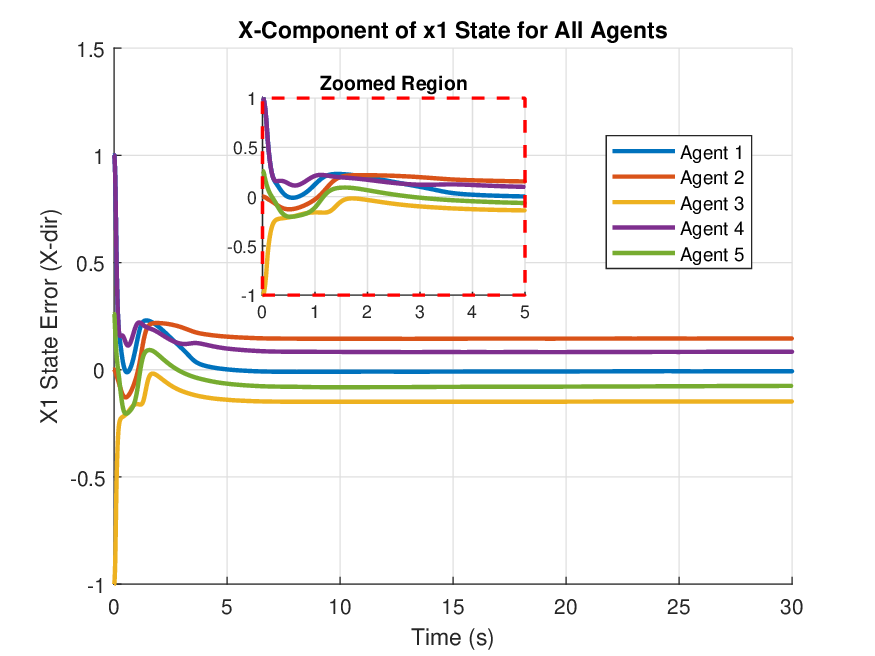}
    \caption{Relative $x_i^1$ state Error between Agents and Leader in x-direction}
    \label{fig:5}
    \end{subfigure}
            \hfill
    \begin{subfigure}[b]{0.410\textwidth}
    \centering
    \includegraphics[width=2.9in]{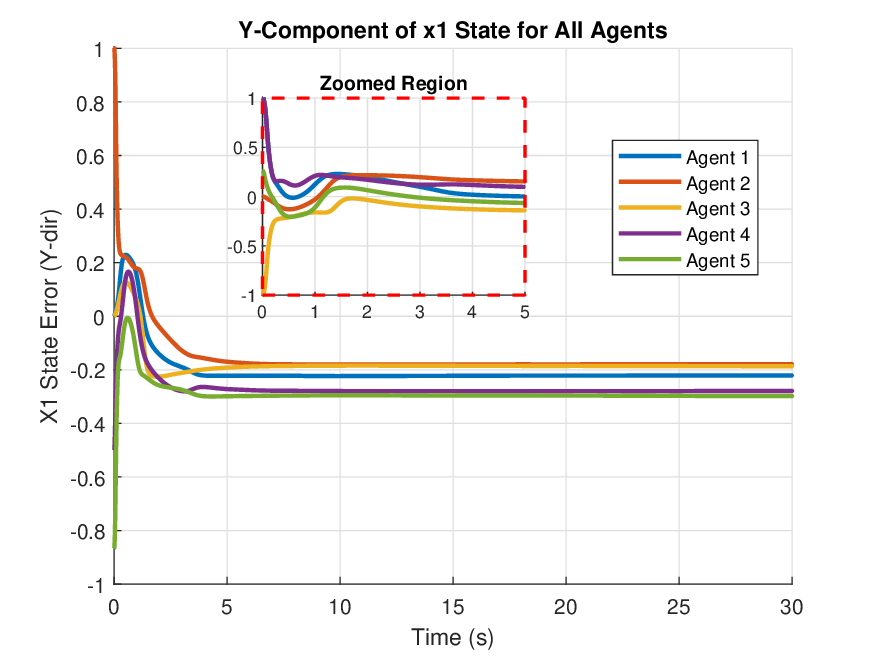}
    \caption{Relative $x_i^1$ state Error between Agents and Leader in y-direction}
    \label{fig:6}
    \end{subfigure}
    \caption{Time evolution of the relative $x_i^1 \in \mathbb{R}^2$ state error (agents vs.\ leader): (a) $x$-direction; (b) $y$-direction.}
    \label{fig:combined1}
\end{figure}
In this numerical example, Agent 0 (leader) guides five followers with unique nonlinear dynamics and external disturbances over alternating communication topologies (see Figure \ref{fig:1}). Switching between graphs impacts dynamics and coordination; however, the system remains globally stable as agents converge toward the leader's trajectory.

Figures \ref{fig:2} and \ref{fig:3} shows significant \(x_i^1\) positional state fluctuations during switching, especially with less connected graphs (e.g., \(\bar{G}^{\sigma(t_3)}\)), highlighting the effect of reduced communication. Nonetheless, the followers stabilize over time. Figure \ref{fig:3} displays the active topology at 5-second intervals, labeled as \(T\#\). Additionally, Figure \ref{fig:5} demonstrates a rapid decline in relative \(x_i^1\) state errors between agents and the leader, confirming effective formation maintenance and tracking despite disturbances. 

In summary, the multiagent system exhibits global stability and convergence to the leader’s trajectory under switching communication topologies, mitigating disturbances and ensuring effective error minimization and control stabilization in dynamic environments.

\section{Conclusion}\label{CONCLUSION}
This paper has presented a distributed adaptive control strategy for uncertain leader-follower multiagent systems operating under time-triggered switchings of communication topologies. By addressing the challenges posed by heterogeneous nonlinear dynamics, unknown parameters in dynamics, external disturbances, and evolving communication structures, the proposed framework ensures adaptive synchronization, formation maintenance, and stability in dynamic environments. The integration of neural network-based estimation techniques with adaptive tuning laws allows agents to effectively handle uncertainties in their dynamics and disturbances, while potential functions guarantee collision and obstacle avoidance.

Using a composite Lyapunov analysis framework that accounts for switching communication topologies, we formally established the stability and convergence properties of the proposed control laws under time-triggered communication transitions. The average dwell-time condition served as a critical constraint for ensuring smooth inter-topology transitions while preserving global asymptotic stability. Numerical simulations further demonstrated the efficacy of the approach, highlighting its ability to achieve coordinated motion and to suppress relative position errors even under reduced connectivity scenarios.

To broaden the applicability and enhance the practicality of our framework, several avenues for future work are apparent. First, while our analysis focuses on systems whose dynamics decompose into integrator chains plus locally Lipschitz nonlinearities, extending the method to encompass rigid‐body models, input delays, or more intricate coupling structures would be valuable. Second, stability relies on an average‐dwell‐time constraint on topology switching; scenarios involving prolonged or arbitrary disconnects fall outside our guarantees. Third, incorporating control barrier functions for collision and obstacle avoidance will guard against potential chattering or oscillations induced by repulsive potentials, ensuring that control inputs remain within hardware limits. Finally, exploring reduced‐order or sparsely parameterized basis function sets for the neural approximators may strike an even better balance between approximation accuracy and computational load, facilitating deployment on resource‐constrained platforms. Further research directions include extensions to event-triggered switching of communication topologies, incorporating learning-based methods for topology optimization, and real-world implementations in applications such as autonomous vehicles, UAV swarms, and robotic teams. Collectively, these enhancements promise to extend our theoretical guarantees into a wider range of real‐world multi‐agent applications while addressing the computational and practical constraints inherent in distributed systems.

\bibliographystyle{asmems4}

\bibliography{asme2e}

\appendix       
\section{Safety Assurance and Invariance Analysis}
\noindent This section provides a rigorous, formal proof of the safety guarantees embedded in the proposed control architecture. The core of the analysis relies on establishing the forward invariance of specifically defined ``safe sets" in the state space. By proving that agent trajectories, once initiated within these sets, will never leave them, we can certify that inter-agent collisions, agent-leader collisions, and agent-obstacle collisions are all systematically avoided for all time, even under the influence of bounded disturbances and switching communication topologies.

\subsection{High-Order Control Barrier Functions}\label{app:HOCBF}

\noindent To formally establish the safety guarantees, we employ the methodology of High-Order Control Barrier Functions (HOCBFs). HOCBFs are a powerful extension of standard Control Barrier Functions (CBFs) designed to handle systems where the control input's effect on the safety-critical output is indirect—that is, for systems with a high relative degree. This tool is exceptionally well-suited for the control-affine  agent dynamics~\eqref{planta}, which is rewritten in the form
\begin{equation}\label{eq:controlaffine1}
   \dot x_i \;=\; F_x(x_i)+F_u \,u_i+F_w \,w_i(t), 
\end{equation}
$
  F_x(x_i)=\begin{bmatrix}x_i^2\\ x_i^3\\ \vdots\\ f_i(x_i)\end{bmatrix},\quad
  F_u=\begin{bmatrix}0\\0\\ \vdots\\ 1\end{bmatrix},\quad
  F_w=\begin{bmatrix}0\\0\\ \vdots\\ 1\end{bmatrix},
$
\\[9pt]
\noindent where, as before, $ x_i = [ x_i^1,  x_i^2, \cdots,  x_i^n ]^\top \in \mathbb{R}^{ n {\cdot {\pd} }}$, $u_i, w_i\in\mathbb R^p,$ and $f_i:\mathcal D\subset\mathbb R^{n \cdot {\pd}}\to\mathbb R^{p}$ is locally Lipschitz. The vector field $F_x:\mathcal D\to\mathbb R^{np}$ is locally Lipschitz (it inherits this from $f_i$), while $F_u, F_w$ are constant matrix Lipschitz continuous with Lipschitz constant 0 and the unknown disturbance satisfies $\|w_i(t)\|\le w_n$ for all $t\ge0$.

\begin{definition}[Relative degree \citep{Xiao2019,Cortez2022}]
A continuously differentiable function smooth $h\in C^{\,\mathfrak{d}}$ is said to have (least) relative degree $\mathfrak{d}$ w.r.t.\ $(F_x,{F_u})$ whenever $L_{F_u} L_{F_x}^{\,\upsilon} h \equiv 0$ for $\upsilon=1,\ldots,\mathfrak{d}-2$ and $L_{F_u} L_{F_x}^{\,\mathfrak{d}-1} h \not\equiv 0$, where $L_{F_x}h:=\nabla h\,F_x$, $L_{F_u}h:=\nabla h\,F_u$, and $L_{F_w}h:=\nabla h\,F_w$ denote Lie derivatives of $h$ with respect to $F_x$, $F_u$ and $F_w$, respectively. We assume the disturbance channel is matched in the same sense, i.e., $L_{F_w} L_{F_x}^{\,\upsilon} h \equiv 0$ for $\upsilon=1,\ldots,\mathfrak{d}-2$ and $
L_{F_w} L_{F_x}^{\,\mathfrak{d}-1} h \not\equiv 0.$ 
\end{definition}
Let the safe set be $\mathcal C:=\{x_i:\,h(x)\ge0\}$ and we define the HOCBF recursion with class-$\mathcal K$ functions $\chi_\upsilon$:
\begin{equation}\label{eq:psi-recursion}
{\vartheta}_0(x_i):=h(x_i),\,\, 
{\vartheta}_{k}(x_i):=\dot{\vartheta}_{\upsilon-1}(x_i)+\chi_{\upsilon}({\vartheta}_{\upsilon-1}(x_i)),
\end{equation}
with $\upsilon=1,\ldots,\mathfrak{d}-1$. We set $\mathcal C_\upsilon:=\{x_i:{\vartheta}_{\upsilon-1}(x_i)\ge0\}$ and define the admissible region 
$\mathcal X_{\text{adm}}:=\bigcap_{\upsilon=1}^{\mathfrak{d}-1}\mathcal C_\upsilon$.
Note that the total derivative along Eqn.~\eqref{eq:controlaffine1} is
$\dot{\vartheta}_\upsilon=L_{F_x}{\vartheta}_\upsilon+L_{F_u}{\vartheta}_\upsilon\,u_i+L_{F_w}{\vartheta}_\upsilon\,w_i$.

\begin{definition}[Forward invariance \citep{Cortez2022}]\label{def:forward-invariance}
Consider the disturbed control–affine system \eqref{eq:controlaffine1}
and the safe set $\mathcal C:=\{x:\,h(x)\ge 0\}$ with the admissible domain
$\mathcal X_{\rm adm}:=\bigcap_{k=1}^{\mathfrak d}\{\,\vartheta_{k-1}(x)\ge 0\,\}$.
We say that $\mathcal C$ is forward invariant under the closed–loop
policy $u(\cdot)$ if, for every initial condition $x(0)\in \mathcal C\cap \mathcal X_{\rm adm}$
and for every measurable disturbance $w(\cdot)$ satisfying $\|w(t)\|\le w_n$ i.e.,
the corresponding solution $x(t)$ exists for all $t\ge 0$ and
\(
x(t)\in \mathcal C, \,\, \forall t\ge 0.
\)
\end{definition}

\begin{assumption}\label{ass:rd-matching}
For the system~\ref{eq:controlaffine1} and a barrier $h\in C^{\mathfrak d}$ of least relative degree $\mathfrak d$ with respect to $(F_x,F_u)$, the disturbance channel $F_w$ is matched in the relative–degree sense, i.e., $
L_{F_w}L_{F_x}^{\,\upsilon}h(x)\ \equiv\ 0,\, \upsilon=0,1,\ldots,\mathfrak d-2,
$. In other words, the disturbance does not appear in the Lie–derivative chain of $h$ before the control does.
Equivalently, $w$ affects $h$ at order $\mathfrak d$, the same order at which $u$ enters.
\end{assumption}

\begin{lemma}\label{lemma:HOCBF}
For the control-affine system \eqref{eq:controlaffine1}, let $h\in C^{\mathfrak d}(\mathbb{R}^n)$ be a control barrier function with relative degree~$\mathfrak d$ with respect to the input channel $F_u$. Define
\[
\vartheta_0(x_i):=h(x_i),\quad 
\vartheta_{\nu}(x_i):=\dot{\vartheta}_{\nu-1}(x_i)+\chi_{\nu}\!\big(\vartheta_{\nu-1}(x_i)\big).
\]

Then under the Assumptions \ref{assp} and \ref{ass:rd-matching}, the safe set $\mathcal{C}:=\{x\in\mathbb{R}^n:\ h(x)\ge 0\}$ is forward invariant
if there exists $u_i\in\mathcal{U}$ such that
\begin{multline}\label{eq:robust-hocbf-compact}
L_{F_x}\vartheta_{\mathfrak d-1}(x_i)\;+\;L_{F_u}\vartheta_{\mathfrak d-1}(x_i)\,u_i-\big\|L_{F_w}\vartheta_{\mathfrak d-1}(x_i)\big\|w_n\\
+\chi_{\mathfrak d}\!\big(\vartheta_{\mathfrak d-1}(x_i)\big)\ \ge\ 0,
\end{multline}
for all $x_i\in\mathcal{X}_{\rm adm}$.
Moreover, on the boundary $\{x\in\mathbb{R}^n:\ \vartheta_0(x)=\cdots=\vartheta_{\mathfrak d-1}(x)=0\}$, \eqref{eq:robust-hocbf-compact} reduces to
\begin{equation}\label{eq:boundary-hocbf}
L_{F_x}^{\mathfrak d} h(x_i)+L_{F_u} L_{F_x}^{\mathfrak d-1} h(x_i)u_i-\big\|L_{F_w} L_{F_x}^{\mathfrak d-1} h(x_i)\big\|w_n \ge 0.
\end{equation}
\end{lemma}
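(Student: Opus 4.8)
The plan is to reduce forward invariance of $\mathcal C$ to a cascade of scalar differential inequalities along the HOCBF chain $\vartheta_0,\dots,\vartheta_{\mathfrak d-1}$, each closed by a Nagumo/comparison argument, with the unknown disturbance absorbed by a worst-case bound only at the top of the chain. Throughout I would fix an initial condition $x_i(0)\in\mathcal C\cap\mathcal X_{\rm adm}$ and a measurable $w_i(\cdot)$ with $\|w_i(t)\|\le w_n$, as in Definition~\ref{def:forward-invariance}, and work along the resulting closed-loop trajectory of \eqref{eq:controlaffine1}.

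First I would treat the top level $\nu=\mathfrak d-1$. Along \eqref{eq:controlaffine1} the total derivative is $\dot\vartheta_{\mathfrak d-1}=L_{F_x}\vartheta_{\mathfrak d-1}+L_{F_u}\vartheta_{\mathfrak d-1}\,u_i+L_{F_w}\vartheta_{\mathfrak d-1}\,w_i$. Bounding the disturbance channel by Cauchy--Schwarz, $L_{F_w}\vartheta_{\mathfrak d-1}\,w_i\ge -\|L_{F_w}\vartheta_{\mathfrak d-1}\|\,w_n$, and substituting the robust HOCBF condition \eqref{eq:robust-hocbf-compact} yields the key scalar inequality $\dot\vartheta_{\mathfrak d-1}\ge -\chi_{\mathfrak d}(\vartheta_{\mathfrak d-1})$. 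Since $\chi_{\mathfrak d}$ is class-$\mathcal K$ (so $\chi_{\mathfrak d}(0)=0$), at any instant where $\vartheta_{\mathfrak d-1}=0$ we get $\dot\vartheta_{\mathfrak d-1}\ge 0$; the comparison lemma (equivalently, the subtangentiality/Nagumo condition) then gives $\vartheta_{\mathfrak d-1}(x_i(t))\ge 0$ for all $t\ge 0$.

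Next I would cascade downward for $\nu=\mathfrak d-2,\dots,0$. Here the matching Assumption~\ref{ass:rd-matching} is essential: because $w$ (and $u$) enter $h$ only at order $\mathfrak d$, the recursion \eqref{eq:psi-recursion} gives the disturbance-free identity $\dot\vartheta_{\nu}=\vartheta_{\nu+1}-\chi_{\nu+1}(\vartheta_{\nu})$. Using the already-established $\vartheta_{\nu+1}\ge 0$, this again reduces to $\dot\vartheta_{\nu}\ge-\chi_{\nu+1}(\vartheta_{\nu})$, and the same comparison argument propagates nonnegativity inward. At $\nu=0$ this delivers $h(x_i(t))=\vartheta_0\ge 0$ for all $t$, i.e.\ $\mathcal C$ is forward invariant; forward completeness of the trajectory follows since it remains in $\mathcal C\cap\mathcal X_{\rm adm}$, together with the state boundedness from Assumption~\ref{assp} and the local Lipschitzness of \eqref{eq:controlaffine1}.

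Finally, for the boundary identity I would argue by induction on the set $\{\vartheta_0=\cdots=\vartheta_{\mathfrak d-1}=0\}$ that each class-$\mathcal K$ correction and its derivative vanishes there, so that $\vartheta_\nu=L_{F_x}^{\nu}h$ and, after one further Lie derivative, $L_{F_x}\vartheta_{\mathfrak d-1}=L_{F_x}^{\mathfrak d}h$, $L_{F_u}\vartheta_{\mathfrak d-1}=L_{F_u}L_{F_x}^{\mathfrak d-1}h$, and $L_{F_w}\vartheta_{\mathfrak d-1}=L_{F_w}L_{F_x}^{\mathfrak d-1}h$; with $\chi_{\mathfrak d}(0)=0$ this collapses \eqref{eq:robust-hocbf-compact} to \eqref{eq:boundary-hocbf}. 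I expect this boundary bookkeeping to be the main obstacle: one must verify that the relative-degree and matching hypotheses propagate through every stage of the recursion so that no lower-order $L_{F_u}$ or $L_{F_w}$ terms survive, and that the chain-rule terms $\chi_{\nu}'(\vartheta_{\nu-1})\,\dot\vartheta_{\nu-1}$ drop out on the boundary rather than introducing spurious control or disturbance dependence.
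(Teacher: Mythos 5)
Your argument is correct, but it is worth noting that the paper does not actually prove Lemma~\ref{lemma:HOCBF}: its ``proof'' is a one-line deferral to \cite{Xiao2019,Cortez2022,Xiao2022}. What you have written is a faithful reconstruction of the standard argument from those references --- the Xiao--Belta cascade of comparison inequalities $\dot\vartheta_{\nu}\ge-\chi_{\nu+1}(\vartheta_{\nu})$ propagated from $\nu=\mathfrak d-1$ down to $\nu=0$, with the disturbance robustified only at the top level via the worst-case bound $L_{F_w}\vartheta_{\mathfrak d-1}\,w_i\ge-\|L_{F_w}\vartheta_{\mathfrak d-1}\|\,w_n$, which is exactly where Assumption~\ref{ass:rd-matching} earns its keep. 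Your boundary reduction is also right: since each $\vartheta_{\nu}$ for $\nu\le\mathfrak d-2$ is a smooth function of $h,L_{F_x}h,\dots,L_{F_x}^{\nu}h$ alone, one has $L_{F_u}\vartheta_{\mathfrak d-1}=L_{F_u}L_{F_x}^{\mathfrak d-1}h$ and $L_{F_w}\vartheta_{\mathfrak d-1}=L_{F_w}L_{F_x}^{\mathfrak d-1}h$ identically, and on the set $\{\vartheta_0=\cdots=\vartheta_{\mathfrak d-1}=0\}$ the chain-rule residues $\chi_{\nu}'(\vartheta_{\nu-1})\,L_{F_x}\vartheta_{\nu-1}$ vanish because $L_{F_x}\vartheta_{\nu-1}=\vartheta_{\nu}-\chi_{\nu}(\vartheta_{\nu-1})=0$ there. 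Two small points you should make explicit if you write this up in full: (i) condition \eqref{eq:robust-hocbf-compact} is only assumed on $\mathcal X_{\rm adm}$, so the top-level comparison step presupposes the trajectory has not yet left $\mathcal X_{\rm adm}$; the clean fix is a first-exit-time contradiction argument rather than running the cascade levels strictly in sequence; and (ii) the comparison lemma for $\dot y\ge-\chi(y)$ needs the $\chi_{\nu}$ to be locally Lipschitz (not merely class-$\mathcal K$) to avoid uniqueness pathologies, a hypothesis the cited references impose and the paper leaves implicit. Neither affects the substance of your proof.
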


\begin{proof}
    See \cite{Xiao2019,Cortez2022,Xiao2022}.
\end{proof}

\begin{corollary}\label{corollary:HOCBF}
Under the conditions of Lemma~\ref{lemma:HOCBF}, there exist explicit gain conditions for the input~\eqref{eq:ProposedControlLaw} that render the safety sets
\begin{align*}
\mathcal S_{ij} &:=\{\,x:\|x_i^{1}-x_j^{1}\|\ge \psi_{ij}\,\},
\\
\mathcal S_{i0} &:=\{\,x:\|x_i^{1}-x_0^{1}\|\ge \psi_{i0}\,\},
\\
\mathcal S_{i\mathfrak c} &:=\{\,x:\|x_i^{1}-\Omega\|\ge \psi_{i\mathfrak c}\,\}
\end{align*}
forward invariant, thereby safety with both collision and obstacle avoidance is achieved.

\end{corollary}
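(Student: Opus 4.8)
The plan is to cast each of the three barriers $h_{ij}=\|x_i^1-x_j^1\|-\psi_{ij}^1$, $h_{i0}=\|x_i^1-x_0^1\|-\psi_{i0}^1$, and $h_{i\mathfrak c}=\|x_i^1-O_{\mathfrak c}\|-R$ into the HOCBF framework of Lemma~\ref{lemma:HOCBF} and then to exhibit a threshold on the repulsion gains $\Gamma^0,\Gamma^1,\Gamma^2$ above which the robust HOCBF inequality~\eqref{eq:robust-hocbf-compact} holds. First I would observe that, because each barrier depends only on the position block $x_i^1$ while the control and disturbance both enter the dynamics~\eqref{eq:controlaffine1} at the $n$-th integrator (so $F_u=F_w=[0,\dots,0,1]^\top$), every barrier has least relative degree $\mathfrak d=n$; the hypothesis $n\ge 3$ of Proposition~\ref{thm:BoundedInputs} guarantees $\mathfrak d\ge 3$. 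The identity $F_u=F_w$ immediately verifies the matching Assumption~\ref{ass:rd-matching}, since $L_{F_w}L_{F_x}^{\upsilon}h\equiv 0$ for $\upsilon=0,\dots,n-2$ and $L_{F_w}L_{F_x}^{n-1}h=L_{F_u}L_{F_x}^{n-1}h$: the disturbance enters $h$ at exactly the same order, and with the same coefficient, as the input. For a norm-based barrier this coefficient is the unit separation direction $\nabla_{x_i^1}h$, so the disturbance penalty $\|L_{F_w}L_{F_x}^{n-1}h\|\,w_n$ collapses to $w_n$.

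Next I would build the recursion~\eqref{eq:psi-recursion} with linear class-$\mathcal K$ functions $\chi_\upsilon(s)=\alpha_\upsilon s$, $\alpha_\upsilon>0$, so that $\vartheta_{n-1}$ is a fixed linear combination of $h$ and its first $n-1$ total derivatives. By Lemma~\ref{lemma:HOCBF}, forward invariance of each safe set $\mathcal S_{ij}$, $\mathcal S_{i0}$, $\mathcal S_{i\mathfrak c}$ follows once the single inequality~\eqref{eq:robust-hocbf-compact} is satisfied for all $x_i\in\mathcal X_{\rm adm}$, and on the boundary this collapses to~\eqref{eq:boundary-hocbf}. Thus the task reduces to showing that the control~\eqref{eq:ProposedControlLaw} realizes a $u_i$ for which
\[
L_{F_x}^{n}h+L_{F_u}L_{F_x}^{n-1}h\,u_i-\bigl\|L_{F_w}L_{F_x}^{n-1}h\bigr\|w_n+\chi_n(\vartheta_{n-1})\ \ge\ 0 .
\]

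The decisive step is to split $u_i$ into its bounded part and its repulsive part and to play the unbounded growth of the repulsion against the bounded remainder. On $\mathcal X_{\rm adm}$ every state is bounded: Theorem~\ref{thm:unified} (Part 2) gives CUUB boundedness of all agent and leader states, and Assumption~\ref{assp} bounds $f_i$, $f_0$, and $w$. Hence the drift $L_{F_x}^{n}h$ (an explicit polynomial in the relative position, velocity, and higher derivatives, which I would expand once), the disturbance penalty $w_n$, the class-$\mathcal K$ terms, and the contribution of the non-repulsive channels of $u_i$ are all uniformly bounded by a constant $\mathcal M$ on $\mathcal X_{\rm adm}$. By contrast the repulsive terms $m_{ij},m_{i0},m_{i\mathfrak c}$ of~\eqref{col1}--\eqref{obs1} diverge as the separation approaches its threshold, while $L_{F_u}L_{F_x}^{n-1}h=\nabla_{x_i^1}h$ is the outward separation normal; since $\Gamma^0,\Gamma^1,\Gamma^2$ are positive definite and the repulsion is directed along $+\nabla_{x_i^1}h$, the term $L_{F_u}L_{F_x}^{n-1}h\,u_i$ contributed by the repulsion is nonnegative and grows without bound as $h\to 0^+$. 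Consequently there is an explicit gain threshold, of the form $\underline\alpha(\Gamma)\,\varpi\ge\mathcal M$ near the boundary, under which the displayed inequality holds, and Lemma~\ref{lemma:HOCBF} then certifies forward invariance of all three safe sets. Strict positivity of the separations, together with the uniform input bound of Proposition~\ref{thm:BoundedInputs}, yields the stated inequalities $\|x_i^1-x_j^1\|\ge\psi_{ij}^1$, $\|x_i^1-x_0^1\|\ge\psi_{i0}^1$, and $\|x_i^1-O_{\mathfrak c}\|\ge R$.

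I expect the main obstacle to be two intertwined bookkeeping tasks: (i) expanding the $n$-th order Lie derivative $L_{F_x}^{n}h$ of a Euclidean-norm barrier, whose higher derivatives mix the neighbor's velocity and acceleration blocks, and showing each resulting term is dominated on $\mathcal X_{\rm adm}$ using the simultaneous boundedness furnished by Theorem~\ref{thm:unified}; and (ii) certifying the sign alignment, namely that the repulsion channel maps, through $\Gamma^0,\Gamma^1,\Gamma^2$, onto the outward separation direction $\nabla_{x_i^1}h$, so that its HOCBF contribution is genuinely nonnegative rather than merely large in magnitude. A secondary check is that the initial data lie in $\mathcal X_{\rm adm}$, which follows from the positive initial-separation bounds $\delta_{ij},\delta_{i0}$ and an appropriate choice of the gains $\alpha_\upsilon$.
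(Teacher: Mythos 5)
Your overall architecture matches the paper's: both arguments cast each barrier into the HOCBF framework of Lemma~\ref{lemma:HOCBF}, note that $h$ depends only on positions so the relative degree is $n$ with matched disturbance, reduce to the boundary inequality~\eqref{eq:boundary-hocbf}, split the input into a repulsive part and a bounded remainder, and extract a gain condition. However, the mechanism you rely on to close the argument is a genuine gap: you claim that $m_{ij}$, $m_{i0}$, $m_{i\mathfrak c}$ ``diverge as the separation approaches its threshold'' and that the repulsive contribution ``grows without bound as $h\to 0^+$.'' That is false for the potentials as defined in the paper. From~\eqref{col1}, $m_{ij}=\varpi/\|x_i^1-x_j^1\|$ on $\|x_i^1-x_j^1\|\le\psi_{ij}^1$, so at the safe-set boundary $h_{ij}=0$ (i.e.\ separation equal to $\psi_{ij}^1$) the repulsion takes the \emph{finite} value $\varpi/\psi_{ij}^1$; it diverges only as the separation tends to zero, which is deep inside the unsafe region. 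For the obstacle potential~\eqref{obs1} the situation is worse for your argument: at the boundary $\|x_i^1-O_{\mathfrak c}\|=R$ of $\mathcal S_{i\mathfrak c}$ the numerator $R^2-\|x_i^1-O_{\mathfrak c}\|^2$ vanishes, so $m_{i\mathfrak c}=0$ there. Consequently the domination argument ``finite nuisance $\mathcal M$ versus unbounded repulsion'' does not apply, and indeed if it did, no gain threshold would be needed at all --- the very fact that the corollary asserts \emph{explicit gain conditions} signals that the repulsion available at the boundary is finite and must be scaled up by $\Gamma$ to beat the drift, disturbance, and non-repulsive input terms.

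The paper's proof does exactly this quantitative work that your proposal skips. It evaluates the pairwise projection $n_{ij}^\top(u_i-u_j)$, uses the antisymmetry $n_{ji}=-n_{ij}$ together with $m_{ji}=m_{ij}$ to obtain the doubled term $2\Gamma^1_{ij}m_{ij}=2\Gamma^1_{ij}\varpi/\psi_{ij}$ at the boundary, bounds the cross-terms from other simultaneously active pairs by a constant $\mathfrak R_{ij}$ (these are not sign-definite, contrary to your ``genuinely nonnegative'' claim, which only holds for the isolated pair), bounds the non-repulsive channels by $\mathfrak E_{ij}$ and the drift by $\mathfrak D^{\mathrm{nom}}_{ij}$, and then solves the resulting scalar inequality for the explicit threshold~\eqref{eq:Gamma-rule-final}, valid under $2\varpi>\mathfrak R_{ij}\psi_{ij}$. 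To repair your proposal you would need to replace the divergence argument with this finite boundary evaluation (and note that the leader/obstacle cases require the analogous, not identical, bookkeeping, since only one side of those pairs is controlled and the obstacle potential vanishes at $R$).
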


\begin{proof}
Due to the similarity of the analysis, we only prove that the set $\mathcal S_{ij}$  (no collisions with other agents) is forward invariant under our control law Eqn.~\eqref{eq:ProposedControlLaw1} with bounded disturbances. Demonstrations of forward invariance of the sets $\mathcal S_{i0}$ (no collisions with leader) and $\mathcal S_{i\mathfrak c}$ (no collisions with obstacle) follow the same procedure.
We define
\begin{align*}
h_{ij}(x)&:= \|x_i^{1}-x_j^{1}\|-\psi_{ij},\quad
\ell_{ij}:=x_i^{1}-x_j^{1},\\
& s_{ij}:=\|\ell_{ij}\|,\quad
n_{ij}:=\frac{\ell_{ij}}{s_{ij}}\ (s_{ij}>0).
\end{align*}
The barrier is \(h_{ij}(x)=s_{ij}-\psi_{ij}\), so \(h_{ij}\) depends only on the position blocks
\(x_i^{1},x_j^{1}\) with constant \(\psi_{ij}\). Hence all other partials (w.r.t.\ \(x_i^{k},\,k\ge2\)) are zero. Therefore,
\begin{align*}
ds_{ij} &= \tfrac{1}{2}(\ell_{ij}^\top \ell_{ij})^{-1/2}\,d(\ell_{ij}^\top \ell_{ij})
   = \frac{1}{\|\ell_{ij}\|}\,\ell_{ij}^\top\,d\ell_{ij},\\
d\ell_{ij} &= d x_i^{1}-d x_j^{1},\\
\Rightarrow\quad
ds_{ij} &= n_{ij}^\top\,d x_i^{1}\;-\;n_{ij}^\top\,d x_j^{1}.
\end{align*}
By the definition of the gradient (as the row mapping differentials to \(ds\)),
\( \frac{\partial s}{\partial x_i^{1}}=n_{ij}^\top,
        \frac{\partial s}{\partial x_j^{1}}=-\,n_{ij}^\top
\)
and
\(
\frac{\partial h_{ij}}{\partial x_i^{1}}=n_{ij}^\top,
        \frac{\partial h_{ij}}{\partial x_j^{1}}=-\,n_{ij}^\top.
\)
Since $dh_{ij} = ds_{ij}$ (\(\psi_{ij}\) is constant),   $dx_{i}=\dot x_i^1=x_i^2$ and $dx_{j}=\dot x_j^1=x_j^2$, then
\begin{equation}\label{eq:Lf-h}
L_{F_x} h_{ij}=\dot h_{ij}=n_{ij}^\top(x_i^2-x_j^2).
\end{equation}
Differentiating $n_{ij}=\ell_{ij}/s_{ij}$ to get
\begin{align}
\dot s_{ij}&=\frac{\ell_{ij}^\top(x_i^2-x_j^2)}{s_{ij}}=n_{ij}^\top(x_i^2-x_j^2), \notag \\
\dot n_{ij}&=\frac{1}{s_{ij}}\Big(I-n_{ij}n_{ij}^\top\Big)(x_i^2-x_j^2).
\end{align}
Differentiating Eqn.~\eqref{eq:Lf-h} once more,
\begin{align}
L_{F_x}^2 h_{ij}&=\ddot h_{ij} \nonumber\\
&=\dot n_{ij}^\top(x_i^2-x_j^2)+n_{ij}^\top(x_i^3-x_j^3)\nonumber\\
&=\frac{1}{s_{ij}}(x_i^2-x_j^2)^\top\!\Big(I-n_{ij}n_{ij}^\top\Big)(x_i^2-x_j^2) \nonumber \\ &+n_{ij}^\top(x_i^3-x_j^3).
\label{eq:Lf2-h}
\end{align}
Here $\big(I-n_{ij}n_{ij}^\top\big)$ is the orthogonal projector onto the subspace tangent to the safety sphere; the first term of Eqn.~\eqref{eq:Lf2-h} is nonnegative and encodes the rotation of $n_{ij}$.
For $k=1,\ldots,n-1$, we get the general form
\begin{equation}\label{eq:Lfk-h-pattern}
L_{F_x}^{k} h_{ij}\;=\; n_{ij}^\top\!\big(x_i^{k+1}-x_j^{k+1}\big)\;+\;O_{ij}^{(k)}(x),
\end{equation}
where each $O_{ij}^{(k)}(x)$ collects the terms built from $n_{ij},\dot n_{ij},\ddot n_{ij},\ldots$ times lower-order differences $x_i^k-x_j^k$ also written in a finite sum of products of the form
$(\partial^q n_{ij}/\partial t^q) \cdot(x_i^k-x_j^k)$.
Equation \eqref{eq:Lf2-h} is the base case $k=2$. The step $k\to k+1$ follows by differentiating Eqn.~\eqref{eq:Lfk-h-pattern}, applying the product rule to $n_{ij}$ and using $\dot x_i^{k+1}=x_i^{k+2}$.

On the operating set used in the Lyapunov stability analysis in Parts 1–2 earlier, all states are bounded and $s_{ij}\ge \psi_{ij}>0$. Therefore there exists a finite constant $M_{ij}$ such that
\begin{equation}
 |O_{ij}^{(n-1)}(x)|\ \le\ M_{ij} 
\end{equation}
Because $h_{ij}$ depends only on positions, the system has relative degree $n$ w.r.t. each agent’s input/disturbance. For $k=1,\dots,n-1$, then
$
L_{F_u}L_{F_x}^{k-1} h_{ij}\equiv 0, L_{F_w} L_{F_x}^{k-1} h_{ij}\equiv 0,
$
since $F_u$ and $F_w$ act only in the $x^n$ coordinates, but $L_{F_x}^{k}h_{ij}$ depends at most on $x^{1:k+1}$.
At $k=n-1$ we get a term that is in $x_i^n,x_j^n$:
\begin{align}
L_{F_x}^{\,n-1} h_{ij}
= n^\top(x_i^{n}-x_j^{n}) + O_{ij}^{(n-1)}(x).
\end{align}
At order $n$ using Eqn.~\eqref{eq:controlaffine1}, we get
\begin{align}
x_i^{n}-x_j^{n}
=(F_x-{F_x}_j)
+(u_i-u_j)
+(w_i-w_j).
\end{align}
Therefore using the boundary condition Eqn.~\eqref{eq:boundary-hocbf},
\begin{equation}\label{eq:bnd-HOCBF}
L_{F_x}^{\,n} h_{ij} + L_{F_u} L_{F_x}^{\,n-1} h_{ij}\,u - \big\|L_d L_{F_x}^{\,n-1}h_{ij}\big\|\,{w_n,}_{ij}\ \ge\ 0,
\end{equation} we get
\begin{align}\label{eq:nth-der}
&L_{F_x}^{\,n}h_{ij} + L_{F_u} L_{F_x}^{n-1} h_{ij}\,u + \|L_{F_w} L_{F_x}^{\,r-1}h(x)\|\,w_n\; \notag\\
&= n_{ij}^\top(F_x-{F_x}_j) \notag \\
&+ n_{ij}^\top(u_i-u_j) +  n_{ij}^\top(w_i-w_j) +  {O}^{(n-1)}_{ij} \ge 0.
\end{align}
Specifically,
\begin{align}\label{eq:driftchannel}
    L_{F_x}^{\,n}h_{ij} &= n_{ij}^\top(F_x-{F_x}_j)+  O_{ij}^{(n-1)}(x)
\end{align} 
\begin{align}\label{eq:inputchannel}
    L_{F_u} L_{F_x}^{\,n-1}h_{ij}\,u &= n_{ij}^\top(u_i-u_j) 
\end{align}
\begin{align}\label{eq:disturbancechannel} 
L_{F_w} L_{F_x}^{\,n-1}h_{ij} &= n_{ij}^\top(w_i-w_j)
\end{align}
On the boundary $s_{ij}=\psi_{ij}>0$ and within the bounded set guaranteed by Lyapunov analysis in Part~1-2 above, there exists a finite bound
\begin{equation}\label{eq:Dnom}
\big|\,L_{F_x}^{\,n}h_{ij}\,\big|\ \le\ \mathfrak{D}^{\mathrm{nom}}_{ij}.
\end{equation}
Using the boundary condition Eqn.~\eqref{eq:boundary-hocbf} and $\|n_{ij}\|=1$, Eqn.~\eqref{eq:nth-der} becomes the scalar inequality
\begin{align}\label{eq:bnd-scalar}
&n_{ij}^\top(u_i-u_j)\ \ge\ {w_n,}_{ij} - L_{F_x}^{\,n} h_{ij}, \\
&{w_n,}_{ij}\ \ge\ \sup_{t}\big|\,n_{ij}^\top(w_i-w_j)\,\big|. \notag
\end{align}
Splitting Eqn.~\eqref{eq:ProposedControlLaw} into the collision repulsive parts between agent $i$ and $j$, agent $i$ and leader $0$, and agent $i$ and obstacle $\mathfrak{c}$,
\begin{align*}
    u_i^0 = \Gamma^0_{\mathfrak{c}}  m_{i\mathfrak{c}} n_{i\mathfrak{c}}, \,   
     u_i^{c_1} = \Gamma^1_{ij}  m_{ij} n_{ij},   \,  
    u_i^{c_2} =  \Gamma^2_{i0}  m_{i0} n_{i0}.
\end{align*}
The rest of the non-repulsive term is $u_i^d,$
\begin{align}
       u_i^d &= \frac{\rho_i}{d_i^{\sigma(t)}+b_{i0}^{\sigma(t)}} - \hat{\theta}_i^{\top}{\phi}_i - \hat{\theta}_{iw}^{\top}{\phi_{iw}} +\hat{\theta}_0^{\top}{\phi_0} + r_i \notag \\ & - {c}_i E_{i0}^\top. 
\end{align} 
Focusing on the collision avoidance repulsive term $u_i^{c_1}$ and the rest of the non-repulsive term $u_i^d$, the RHS of Eqn.~\eqref{eq:bnd-scalar} becomes
\begin{align}\label{eq:proj-raw}
n_{ij}^\top(u_i-u_j)
&= n_{ij}^\top(u_i^{\mathrm{col}}-u_j^{\mathrm{col}})
   + n_{ij}^\top(u_i^{\mathrm{rest}}-u_j^{\mathrm{rest}})\nonumber\\
&=\Gamma^1_{ij}\,n_{ij}^\top\!\Big(\sum_{q\ne i}m_{iq}n_{iq}-\sum_{q\ne j}m_{jq}n_{jq}\Big) \nonumber
   \\ &+ n_{ij}^\top(u_i^{\mathrm{rest}}-u_j^{\mathrm{rest}})\nonumber\\
&=\Gamma^1_{ij}\,n_{ij}^\top\!\Big(\underbrace{m_{ij}n_{ij}}_{q=j}
+\sum_{q\ne i,j}m_{iq}n_{iq}
\nonumber \\ &-\underbrace{m_{ji}n_{ji}}_{q=i}
-\sum_{q\ne j,i}m_{jq}n_{jq}\Big)
\nonumber \\ &+ n_{ij}^\top(u_i^{\mathrm{rest}}-u_j^{\mathrm{rest}}).
\end{align}
Using $n_{ji}=-n_{ij}$ and $m_{ji}=m_{ij}$,
\[
n_{ij}^\top(m_{ij}n_{ij})-n_{ij}^\top(m_{ji}n_{ji})=m_{ij}+m_{ij}=2m_{ij}.
\]
Hence
\begin{align}\label{eq:pair-plus-rem}
n_{ij}^\top(u_i-u_j)
&=2\Gamma^1_{ij}\,m_{ij}
\notag \\ & +\Gamma^1_{ij}\,n_{ij}^\top\!\Big(\sum_{q\ne i,j}m_{iq}n_{iq}-\sum_{q\ne j,i}m_{jq}n_{jq}\Big)
\notag \\ &+ n_{ij}^\top(u_i^{\mathrm{rest}}-u_j^{\mathrm{rest}}).
\end{align}
At the pair boundary $s_{ij}=\psi_{ij}$, $m_{ij}=\varpi/\psi_{ij}$; by Cauchy--Schwarz,
\begin{multline}\label{eq:col_bounded}
n_{ij}^\top\!\Big(\sum_{q\ne i,j}m_{iq}n_{iq}-\sum_{q\ne j,i}m_{jq}n_{jq}\Big) \\ \ge\ -\Big\|\sum_{q\ne i,j}m_{iq}n_{iq}-\sum_{q\ne j,i}m_{jq}n_{jq}\Big\|\  \ge\ -\mathfrak{R}_{ij},
\end{multline}
for some finite $\mathfrak{R}_{ij}$. Likewise, we bound
\begin{align}\label{eq:rest_bounded}
\big|\,n_{ij}^\top(u_i^{\mathrm{rest}}-u_j^{\mathrm{rest}})\,\big|\ \le\ \mathfrak{E}_{ij}.
\end{align}
Therefore, from Eqn.~\eqref{eq:pair-plus-rem},
\begin{equation}\label{eq:lb-input}
n_{ij}^\top(u_i-u_j)\ \ge\ 2\Gamma^1_{ij}\,\frac{\varpi}{\psi_{ij}}\ -\ \Gamma^1_{ij} \mathfrak{R}_{ij}\ -\ \mathfrak{E}_{ij}.
\end{equation}
Substituting Eqn.~\eqref{eq:lb-input} and Eqn.~\eqref{eq:Dnom} into Eqn.~\eqref{eq:bnd-scalar}. A sufficient condition is
\[
2\Gamma^1_{ij}\,\frac{\varpi}{\psi_{ij}} - \Gamma^1_{ij} \mathfrak{R}_{ij} - \mathfrak{E}_{ij}\ \ge\ {w_n,}_{ij}+\mathfrak{D}^{\mathrm{nom}}_{ij},
\]
which is equivalent to the explicit gain rule
\begin{equation}\label{eq:Gamma-rule-final}
\Gamma^1_{ij}\;>\;\frac{\big({w_n,}_{ij}+\mathfrak{E}_{ij}+\mathfrak{D}^{\mathrm{nom}}_{ij}\big)\,\psi_{ij}}{\,2\varpi - \mathfrak{R}_{ij}\,\psi_{ij}\,}\,,
\quad 2\varpi>\mathfrak{R}_{ij}\psi_{ij}\quad
\end{equation}
(when only the pair $(i,j)$ is active near the boundary, $\mathfrak{R}_{ij}=0$ so the denominator is $2\varpi$).
Under Eqn.~\eqref{eq:Gamma-rule-final}, the boundary condition Eqn.~\eqref{eq:bnd-scalar} holds, and by the HOCBF invariance result, $\mathcal S_{ij}$ is forward invariant.

\begin{remark}[Leader and obstacle terms]
We define $h_{i0}=\|x_i^1-x_0^1\|-\psi_{i0}$ and $h_{i{\mathfrak{c}}}=\|x_i^1-O_{\mathfrak{c}}\|-R$ with repulsion
$u_i^0 = \Gamma^0_{\mathfrak{c}}  m_{i\mathfrak{c}} n_{i\mathfrak{c}} \,   
$ and $
    u_i^{c_2} =  \Gamma^2_{i0}  m_{i0} n_{i0}.$
Repeating the previous proof with $(j)$ replaced by $(0)$ or $(w)$ yields identical gain rules for $\Gamma_0,\Gamma_2$ (with the substitutions of bounds), ensuring $\|x_i^1-x_0^1\|\ge\psi_{i0}$ and $\|x_i^1-x_{\mathfrak{c}}^1\|\ge\psi_{i\mathfrak{c}}$ are forward invariant.
\end{remark}
\end{proof}

\subsection{Boundedness of the Control Input} \label{app:BoundedInputs}

\noindent We now present the proof for Proposition~\ref{thm:BoundedInputs}
\begin{proof}[Proof of Proposition~\ref{thm:BoundedInputs}]
By forward invariance established in Corollary \ref{corollary:HOCBF}, the separations,
$\|x_i^1(t)-x_j^1(t)\|\ge \psi_{ij}^1>0,
\|x_i^1(t)-x_0^1(t)\|\ge \psi_{i0}^1>0,
\|x_i^1(t)-O_{\mathfrak{c}}\|\ge R>0,$ hold  $\forall t\ge t_0$. Consequently, the repulsion scalars obey uniform bounds
\(
0\le m_{ij}(t)\le M_{ij}:=\frac{\varpi}{\psi_{ij}^1},\quad
0\le m_{i0}(t)\le M_{i0}:=\frac{\varpi}{\psi_{i0}^1},\quad
0\le m_{i{\mathfrak{c}}}(t)\le M_{i{\mathfrak{c}}}:=\Bigl(\frac{R^2-(\mathfrak{L}+\varepsilon)^2}{R^2-\mathfrak{L}^2}\Bigr)^{\!2}.
\)
The repulsive components of Eqn.~\eqref{eq:ProposedControlLaw} are linear combinations of $\{m_{ij}\}$, $\{m_{i0}\}$, and $\{m_{i{\mathfrak{c}}}\}$ with fixed gains (e.g., $\Gamma_1,\Gamma_0,\Gamma_2$). Since the number of neighbors/obstacles per agent is finite under the admissible switching, those sums are uniformly bounded; thus each repulsive term (agent–agent, leader, obstacle) is bounded.

For the non-repulsive (regulation/adaptive) part $u_i^d$, Assumption~\ref{assumption1} ensures all basis functions and parameter estimates are uniformly bounded. By Theorem~\ref{thm:unified}, $r_i$ and $E_{i0}$ are ultimately bounded. Moreover, by Assumption~\ref{assp1} there exists $\underline d>0$ such that
\(
d_i^{\sigma(t)}+b_{i0}^{\sigma(t)}\ \ge\ \underline d \quad\Rightarrow\quad
\big(d_i^{\sigma(t)}+b_{i0}^{\sigma(t)}\big)^{-1}\le \underline d^{-1},
\)
so any factor involving this inverse is uniformly bounded. Hence $\|u_i^d(t)\|$ is uniformly bounded.

Each component $u_i^d$, the agent–agent repulsion, the leader repulsion, and the obstacle repulsion is bounded by a finite constant; summing yields a finite bound for $\|u_i(t)\|$. Stacking over $i$ gives $\|u(t)\|\le u_n$ for some $u_n<\infty$, $\forall t\ge t_0$. Because the degree ($d_i^{\sigma(t)}$) and leader-coupling ($b_{i0}^{\sigma(t)}$) terms remain bounded under the admissible switching signal, the bound is preserved across topology switches.
\end{proof}

\end{document}